\setlist{nolistsep}
\newcolumntype{C}[1]{>{\centering\let\newline\\\arraybackslash\hspace{0pt}}m{#1}}
\DeclareMathAlphabet{\mathpzc}{OT1}{pzc}{m}{it}
\newtheorem{theorem}{Theorem}\setcounter{theorem}{0}
\newtheorem{proposition}{Proposition}
\newtheorem{corollary}{Corollary}
\newtheorem{lemma}{Lemma}
\newtheorem{definition}{Definition}
\newtheorem{example}{Example}
\newtheorem{remark}{Remark}
\newenvironment{proof}[1][]
      {\par\medbreak{\noindent\bfseries Proof#1\quad}}
      {\hfill $\blacksquare$\bigbreak}
\def \co{ {\rm co\,} }
\def \beq {\begin{eqnarray*}}
\def \eeq {\end{eqnarray*}}
\begin{document}
	\title{Non-Bayesian Learning in Misspecified Models\thanks{We are grateful to Michel Benaim and Peter Gr\"unwald for their invaluable help and to  Aislinn Bohren, Daniel Hauser, Jan Knoepfle, Pietro Ortoleva, Mallesh Pai, Maher Said, Larry Samuelson, Ran Spiegler, Jakub Steiner, Philipp Strack, and seminar audiences for insightful discussions. 
	LR thanks the generous hospitality of AMSE and LUISS. SB and MF acknowledge financial support from the French government under the “France 2030” investment plan managed by the French National Research Agency Grant ANR-17-EURE-0020, and by the Excellence Initiative of Aix-Marseille University - A*MIDEX.}}

	\date{Latest version: \today \\ First version: June 26, 2024}
	\author{Sebastian Bervoets\thanks{Aix-Marseille Univ., CNRS, AMSE, sebastian.bervoets(at)univ-amu.fr} \;\& Mathieu Faure\thanks{Aix-Marseille Univ., CNRS, AMSE, mathieu.faure(at)univ-amu.fr} \;\& Ludovic Renou\thanks{ASU, QMUL and  CEPR, lrenou.econ(at)gmail.com }}
\maketitle

	\begin{abstract} Deviations from Bayesian updating are traditionally categorized as biases, errors, or fallacies, thus implying their inherent ``sub-optimality.'' We offer a more nuanced view. In learning problems with misspecified models, we show that some non-Bayesian updating can outperform Bayesian updating.

		\bigskip \noindent \textsc{Keywords}: learning, Bayesian, consistency.

		\bigskip \noindent \textsc{JEL Classification}: C73, D82.
	\end{abstract}

\newpage 
\section{Introduction}\label{sec:intro}
Psychologists, economists, and other social scientists have extensively documented systematic departures from Bayesian updating. These departures include underreaction and overreaction to new information, confirmation bias, base-rate neglect, and numerous other phenomena.\footnote{For comprehensive reviews, see \cite{benjamin2019} and \cite{ortoleva2022}.} While these departures are traditionally characterized as biases, errors, or fallacies -- implying their inherent ``sub-optimality''  --  we offer a more nuanced view. Specifically, we demonstrate that in classical learning problems, some non-Bayesian updating can outperform its Bayesian counterpart.

\medskip 

The benefits of non-Bayesian updating have precedents in the literature, particularly in strategic environments where such departures can induce equilibrium behavior favorable to non-Bayesian agents. In sender-receiver games, for instance, non-Bayesian updating by receivers can elicit more information from the  senders \citep{declippel-zhang-2022, lee2023}.\footnote{As another instance, in a financial trading problem, \cite{massari-22} shows that non-Bayesian traders may drive Bayesian traders out of the market. As our analysis will make it clear, this is because non-Bayesian traders may better predict the underlying fundamentals than their Bayesian counterparts.} However, these results rely fundamentally on strategic interactions. In single-agent learning environments, where strategic responses are absent, we identify a distinct mechanism through which non-Bayesian updating may confer a benefit, which we now explain.

\medskip

We formalize our analysis within the canonical learning framework of \cite{Ber66}. Consider an agent attempting to learn a true data generating process. The agent begins with a set of candidate processes, none of which may exactly match the true process. Then, in each of an infinite number of periods, the agent sequentially observes a new piece of data and updates their beliefs about which process best describes the observations. We model the agent as a \emph{conservative} Bayesian, following the updating rule:
\begin{align*}\tag{\ref{C-Bay}}
(\text{belief at } n+1) = (1-\gamma) \times (\text{belief at } n) + \gamma \times (\text{Bayesian update at } n+1),
\end{align*}
where $\gamma$ may depend on the number of observations, current belief, and current observation. This rule, first introduced by \cite{phillips1966} and \cite{edwards1968}, generalizes Bayesian updating and captures several documented cognitive biases, including under-reaction and over-reaction to new information,  confirmation bias, and base-rate neglect. Throughout, we refer to this updating rule as the (\ref{C-Bay}) rule.\medskip

In  the first part of the analysis, we  assume that the more observations the agent has, the less the agent reacts to new information. A body of evidence supports this assumption; in his meta-analysis, \cite{benjamin2019} reports it as Stylized Fact 2. Our main result -- Theorem \ref{th:limitset} -- states that the agent's \emph{predictive} process -- the expectation of candidate processes under the agent's beliefs -- converges almost surely to the \emph{mixture} process the closest to the true data generating process (in the sense of the Kullback-Leibler divergence). In other words, the theorem states that it is \emph{as if} the agent considers all possible mixtures of the candidate processes (thus, ``convexifying'' his model) and learns as a classical Bayesian. In particular, if the true data generating process is a mixture of the candidate processes,  the agent's predictive process converges to it. This contrasts with the Bayesian agent, whose predictive process converges to the \emph{pure} process the closest to the true data generating process -- see \cite{Ber66}.  (We stress that when the model is correctly specified, the updating rule (\ref{C-Bay}) converges to the true data-generating process, as Bayesian updating does.) \medskip

The intuition is simple. Bayesian agents extract all the information there is from their models.\footnote{\cite{zellner1988} proves that Bayesian updating corresponds to extracting all the information from the prior and the statistical model.} However, when their models are wrong, this typically causes them to make wrong inferences. A more conservative approach, as embodied by the updating rule (\ref{C-Bay}), limits the severity of this problem. Moreover, the more observations an agent has, the more conservative they must be to counteract the increasing confidence of their Bayesian counterpart. Ultimately, this can lead to better learning outcomes. 

\medskip

Mathematically, we rewrite the updating rule (\ref{C-Bay}) as the Robbins-Monro algorithm \citep{robbins-monro-51}: 
\begin{align*}
(\text{belief at\;} n+1) & = (\text{\;belief at\;} n) + \gamma_{n+1} \Bigl[\mathbb{E}_{\text{\;true DGP}}( \text{Bayesian update at } n+1) - (\text{belief at\;}  n)\Bigr]\\
 & + \gamma_{n+1} \Bigl[( \text{Bayesian update at } n+1)  -\mathbb{E}_{\text{\;true DGP}}( \text{\;Bayesian update at } n+1)\Bigr],
\end{align*}
where $\mathbb{E}_{\text{\;true DGP}}$ is the expectation with respect to the true data-generating process.  The first term in brackets is deterministic, while the second term  is random (a martingale difference). When the agent reacts sufficiently slowly to new information, the random term becomes negligible as the number of observations grows arbitrarily large. The deterministic part then drives the belief dynamics. We show that the cross-entropy of the predictive process with respect to the true data generating process acts as a potential for the deterministic dynamics.\footnote{The KL divergence of $p$ from $q$ is $\mathbb{E}_{p}[\log(p/q)]$. Minimizing the KL divergence with respect to $q$ is equivalent to maximizing $\mathbb{E}_p[\log q]$. We call the latter the cross-entropy of $q$ with respect to $p$.}   We then show that all critical points of the deterministic flow correspond to either the unconstrained maximizers of the cross-entropy or  constrained ones. Finally, we prove that all constrained maximizers repel the dynamics, hence the dynamics converges to the unconstrained maximizers. This last step is the most delicate. The constrained maximizers are neither isolated points nor interior, which precludes us from appealing  to classical results in stochastic approximation theory such as \cite{pemantle1990},  \cite{brandiere1998}, or \cite{brandiere-duflo-1996}. We develop new arguments to solve that issue. \medskip

In the second part of the analysis, we consider alternative assumptions on the updating weights and show  that our main insight, that is, that some non-Bayesian updating can outperform Bayesian updating, continues to hold. For instance, if the updating weights are constant, then the belief dynamics does not converge, but its occupation measure does. As the weight $\gamma$ becomes arbitrarily small, the occupation measure concentrates on the mixture process the closest to the true data generating process -- Theorem \ref{th:constant-step}. \medskip

We illustrate our main result with a simple example, inspired by the work of \cite{spiegler2016} on causal models. Suppose that each observation is a vector of variables $(x_1,x_2,x_3)$, representing diet, sleep, and health, respectively. The true causal model is such that diet and sleep function are \emph{interdependent} determinants of health, with diet also influencing sleep patterns. However, the agent posits another causal model and assumes that diet and sleep are \emph{independent} causes of health. Figure \ref{fig:intro-DAG} represents the causal models as two directed acyclic graphs. 
Thus, the agent considers distributions of the form  $p(x^3|x^1,x^2)p(x^2)p(x^1)$, which all differ from the true distribution since $x^1$ and $x^2$ are correlated in the true model. Since any correlated distribution is the convex combination of independent distributions, our main result implies that the non-Bayesian agent successfully converges to the correct causal model, while the Bayesian agent fails to do so. This is the paper's main insight.\medskip

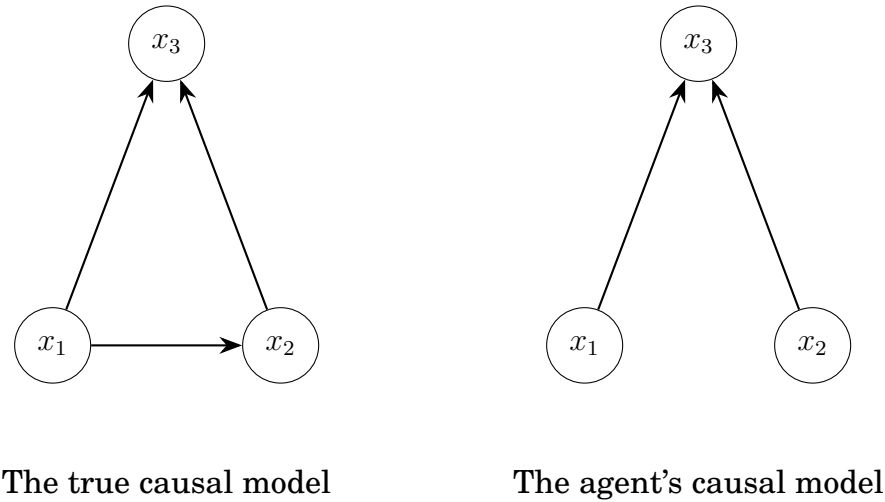
\begin{figure}[h!]
    \centering
    
    \tikzset{
        node style/.style={circle, draw, minimum size=1cm},
        arrow style/.style={-{Stealth[length=3mm]}, thick}
    }
    
    \begin{tikzpicture}[node distance=2cm]
        \begin{scope}[local bounding box=left]
            \node[node style] (x3) at (0,4) {$x_3$};
            \node[node style] (x1) at (-1.5,0) {$x_1$};
            \node[node style] (x2) at (1.5,0) {$x_2$};
            
            \draw[arrow style] (x1) -- (x2);
            \draw[arrow style] (x1) -- (x3);
            \draw[arrow style] (x2) -- (x3);
            
            \node[below=0.5cm] at (0,-1) {The true causal model};
        \end{scope}
        
        \begin{scope}[xshift=7cm, local bounding box=right]
            \node[node style] (y3) at (0,4) {$x_3$};
            \node[node style] (y1) at (-1.5,0) {$x_1$};
            \node[node style] (y2) at (1.5,0) {$x_2$};
            
            \draw[arrow style] (y1) -- (y3);
            \draw[arrow style] (y2) -- (y3);
            
            \node[below=0.5cm] at (0,-1) {The agent's causal model};
        \end{scope}
    \end{tikzpicture}
    \caption{The causal models} \label{fig:intro-DAG}
\end{figure}
\medskip


\medskip

Our analysis also suggests an advantage to deliberately adopt a misspecified model. In Bayesian statistics, computing posterior distributions is often challenging.  Thus, even when the statistical model is correctly specified, learning the true data-generating process may be computationally intractable -- see the large literature on approximate Bayesian computation. Our analysis suggests an alternative route. Instead of working directly with the distributions of interest, the statistician can restrict attention to families of distributions whose mixtures are rich enough to approximate the distributions of interest, while also yielding posteriors that are easier to compute. Although such a restricted model may be misspecified, applying the updating rule (\ref{C-Bay}) still guarantees convergence to the true data-generating process, without requiring intractable posterior computations.\footnote{As we show later, our main result -- Theorem  \ref{th:limitset} -- also holds when the Bayesian posteriors in rule (\ref{C-Bay}) are replaced with approximations.} For example, restricting attention to Gaussian distributions suffices, since mixtures of Gaussians approximate any distribution arbitrarily well (with respect to the Wasserstein metric).

\medskip 

We close the introduction with a brief discussion of the related literature (Section \ref{sec:lit} contains a detailed discussion).  The two most closely related papers are \cite{Ber66} and \cite{epstein2010non}. \cite{Ber66} studies the long-run properties of Bayesian updating when the model is misspecified and shows that the beliefs converge to the process, which minimizes the KL divergence from the true process.\footnote{ See also  \cite{Fre63}, \cite{Ber70}, \cite{DiaFre86}, \cite{DiaFre99}, \cite{shalizi2009}.} \cite{epstein2010non} study the long-run properties of the updating rule (\ref{C-Bay}) when the model is correctly specified. They show that if the agent under-reacts to new information, the agent eventually learns the true process.  We differ from \cite{Ber66} by assuming non-Bayesian updating and from \cite{epstein2010non} by assuming a misspecified model. Some recent papers, e.g., \cite{bohren2021learning, bohren2024}, \cite{heidhues2021} and \cite{frick2023belief}, provide a link between Bayesian updating in misspecified problems and non-Bayesian updating.  More specifically, it is possible to construct another set of candidate processes such that Bayesian updating in that modified model induces the same beliefs than the  updating rule (\ref{C-Bay}) in the original model. To the best of our understanding,  none of these recent results are applicable to our problem -- see Section \ref{sec:main} for a detailed discussion.

\section{The Problem}\label{sec:model}
At each period $n \in \mathbb{N}$, an agent is observing $x_n \in X$, independently and identically drawn from an \emph{unknown} data-generating process $p^* \in \Delta(X)$. The agent's model is a set $\mathcal{P}:=\{p_{\theta} \in \Delta(X): \theta \in \Theta\}$ of data-generating processes, which may not include the true data-generating process $p^*$ -- in that case, the model is misspecified. Throughout, we assume that the sets $X$ and $\Theta$ are finite and that the support of $p^*$ and each $p_{\theta}$ is $X$.\footnote{None of our results would change if we were to assume that the support of $p^*$ is included in the support of $p_{\theta}$ for all $\theta \in \Theta$. The only modification would be to substitute $X$ with the support of $p^*$ in the analysis.} The latter assumption guarantees that the agent is never surprised, that is, the agent cannot observe a realization $x$, believed  impossible under all data-generating processes in $\mathcal{P}$.\medskip

We can think of $x$ as financial returns in investment problems, losses in insurance problems, or demands in pricing problems. For instance, in a canonical portfolio problem, $x$ corresponds to realized asset returns and $\mathcal{P}$ their postulated distributions, e.g., multivariate Gaussian distributions with unknown mean and variance-covariance matrix (the parameter $\theta$). As another instance, in canonical insurance problems, $x$ corresponds to losses or, equivalently, final wealths. More generally, we view $x$ as payoff-relevant consequences, $p^*$ as their true distribution, and $\mathcal{P}$ as the collection of  ``parameterized'' distributions the agent postulates. An alternative is to view $x$ as a (payoff-irrelevant) signal about the (payoff-relevant) state $\theta$, $p^*$ as the signal distribution when the true state is $\theta^* \in \Theta$, and $(p_{\theta})_{\theta}$ as the perceived conditional distributions (an experiment). Both views are common in Economics and Statistics. While our formal results are independent of either perspective, their interpretations differ. Under the first view, successful learning entails the convergence of the \emph{predictive process} to the true distribution $p^*$, whereas under the alternative view, learning success is characterized by the identification of the true state $\theta^*$. We return to these  alternative interpretations later on.\medskip

The agent's prior is $q_{0} \in \mathbf{S} := \Delta(\Theta)$. Without loss of generality, we assume that the support of $q_0$ is $\Theta$, i.e., $q_0 \in \mathbf{S}^*:=\mathrm{int} \left(\mathbf{S}\right)$, the interior of $\mathbf{S}$. The agent updates his belief from $q_n$ to $q_{n+1}$ upon observing the signal $x_{n+1}$ according to the rule:  
\begin{align*}\label{C-Bay}
q_{n+1} = (1-\gamma_{n+1})q_n + \gamma_{n+1} B (q_n,x_{n+1}), \tag{C-Bay}
\end{align*}
where $B(q_n,x_{n+1}) \in \Delta(\Theta)$ is the Bayesian posterior of $q_n$ given the new observation $x_{n+1}$  
and $\gamma_{n+1}$ the (possibly random and positive) weight on the Bayesian posterior.\footnote{The posterior probability of $\theta$ is $B_{\theta}(q,x) = \frac{q_{\theta} p_{\theta}(x)}{\sum_{\theta'} q_{\theta'} p_{\theta'}(x)}$, whenever the denominator is positive.} The updating rule  (\ref{C-Bay}) was first discussed in \cite{phillips1966} and \cite{edwards1968}, generalizes the Bayesian rule ($\gamma_{n+1}=1$) and can accommodate a number of well-documented  biases such as underreaction ($\gamma_{n+1} < 1$) and overreaction ($\gamma_{n+1} >1  $) to new information, confirmatory biases, and a few others. We refer the reader to \cite{epstein2006} and  \cite{kovach2021} for more extensive discussions and axiomatic foundations. \cite{benjamin2019} documents two additional empirical findings, which the rule can also accommodate. First, the more observations an agent has, the more the agent under-reacts. This finding -- Stylized Fact 2 in \cite{benjamin2019} -- is consistent with decreasing updating weights $(\gamma_n)_n$. Second, individuals' beliefs after sequentially observing $x_1$ to $x_n$ differ from the beliefs they would have formed, had they received the same information simultaneously as a single observation $(x_1,\dots,x_n)$. This finding -- Stylized Fact 8 in \cite{benjamin2019} --  is also consistent with the updating rule (\ref{C-Bay}).\footnote{\cite{cripps2018} characterizes the updating rules, which produce the same update after observing either $x_1$ to $x_n$ sequentially, or $(x_1,\dots,x_n)$ simultaneously.} Throughout, we refer to the process $(\mathbb{E}_{q_n}[p_{\boldsymbol{\theta}}])_{n \in \mathbb{N}}$ under the updating rule (\ref{C-Bay}) as the agent's \emph{predictive} process.

We conclude this section with a brief review of \cite{Ber66} and \cite{epstein2010non}.   \cite{Ber66} studies the convergence of the \emph{Bayesian} posteriors (i.e., $\gamma_n=1, \;\forall n$) and proves the almost-sure convergence to the maximizers of the (negative of the) cross entropy, that is, $  \arg\max_{\theta \in \Theta}\sum_{x}p^*(x)\log p_{\theta}(x)$. \cite{epstein2010non} study whether the updating rule (\ref{C-Bay}) induces the agent to eventually learn the true data-generating process, when the model is \emph{correctly} specified. They show that if the agent is always underreacting to new information (i.e., $\gamma_{n} \leq 1, \forall n$), then the agent eventually learns the true data-generating process, when the weight $\gamma_{n}$ is predictable (measurable with respect to the information up to period $n-1$). With the help of an example, they show that this result is not true when the weight $\gamma_{n}$ is only adapted (measurable with respect to the information up to period $n$).


\section{Underreaction and Learning}\label{sec:main}
 
This section presents our main result.  Motivated by Stylized Fact 2 in \cite{benjamin2019},  we assume that the more observations the agent has, the more the agent underreacts to new information.  Yet, we do not want the underreaction to be so severe that the agent stops reacting after finitely many periods. \medskip

Formally, we assume that $(\gamma_n)_n$ is a deterministic sequence of positive real numbers such that 
    \begin{equation} \label{eq:decreasing_step}
     (i): \sum_{n=0}^{+\infty} \gamma_n = + \infty, \text{  and  } \, (ii): \sum_{n=0}^{+\infty}e^{-c/\gamma_n} < +\infty, \; \, \text{  for all} \; c>0.
     \end{equation} 
Examples of series satisfying conditions (i) and (ii) include $(1/n)_{n \in \mathbb{N}}$, $(1/\sqrt{n})_{n \in \mathbb{N}}$, and $(1/\log^2(n))_{n\in \mathbb{N}}$. More generally, the series $(1/n^{\alpha})_n$  with $\alpha \in (0,1]$ and $(1/\log^{\alpha}(n))_n$ with $\alpha >1$ all satisfy the condition.\footnote{A more familiar, but stronger, assumption is $\sum_{n=0}^{+\infty} \gamma_n = + \infty$ and $\sum_{n=0}^{+\infty}\gamma_n^2 < +\infty$.}  Condition (ii) guarantees that the series $(\gamma_n)_n$  converges to zero, so that the more observations the agent has, the less the agent reacts.  Condition (ii) further says $(\gamma_n)_n$  converges to zero faster than the series $(1/\log(n))_n$ does.\footnote{This is not a demanding condition as the series $(1/\log(n))_n$ goes relatively slowly to zero.} Condition (i) guarantees that the agent does not stop reacting after finitely many periods. \medskip

The analysis rests on the observation that the updating rule (\ref{C-Bay}) can be rewritten as:
\begin{align*}
q_{n+1}-q_n =   \gamma_{n+1}\left(H(q_n)+ U_{n+1}\right),
\end{align*}
where  $H: \mathbf{S} \rightarrow \mathbb{R}_{0}^{|\Theta|}$ is defined by
\begin{equation}\label{eq:H}
H_{\theta}(q):=q_{\theta}\Biggl(\underbrace{\sum_{x}p^*(x)\frac{p_{\theta}(x)}{\sum_{\theta'}q_{\theta'}p_{\theta'}(x)}}_{:=f_{\theta}(q)} - 1\Biggr) = \underbrace{\mathbb{E}_{p^*}\left[B_{\theta}(q,\bold{x})\right]}_{\substack{\text{Expectation of Bayesian} \\ \text{posterior of $\theta$}}} - q_{\theta}, \; \forall \theta,
\end{equation}

and $(U_n)_n$ a bounded martingale difference: $U_{n+1} :=B(q_n,x_{n+1}) - \mathbb{E}_{p^*}[B(q_n,\bold{x}_{n+1})]$.\medskip 

Before proceeding, we remark that the analysis below  continues to apply if the agent updates his belief from $q_{n}$ to  $q_{n+1} :=(1-\gamma_{n+1})q_n + \gamma_{n+1}  b(q_n,x_{n+1})$, where $b(q_n,x_{n+1})$ is a \emph{noisy estimate} of $B(q_n,x_{n+1})$ such that $\mathbb{E}[b(q,x)|B(q,x)] = B(q,x)$ for all $(q,x)$. In words, the updating rule tolerates some errors.      \medskip 

This recursive formulation is a Robbins-Monro algorithm \citep{robbins-monro-51}, a well-known algorithm  in the literature on stochastic approximation theory. A popular method for analyzing such algorithms is the ordinary differential equation (ODE) method, which approximates the sequence $(q_n)_n$ with the solution to the associated (deterministic) ODE:
 \[\dot{q}(t) =H(q(t)).\]
When applicable (which it is in our case), this method reduces the analysis of the Robbins-Monro algorithm (\ref{C-Bay}) to the study of the associated ODE. We prove two central results about the ODE. (All proofs are in the Appendix.) The first result -- Lemma \ref{lem:V-Lyapunov} --  states that the function $q \mapsto V(q)$ is a Lyapunov function for the ODE, where $V(q)$ is the (negative of the) cross-entropy of the mixture $\sum_{\theta}q_{\theta}p_{\theta}$ with respect to $p^*$, that is, 
\begin{align}\label{eq:V}
V(q):= \sum_x p^*(x) \log\left(\sum_{\theta}q_{\theta}p_{\theta}(x) \right). 
\end{align} 
\medskip 

The second result -- Lemma \ref{lem:decomposition} -- states that the set $E:= \left\{q \in \mathbf{S}: \; \, H(q) = 0 \right\}$ of zeroes of $H$ can be decomposed into a finite union of  disjoint, compact and \emph{convex} components $C_{k}$, $k=1,\dots,K$, with the cross-entropy $V$ constant on each of the components.\footnote{Note that the set of degenerated beliefs $\{\delta_{\{\theta\}}: \theta \in \Theta\}$ belong to $E$, so that $E$ is non-empty.} The decomposition is unique, and $V$ attains its maximal value on a unique component, which we denote $C_{k^*}$: 
 \[V(C_{k^*}) > V(q), \text{ for all } q \in \mathbf{S} \setminus C_{k^*}.\]

To get some intuition for Lemma \ref{lem:decomposition}, let $C_{\widehat{\Theta}}$ be the maximizers of the cross-entropy $V$ over the face $F_{\widehat{\Theta}}$, where $F_{\widehat{\Theta}}:=\{q \in \mathbf{S}: q_{\theta} > 0 \text{\;iff\;} \theta \in \widehat{\Theta}\}$. If non-empty, $C_{\widehat{\Theta}}$ is convex and  induces a unique $p \in \Delta(X)$, so that the cross-entropy is constant on $C_{\widehat{\Theta}}$.   Since there are finitely many faces, there are finitely many non-empty sets of maximizers. (At least one is non-empty, since $V$ is continuous and $\mathbf{S}$ non-empty and compact.) Moreover, it is easy to see that each maximizer is a zero of $H$ and, conversely. Intuitively, $H(q)=0$ are the Kuhn-Tucker conditions for the maximization of the concave function $V$ with respect to $q \in F_{\widehat{\Theta}}$.  We then construct each $C_k$  as a carefully chosen union of maximizers. Example \ref{ex:convex-components-maximal-number} illustrates the construction. Note that we can have as many components as we have faces.

\begin{example}\label{ex:convex-components-maximal-number} Assume that  $\Theta=\{\theta_1,\theta_2,\theta_3\}$, $ X=\{x_1,x_2,x_3\}$, and 
    \[p_{\theta_1} = \frac{1}{5}(1,2,2), \; p_{\theta_2} = \frac{1}{5}(2,1,2), \; p_{\theta_3}= \frac{1}{5}(2,2,1), \, p^*= \frac{1}{3}(1,1,1).\]
The set $E$ is:
\[\left\{\delta_{\{\theta_1\}}\right\} \cup \left\{\delta_{\{\theta_2\}}\right\} \cup \left\{\delta_{\{\theta_3\}}\right\} \cup \left\{\frac{1}{2}(1,1,0)\right\} \cup  \left\{\frac{1}{2}(1,0,1)\right\} \cup  \left\{\frac{1}{2}(0,1,1)\right\}\cup \left\{\frac{1}{3}(1,1,1) \right\}.\]
We have $V(\delta_{\{\theta_1\}}) = V(\delta_{\{\theta_2\}}) = V(\delta_{\{\theta_3\}}) = \frac{1}{3} \log \frac{1}{5} + \frac{2}{3} \log \frac{2}{5}$,
\[V\left(\frac{1}{2}(1,1,0)\right) = V\left(\frac{1}{2}(1,0,1)\right)= V\left(\frac{1}{2}(0,1,1)\right)= \frac{2}{3} \log \frac{3}{10} + \frac{1}{3} \log \frac{4}{10},\] and $V\left(\frac{1}{3}(1,1,1)\right) = \log \frac{1}{3}$. Here, $C_{k^*} = \left\{\frac{1}{3}(1,1,1) \right\}$, the interior point.

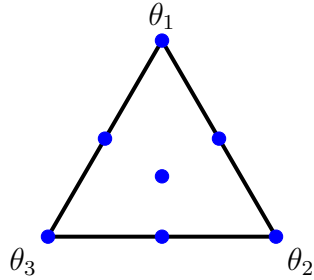
\begin{figure}[h!]
\centering
 \usetikzlibrary{arrows,shapes.geometric,decorations.markings, calc, fadings, decorations.pathreplacing, patterns, decorations.pathmorphing, positioning}
\newcommand{\midarrow}{\tikz \draw[-triangle 90] (0,0) -- +(.2a5,0);}
\begin{tikzpicture}
    \draw [black,line width=1.5pt] (0,0) -- (1.5,2.6) -- (3,0) -- cycle;
    \filldraw[blue] (0,0) circle (2.5 pt);
    \filldraw[blue] (3,0) circle (2.5 pt);
    \filldraw[blue] (1.5,2.6) circle (2.5 pt);
    \filldraw[blue] (0.75, 1.3) circle (2.5 pt);
    \filldraw[blue] (1.5, 0) circle (2.5 pt);
    \filldraw[blue] (2.25, 1.3) circle (2.5 pt);
    \filldraw[blue] (1.5, 0.8) circle (2.5 pt);
    \node [below left] at (0,0) {$\theta_3$};
    \node [above] at (1.5,2.6) {$\theta_1$};
    \node [below right] at (3,0) {$\theta_2$};
\end{tikzpicture}
\caption{The components in Example \ref{ex:convex-components-maximal-number}}
\end{figure}
\end{example}
\medskip

To sum up, the cross-entropy $V$ is a Lyapunov function for the flow of the dynamics and maximized in a unique component $C_{k^*}$ of zeroes. 
This suggests that the component $C_{k^*}$ is globally stable.  While our main result -- Theorem \ref{th:limitset} -- states that this is indeed the case, the result is far from immediate. As we shall see, the main technical challenge is that the components $C_k$ are not always \emph{singletons} (isolated points) and, moreover, often exist as \emph{continuums} on the boundary. 

\begin{theorem}\label{th:limitset}
For all $q_0 \in \mathrm{int} \left(\mathbf{S}\right)$,  the (random) limit set $\mathcal{L}((q_n)_n)$ of the process $(q_n)_n$ is contained in  $C_{k^*}$, with probability one. In words, the beliefs converge to maximizers of the cross-entropy $V$.   
\end{theorem}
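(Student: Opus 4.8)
The plan is to run the ODE method of stochastic approximation, use the Lyapunov structure of Lemmas~\ref{lem:V-Lyapunov}--\ref{lem:decomposition} to trap the limit set inside one component $C_k$ of $E$, and then show separately that every component other than $C_{k^*}$ is repelling. First I would collect the routine preliminaries: the iterates stay in the compact set $\mathbf{S}$ and, since $q_0\in\mathrm{int}(\mathbf{S})$ and every $p_\theta$ has full support, satisfy $q_{n,\theta}>0$ for all $n,\theta$; the field $H$ is smooth on a neighbourhood of $\mathbf{S}$; and $(U_n)_n$ is a bounded martingale difference. Conditions (i)--(ii) on $(\gamma_n)_n$ are exactly those under which a Robbins--Monro recursion with bounded noise defines, almost surely, an asymptotic pseudotrajectory of the flow $\Phi$ of $\dot q=H(q)$ --- condition (ii) stands in for the usual $\sum_n\gamma_n^2<\infty$ and is used to control the noise $\sum_j\gamma_{j+1}U_{j+1}$ via Hoeffding-type estimates. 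The ODE method then yields that $\mathcal{L}((q_n)_n)$ is, with probability one, a nonempty compact connected internally chain transitive set for $\Phi$.

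Second, I would reduce to a single component. By Lemma~\ref{lem:V-Lyapunov}, $V$ is a strict Lyapunov function for the pair $(\Phi,E)$ (that is, $\dot V\ge 0$ along orbits, with equality exactly on $E$), and $V(E)=\{V(C_1),\dots,V(C_K)\}$ is finite, hence of empty interior in $\mathbb{R}$. A standard property of internally chain transitive sets then gives that any such set $L$ satisfies $L\subseteq E$ and $V|_L$ is constant. Since $L$ is connected while, by Lemma~\ref{lem:decomposition}, $E$ is the disjoint union of the compact convex sets $C_1,\dots,C_K$ (finitely many disjoint compacts, hence each clopen in $E$), the set $L$ lies in a single component. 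Thus, almost surely, $\mathcal{L}((q_n)_n)\subseteq C_k$ for some (random) $k$, equivalently $q_n\to C_k$. It remains to rule out $k\ne k^*$, and this is precisely where the classical non-convergence tools of \cite{pemantle1990}, \cite{brandiere1998} and \cite{brandiere-duflo-1996} cannot be invoked, since a component $C_k$ is in general neither a point nor interior to $\mathbf{S}$.

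Fix $k\ne k^*$. By Lemma~\ref{lem:decomposition} the mixture $m^{(k)}:=\sum_\theta q_\theta p_\theta$ is the same for every $q\in C_k$, and $V(C_k)<V(C_{k^*})$ means $m^{(k)}$ is not the (unique, by strict concavity) maximiser of $m\mapsto\sum_x p^*(x)\log m(x)$ over $\mathrm{conv}\{p_\theta:\theta\in\Theta\}$; the Kuhn--Tucker characterisation of $E$ then forces a coordinate $\theta_k$ with $f_{\theta_k}(q)=\sum_x p^*(x)\,p_{\theta_k}(x)/m^{(k)}(x)>1$ for all $q\in C_k$ (and, consistently, $q_{\theta_k}=0$ on $C_k$). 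Pick $\eta>0$ and an open neighbourhood $N\supseteq C_k$ in $\mathbf{S}$ with $f_{\theta_k}\ge 1+\eta$ on $N$. The key point is that the $\theta_k$-coordinate obeys the \emph{exact} multiplicative recursion
\[
q_{n+1,\theta_k}=q_{n,\theta_k}\bigl(1-\gamma_{n+1}+\gamma_{n+1}\xi_{n+1}\bigr),\qquad \xi_{n+1}:=\frac{p_{\theta_k}(x_{n+1})}{\sum_{\theta'}q_{n,\theta'}p_{\theta'}(x_{n+1})}\in[0,C_0],
\]
where $C_0:=(\min_{\theta,x}p_\theta(x))^{-1}<\infty$ and $\mathbb{E}[\xi_{n+1}\mid\mathcal{F}_n]=f_{\theta_k}(q_n)$.

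On $\{q_n\to C_k\}$ there is a (random) $m$ with $q_n\in N$ for all $n\ge m$, so it suffices to show that each event $A_m:=\{q_n\to C_k\}\cap\{q_n\in N\ \forall n\ge m\}$ is null. Taking logarithms in the recursion and using $\log(1+u)\ge u-u^2$ (valid once $\gamma_{j+1}|\xi_{j+1}-1|\le\tfrac12$, which holds for all large $j$), one gets on $A_m$, for every $n>m$,
\[
\log q_{n,\theta_k}-\log q_{m,\theta_k}\ \ge\ \sum_{j=m}^{n-1}\gamma_{j+1}\bigl(f_{\theta_k}(q_j)-1\bigr)\;+\;M_n\;-\;C_0^2\sum_{j=m}^{n-1}\gamma_{j+1}^2,
\]
with $M_n:=\sum_{j=m}^{n-1}\gamma_{j+1}\bigl(\xi_{j+1}-f_{\theta_k}(q_j)\bigr)$ a martingale whose increments are bounded by $2C_0\gamma_{j+1}$. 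On $A_m$ the drift is at least $\eta\sum_{j=m}^{n-1}\gamma_{j+1}$, which diverges by (i); meanwhile $M_n=o\bigl(\sum_{j\le n}\gamma_j\bigr)$ and $\sum_{j\le n}\gamma_j^2=o\bigl(\sum_{j\le n}\gamma_j\bigr)$ almost surely (the first by the martingale strong law, since the conditional second moments are $O(\gamma_j^2)$ and $\sum_j\gamma_j^2/(\sum_{i\le j}\gamma_i)^2<\infty$; the second because $\gamma_j\to 0$ and $\sum_j\gamma_j=\infty$). Hence on $A_m$, $\log q_{n,\theta_k}\to+\infty$, contradicting $q_{n,\theta_k}\le 1$; so $\mathbb{P}(A_m)=0$ and therefore $\mathbb{P}(q_n\to C_k)=0$. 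Combined with the second step this gives $\mathcal{L}((q_n)_n)\subseteq C_{k^*}$ almost surely, and since $C_{k^*}$ consists of the maximisers of $V$ over $\mathbf{S}$ (Lemma~\ref{lem:decomposition}) the theorem follows. I expect this last, repulsion step to be the main obstacle: it hinges on isolating a \emph{single} escape coordinate $\theta_k$ uniform over $C_k$ (which relies on the mixture description of the components from Lemma~\ref{lem:decomposition}) and on exploiting the special multiplicative form of that coordinate's recursion to turn the instability of the ODE into an almost-sure blow-up, thereby bypassing the hyperbolic-saddle machinery that is unavailable here.
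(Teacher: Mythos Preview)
Your proposal is correct and follows the paper's overall architecture exactly: rewrite (\ref{C-Bay}) as Robbins--Monro, invoke the asymptotic-pseudotrajectory property under (i)--(ii), use Lemmas~\ref{lem:V-Lyapunov}--\ref{lem:decomposition} to trap $\mathcal{L}((q_n)_n)$ in a single component $C_k$, and then argue that each $C_k$ with $k\ne k^*$ repels the stochastic iterates via the multiplicative structure of the coordinate recursions.

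The only substantive difference lies in the repulsion step. You isolate a \emph{single} coordinate $\theta_k$ with $f_{\theta_k}>1$ uniformly on $C_k$, relying on the fact (established in Lemma~\ref{lem:decomposition}) that the mixture $m^{(k)}=L(q)$, and hence each $f_\theta$, is constant on $C_k$; since any $q\in C_k$ fails the global Kuhn--Tucker conditions, some $\theta_k$ with $f_{\theta_k}(q)>1$ must exist, and constancy makes it uniform. The paper instead works with a \emph{weighted combination} $z_n=\sum_{\theta\in\Theta_3}\mu_\theta\log q_{n,\theta}$, showing $\sum_{\theta\in\Theta_3}\mu_\theta f_\theta(\hat q)>1$ via the directional derivative of $V$ along $[\hat q,q^*]$ rather than via the constancy of $f_\theta$. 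Your route is arguably cleaner because it exploits more of the structure already proved in Lemma~\ref{lem:decomposition}; the paper's route is slightly more self-contained at the level of Lemma~\ref{lem:C_k}. A second, minor difference: you close with an almost-sure argument ($M_n=o(\sum_j\gamma_j)$ by the martingale SLLN, using $\sum_j\gamma_j^2/(\sum_{i\le j}\gamma_i)^2<\infty$, which does hold since $\sum_j\gamma_j/S_j^2\le 1/S_0$ by telescoping), whereas the paper takes conditional expectations and reaches a contradiction through $\mathbb{E}(z_n)\to+\infty$. Both conclusions are valid and yield the same result.
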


Theorem \ref{th:limitset}  states that every $\omega$-limit point $q$ induces a unique distribution $\sum_{\theta}q_{\theta}p_{\theta} \in \co \mathcal{P}$, which attains the maximum:
\begin{align*}
\max_{p \in \co\mathcal{P}} \sum_{x}p^*(x)\log p(x).
\end{align*}
It then follows immediately that: 

\begin{corollary}
If $p^* \in  \co\mathcal{P}$, then the agent's predictive process converges to the true data-generating process $p^*$.
\end{corollary}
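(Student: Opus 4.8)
The plan is to deduce the corollary directly from Theorem \ref{th:limitset} once the maximal component $C_{k^*}$ is identified explicitly. Write $\Phi(q) := \sum_{\theta} q_{\theta} p_{\theta} = \mathbb{E}_{q}[p_{\boldsymbol{\theta}}]$ for the predictive distribution attached to a belief $q \in \mathbf{S}$, so that the agent's predictive process is exactly $(\Phi(q_n))_n$. First I would observe that $V(q) = \sum_x p^*(x)\log \Phi(q)(x)$ equals $-D_{\mathrm{KL}}(p^* \,\|\, \Phi(q))$ plus the constant $\sum_x p^*(x)\log p^*(x)$; since $p^*$ has full support, Gibbs' inequality (equivalently, Jensen's inequality applied to the strictly concave $\log$) gives $V(q) \le \sum_x p^*(x)\log p^*(x)$ for every $q \in \mathbf{S}$, with equality if and only if $\Phi(q) = p^*$.

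Next I would use the hypothesis $p^* \in \co\mathcal{P}$. Because $\Theta$ is finite, $\co\mathcal{P} = \{\Phi(q) : q \in \mathbf{S}\}$, so there is some $q$ with $\Phi(q) = p^*$; for that $q$, $V(q)$ attains the global bound $\sum_x p^*(x)\log p^*(x)$ from the previous step, so that bound is the maximum of $V$ over $\mathbf{S}$ and it is attained. By Lemma \ref{lem:decomposition}, the maximum of $V$ is attained on the single component $C_{k^*}$ and $V$ is constant on it; combining this with the equality case of Gibbs' inequality identifies $C_{k^*} = \{q \in \mathbf{S} : \Phi(q) = p^*\} = \Phi^{-1}(\{p^*\})$.

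Then Theorem \ref{th:limitset} applies verbatim: with probability one the limit set $\mathcal{L}((q_n)_n)$ is contained in $C_{k^*}$, hence every accumulation point $q$ of $(q_n)_n$ satisfies $\Phi(q) = p^*$. It remains to upgrade this to convergence of $(\Phi(q_n))_n$ itself, which I would do by a routine compactness argument: if $\Phi(q_n) \not\to p^*$ there are $\eps > 0$ and a subsequence with $\|\Phi(q_{n_k}) - p^*\| \ge \eps$; by compactness of $\mathbf{S}$ a sub-subsequence $q_{n_{k_j}}$ converges to some $q^\infty \in \mathcal{L}((q_n)_n) \subseteq C_{k^*}$, and continuity of $\Phi$ forces $\Phi(q_{n_{k_j}}) \to \Phi(q^\infty) = p^*$, a contradiction. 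I do not expect any genuine obstacle here: all the substance is in Theorem \ref{th:limitset}, and the two remaining ingredients — the classical identification of the KL-minimizer over $\co\mathcal{P}$ with $p^*$, and the passage from ``every accumulation point equals $p^*$'' to ``the sequence converges to $p^*$'' — are standard.
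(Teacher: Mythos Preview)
Your proposal is correct and follows essentially the same route as the paper, which simply notes that every $\omega$-limit point $q$ induces the unique distribution in $\co\mathcal{P}$ maximizing $\sum_x p^*(x)\log p(x)$, and that this maximizer is $p^*$ itself when $p^*\in\co\mathcal{P}$. You have merely spelled out the two steps the paper leaves implicit --- the Gibbs-inequality identification of $C_{k^*}$ with $\Phi^{-1}(\{p^*\})$ and the compactness upgrade from ``all accumulation points'' to ``convergence'' --- both of which are routine.
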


Thus, if $p^* \in  \co\mathcal{P}$, the learning outcome is the same as the one of a Bayesian learner with the richer set of (non-parametric) models $\co \mathcal{P}$, assuming that the agent's prior assigns a positive probability to an open neighborhood of the cross-entropy maximizer -- see \cite{Ber66}.  This observation is reminiscent of  \cite{bohren2024}, who demonstrate an equivalence between non-Bayesian updating in a correctly specified model and Bayesian updating in a misspecified model. In particular, if the agent's model is $\{\pi_{\theta}^{q_n, \gamma_n}: \theta \in \Theta\}$, with
\begin{align*}
\pi_{\theta}^{q_n,\gamma_n}(x):= (1-\gamma_n)\mathbb{E}_{q_n}[ p_{\boldsymbol{\theta}}(x)] + \gamma_np_{\theta}(x)
\end{align*}
for all $x$, then the Bayesian posteriors of the modified model $\{\pi_{\theta}^{q_n, \gamma_n}: \theta \in \Theta\}$ coincide with the non-Bayesian posteriors (\ref{C-Bay}) of the original model $\{p_{\theta}: \theta \in \Theta\}$, when the agent's prior belief is $q_n$ at period $n$. We stress that the ``\textit{as-if}'' model depends on the current belief $q_n$ and weight $\gamma_n$.  The dependence on current beliefs is a common feature of social learning models, see e.g., \cite{esponda2016berk}, \cite{bohren2021learning}, and \cite{frick2023belief}. However, as far as we know, none of the results developed in these (and other related) papers are directly applicable to our problem. Indeed, the main focus of this literature is on the stability of degenerated beliefs and none of the few instances, where mixed beliefs are considered, are directly applicable in our setting. See Section \ref{sec:lit} for a more detailed discussion.

\medskip 

Conservative updating thus protects the agent against misspecifications and leads to \emph{better predictions}.  Predicting better, even perfectly,  does not imply that the agent learns (identifies) the true data-generating process. In fact, the agent \emph{cannot} learn the true data-generating process when his model is misspecified! Yet, when we interpret $\Theta$ as (payoff-relevant) states and $X$ as signals, an implication of Theorem \ref{th:limitset} is that the agent's belief about the states may also come closer to the true state than Bayesian updating would. For an illustration, suppose that there are two states $\theta^*$ and $\theta$ and two signals $\ell$ and $h$. The true conditional distributions (experiment) are $p^*_{\theta^*}(h)=1-p^*_{\theta}(h)=2/3$, while the agent's model is $p_{\theta^*}(h)=1-p_{\theta}(h)=1/10$. Assume that the true state is $\theta^*$. Since the agent wrongly interprets the signal $h$ as evidence of $\theta$ and the signal $h$ is the most likely when the state is $\theta^*$, the Bayesian posterior converges almost surely to $\theta$, the wrong state. In contrast, with the updating rule (\ref{C-Bay}), the agent's belief converges to probability $7/24$ on $\theta^*$, a closer distribution to the truth. (Its cross-entropy with respect to $\delta_{\{\theta^*\}}$ is $\log (7/24)$, while the cross-entropy of the limiting Bayesian posterior is $- \infty$.\footnote{We compare distributions in $\mathbf{S}$. Since the parameter identifies distributions in $\Delta(X)$, we are equivalently comparing distributions in $\Delta(\Delta(X))$.}) It is not always true that the non-Bayesian agent learns better, though. E.g., if we change $1/10$ for $9/10$, the agent learns the true state under Bayesian updating, while the belief converges to $17/24$ on $\theta^*$ under updating (\ref{C-Bay}). In general, all we can say is that the agent's belief comes closer, and sometimes strictly so, to the true state under updating (\ref{C-Bay}) (in the sense of the Kullback-Leibler divergence) if, and only if, the agent does not learn the true state under Bayesian updating. 

\medskip 

The updating rule (\ref{C-Bay}) may also serve as a ``\emph{misspecification test}.'' Indeed, if the process $(q_n)_n$ does not converge to a point mass on one of the parameter $\theta$, then the model $\mathcal{P}$ must be misspecified.\footnote{More precisely, we need to add the mild requirement that the model is identified, i.e., $p_{\theta} \neq p_{\theta'}$ for all $(\theta,\theta')$. When $p^*=p_{\theta}=p_{\theta'}$ with $\theta \neq \theta'$, it is possible for the belief process to converge to an interior point, even though the model is correctly specified.} Thus, a possible test is to reject the hypothesis that the model is correctly specified if  $\min_{\theta \in \Theta}||q_n - \delta_{\{\theta\}}||_{TV} > \varepsilon$, where $\varepsilon$ and $n$ are carefully calibrated to trade-off type I and II errors. We leave this question for future research.

\medskip

The proof of Theorem \ref{th:limitset} uses results from stochastic approximation theory with, however, one major difficulty.   The major difficulty is that the convex components of $E$ might  neither be isolated points nor interior points. In fact, each component $C_k \neq C_{k^*}$ lies on the boundary of $\mathbf{S}$.  Example \ref{ex:continuum-attractors} illustrates that we may have a continuum of zeroes on the boundary.

\begin{example}\label{ex:continuum-attractors} Suppose that $\Theta= \{\theta_1,\theta_2,\theta_3,\theta_4\}, X=\{x_1,x_2,x_3\}$, and  
    \[p_{\theta_1} = \frac{1}{4}(2,1,1), \; p_{\theta_2} = \frac{1}{4}(1,2,1), \; p_{\theta_3}= \frac{1}{4}(1,1,2), \;  p_{\theta_4}= \frac{1}{8}(2,3,3),p^* =\frac{1}{8}(6,1,1). \]
    The set $E$  is 
\[\{\delta_{\{\theta_1\}}\} \cup  \{\delta_{\{\theta_2\}}\} \cup \{\delta_{\{\theta_3\}}\} \cup \{(0,\lambda/2,\lambda/2,1-\lambda): \; \lambda \in [0,1]\}, \]
with $C_{k^*} = \{\delta_{\{\theta_1\}}\}$. The component $\{(0,\lambda/2,\lambda/2,1-\lambda): \; \lambda \in [0,1]\}$ is a continuum and lies on the boundary of $\mathbf{S}$ -- it is the line in Figure \ref{fig:continuum-attractors}.
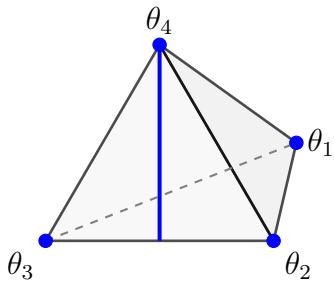
\begin{figure}[h!]
\centering
 \usetikzlibrary{arrows,shapes.geometric,decorations.markings, calc, fadings, decorations.pathreplacing, patterns, decorations.pathmorphing, positioning}
\newcommand{\midarrow}{\tikz \draw[-triangle 90] (0,0) -- +(.2a5,0);}
\begin{tikzpicture}
    \fill[gray!10, opacity=0.5] (0,0) -- (1.5,2.6) -- (3,0) -- cycle;
        \fill[gray!20, opacity=0.5]  (1.5,2.6) -- (3.3, 1.3) -- (3,0) -- cycle;
         \draw[black, line width=1pt, opacity=0.7] (0,0) -- (1.5,2.6) -- (3,0) -- cycle;
    \draw[black, line width=1pt, opacity=0.7] (1.5,2.6) -- (3.3,1.3) -- (3,0) -- cycle;
\draw[dashed, gray, line width=0.8pt] (0,0) -- (3.3,1.3);
        
    \draw [blue,line width=1.5pt] (1.5,2.6) -- (1.5,0);
    \filldraw[blue] (0,0) circle (2.5 pt);
    \filldraw[blue] (3,0) circle (2.5 pt);
    \filldraw[blue] (1.5,2.6) circle (2.5 pt);
        \filldraw[blue] (3.3, 1.3) circle (2.5 pt);
        \node[right] at (3.3, 1.3) {$\theta_1$};
    \node [below left] at (0,0) {$\theta_3$};
    \node [above] at (1.5,2.6) {$\theta_4$};
    \node [below right] at (3,0) {$\theta_2$};
\end{tikzpicture}
\caption{The components in Example \ref{ex:continuum-attractors}}
\label{fig:continuum-attractors}
\end{figure}
\end{example}

As a consequence, none of the classical results in stochastic approximation theory, e.g., \cite{pemantle1990},  \cite{brandiere1998}, \cite{brandiere-duflo-1996}, are immediately applicable. For instance, \citet[p. 700-701]{pemantle1990} states that ``\textit{if there are uncountably many unstable points, then $(q_n)$ can still converge to one of them even though each point has probability zero of being hit. In this case, all one might conclude is that the law of the limit has no point masses at unstable critical points\footnote{Here, the term ``critical points" is relative to the flow. In other terms, a critical point of the flow associated to the EDO $\dot{q} = H(q)$ is simply a zero of $H$.}.}'' Similarly, since the components $C_k$, $k \neq k^*$, are on the boundary of $\mathbf{S}$, we cannot rely on the  Jacobian of $H$ in the neighborhood of each $q \in C_k$ to determine their stability. \medskip 

We develop novel arguments to tackle that issue. The arguments are quite technical, but the main logic is simple: We show that each component $C_k \neq C_{k^*}$ admits an open neighborhood and a \emph{uniform repelling} direction, so that the dynamics cannot hop indefinitely often from elements of the open neighborhood of $C_k$ to others.\medskip

 We now present the arguments for the special case where $C_k \neq C_{k^*}$ consists of the single element ${\hat{q}}$.  (Technical details in this paragraph may be omitted without loss of continuity.)  Choose $q^* \in C_{k^*}$ with maximal support, that is, for all $q \in C_{k^*}$, $[q_{\theta} > 0 \Rightarrow q^*_{\theta}>0]$. The \emph{convexity} of each component guarantees that such $q^*$ exists.\footnote{This is where having convex, and not merely connected, components helps.} We partition $\Theta$ into three subsets: $\Theta_1:= \{\theta \in \Theta: \; q^*_{\theta} = \hat{q}_{\theta} = 0\}$, $\Theta_2:= \{\theta \in \Theta: \; \hat{q}_{\theta} >0 \}$, and $\Theta_3:= \left\{\theta \in \Theta: \; \hat{q}_{\theta} =0, q^*_{\theta}>0  \right\}$. From the construction of the components $C_k$,  $\Theta_3$ is non-empty. (If  $\Theta_3$ was empty, then $C_k=C_{k^*}$.) From the strict   concavity  of $V$,  $\frac{\partial}{\partial \tau}V\left(\tau q^* + (1-\tau)\hat{q} \right) >0$, $\tau \in (0,1)$. Therefore,
\[0< \sum_{\theta \in \Theta} (q^*_{\theta} - \hat{q}_{\theta}) f_{\theta}(\hat{q}) =  \sum_{\theta \in \Theta} q^*_{\theta} f_{\theta}(\hat{q}) - 1 =  \sum_{\theta \in \Theta_2 \cup \Theta_3} q^*_{\theta} f_{\theta}(\hat{q}) - 1  = \sum_{\theta \in \Theta_2} q^*_{\theta} +  \sum_{\theta \in \Theta_3} q^*_{\theta}  f_{\theta}(\hat{q}) - 1,\]
where the first and third equalities follows from $\hat{q}_{\theta}(f_{\theta}(\hat{q})-1)=0$ for all $\theta$, since $\hat{q}$ is a zero of $H$. (See Equation \ref{eq:H} for the definition of $f_{\theta}(\hat{q})$.) Since $\sum_{\theta \in \Theta_2 \cup \Theta_3} q_{\theta}^* = 1$, there exists $\hat{\theta} \in \Theta_3$ such that $f_{\hat{\theta}}(\hat{q}) >1$, that is, $\dot{q}_{\hat{\theta}} > 0$ in the neighborhood of $\hat{q}$ -- the dynamics cannot converge to $\hat{q}$. The logic extends to arbitrary convex components. 

\medskip

The observation that the components are not necessarily isolated points is a direct consequence of the fact that distributions in $\co \mathcal{P}$ do not always have a unique decomposition in terms of its extreme points $\mathcal{P}$. For future reference, let $\Lambda$ be the set of all possible convex combinations of the $p_{\theta}$'s equal to $p^*$, that is, 
\[\Lambda := \left\{\lambda = (\lambda_{\theta})_{\theta \in \Theta} \in \mathbf{S}: \; \, p^* = \sum_{\theta \in \Theta} \lambda_{\theta} p_{\theta} \right\}.\]
We now discuss the implications of assuming different notions of independences of the family $\mathcal{P}$.  We say that the family $\mathcal{P} = \{p_{\theta}: \theta \in \Theta\}$ is \emph{full} if the vectors $(p_{\theta}-p_{\theta'})_{(\theta,\theta') \in \Theta \times \Theta}$ span the tangent space associated to $\Delta(X)$, i.e., $\mathbb{R}^{|X|}_0:=\{v \in \mathbb{R}^{|X|}: \, \; \sum_{x \in X } v_x  = 0\}$. In words, if the family is full, then differences between its members can point in any direction allowed by the simplex structure. This property is common in statistics, as it allows for identification.

\begin{proposition} \label{prop:full}
  If the family $\mathcal{P}$ is full, then $E \cap \mathbf{S}^* \subseteq \Lambda$. If, in addition, $p^* \notin \co \mathcal{P}$, then $E \subseteq  \partial \left(\mathbf{S}\right)$. 
 \end{proposition}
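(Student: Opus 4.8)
The plan is to read off $E \cap \mathbf{S}^*$ directly from the functions $f_\theta$ in \eqref{eq:H}. First I would fix $q \in E \cap \mathbf{S}^*$; since every $q_\theta$ is strictly positive, the identity $H_\theta(q) = q_\theta\bigl(f_\theta(q)-1\bigr) = 0$ forces $f_\theta(q) = 1$ for all $\theta \in \Theta$. I would then rewrite this system in terms of the predictive mixture $p_q := \sum_{\theta' \in \Theta} q_{\theta'} p_{\theta'}$ and the positive, well-defined weight $w(x) := p^*(x)/p_q(x)$ (well defined because $p^*$ and $p_q$ both have full support $X$): since $f_\theta(q) = \sum_x w(x)\, p_\theta(x)$ and $\sum_x w(x)\, p_q(x) = \sum_x p^*(x) = 1$, the conditions $\{f_\theta(q) = 1 : \theta \in \Theta\}$ are equivalent to $\sum_x w(x)\bigl(p_\theta(x) - p_{\theta'}(x)\bigr) = 0$ for all $(\theta,\theta') \in \Theta \times \Theta$.

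The second step invokes fullness. The last condition says the vector $(w(x))_{x \in X} \in \mathbb{R}^{|X|}$ is orthogonal to every difference $p_\theta - p_{\theta'}$. When $\mathcal{P}$ is full, these differences span $\mathbb{R}^{|X|}_0$, whose orthogonal complement in $\mathbb{R}^{|X|}$ is the line through the all-ones vector. Hence $w$ is constant, say $w \equiv c$; summing $p^*(x) = c\, p_q(x)$ over $x \in X$ forces $c = 1$, i.e.\ $p^* = \sum_{\theta} q_\theta p_\theta$, which is precisely the statement that $q \in \Lambda$. This establishes $E \cap \mathbf{S}^* \subseteq \Lambda$.

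For the addendum, I would note that $p^* \notin \co\mathcal{P}$ means no $\lambda \in \mathbf{S}$ represents $p^*$ as a mixture $\sum_\theta \lambda_\theta p_\theta$, i.e.\ $\Lambda = \emptyset$; combined with the inclusion just proved, this gives $E \cap \mathbf{S}^* = \emptyset$, equivalently $E \subseteq \partial\left(\mathbf{S}\right)$.

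I do not expect a serious obstacle: once the interiority of $q$ converts $H(q) = 0$ into $f_\theta(q) = 1$ for every $\theta$, the rest is short linear algebra. The only point that needs a little care is packaging the equations as a single orthogonality statement for the vector $w$ against the differences $p_\theta - p_{\theta'}$ — it is precisely these differences, rather than the $p_\theta$ themselves, that fullness controls, so the reduction has to be carried out at that level before fullness can be applied.
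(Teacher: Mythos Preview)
Your proposal is correct and follows essentially the same argument as the paper: define the ratio vector $w(x)=p^*(x)/\sum_{\theta'}q_{\theta'}p_{\theta'}(x)$ (the paper calls it $u$), observe that interiority turns $H(q)=0$ into $\langle w,p_\theta-p_{\theta'}\rangle=0$ for all pairs, and use fullness to conclude $w$ is constant, hence equal to $1$. Your addendum spelling out the second claim (which the paper leaves implicit) is also correct.
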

 
 Thus, when the family is full, interior zeroes, if they exist, ``identify'' the true data-generating process $p^*$; the agent would learn to predict correctly.  We stress that the inclusion $E \cap \mathbf{S}^* \subseteq \Lambda$ may be strict, because $\Lambda$ may intersect the boundary of $\mathbf{S}$, as the following example demonstrates. 

 \begin{example}\label{ex:full-family} Suppose that $\Theta= \{\theta_1,\theta_2,\theta_3\}, X=\{x_1,x_2\}$, and  
    \[p_{\theta_1} = \frac{1}{4}(1,3); \; p_{\theta_2} = \frac{1}{3}(1,2); \; p_{\theta_3}=\frac{1}{4}(3,1),  p^*=\frac{1}{2}(1,1).\]
   Then, $E = \{\delta_{\{\theta_1\}}\} \cup \{\delta_{\{\theta_2\}}\} \cup \{\delta_{\{\theta_3\}}\} \cup \Lambda$, where {\small
    \[\Lambda = \left\{\lambda \in \mathbf{S}: \, \lambda_1 p_{\theta_1} + \lambda_2 p_{\theta_2} + \lambda_3 p_{\theta_3} = p^* \right\} = \left\{\left(\lambda_1, \frac{1}{5}(3-6\lambda_1),\frac{1}{5}(2 + \lambda_1) \right): \; \lambda_1 \in \left[0,\frac{1}{2}\right] \right\}.\]}

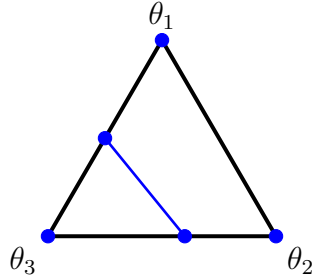
\begin{figure}[h!]
\centering
 \usetikzlibrary{arrows,shapes.geometric,decorations.markings, calc, fadings, decorations.pathreplacing, patterns, decorations.pathmorphing, positioning}
\newcommand{\midarrow}{\tikz \draw[-triangle 90] (0,0) -- +(.2a5,0);}
\begin{tikzpicture}
    \draw [black,line width=1.5pt] (0,0) -- (1.5,2.6) -- (3,0) -- cycle;
    \draw [blue,line width=1pt] (0.75,1.3) -- (1.8,0);
    \filldraw[blue] (0,0) circle (2.5 pt);
    \filldraw[blue] (3,0) circle (2.5 pt);
    \filldraw[blue] (1.5,2.6) circle (2.5 pt);
     \filldraw[blue] (0.75,1.3) circle (2.5 pt);
      \filldraw[blue] (1.8,0) circle (2.5 pt);
    \node [below left] at (0,0) {$\theta_3$};
    \node [above] at (1.5,2.6) {$\theta_1$};
    \node [below right] at (3,0) {$\theta_2$};
\end{tikzpicture}
\caption{The four components in Example \ref{ex:full-family}}
\end{figure}
\end{example}

We say that the family $\mathcal{P}$ satisfies convex independence  if there does not exist $\hat{\theta} \in \Theta$ such that $p_{\hat{\theta}} \in \co \{p_{\theta}: \theta \in \Theta \setminus \{\hat{\theta}\}\}$. The next result states that  all components are singletons when the family satisfies convex independence and $|\Theta|\ \leq |X|$. 

\begin{proposition}\label{prop:convex-indep}
    If $|\Theta| \leq |X|$ and the family $\mathcal{P}$ satisfies convex independence, then every $C_k$ is a singleton.
\end{proposition}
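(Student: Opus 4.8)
The plan is to show that when $|\Theta| \le |X|$ and $\mathcal{P}$ is convex independent, no component $C_k$ can contain two distinct points, by ruling out the two ways a component could fail to be a singleton: (a) a face $F_{\widehat\Theta}$ on which the cross-entropy maximizer set $C_{\widehat\Theta}$ is itself a continuum, and (b) distinct maximizer sets $C_{\widehat\Theta}$, $C_{\widehat\Theta'}$ getting glued into a single $C_k$ by the construction in Lemma \ref{lem:decomposition}. For (a), recall from the discussion after Lemma \ref{lem:decomposition} that every $q \in C_{\widehat\Theta}$ induces the \emph{same} distribution $\sum_\theta q_\theta p_\theta =: p \in \Delta(X)$. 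So if $q, q'$ are two distinct points of $C_{\widehat\Theta}$, then $\sum_\theta (q_\theta - q'_\theta) p_\theta = 0$ with $(q_\theta - q'_\theta)_\theta$ a nonzero vector summing to zero, i.e., the family $(p_\theta)_{\theta \in \widehat\Theta}$ is affinely dependent. First I would argue that affine dependence of a subfamily, together with the hypotheses, contradicts convex independence or the cardinality bound.

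The key algebraic step is this: suppose $\sum_{\theta \in \Theta} c_\theta p_\theta = 0$ with $\sum_\theta c_\theta = 0$ and $c \neq 0$. Split $\Theta$ into $\Theta^+ = \{\theta : c_\theta > 0\}$ and $\Theta^- = \{\theta : c_\theta < 0\}$; both are nonempty since $c$ sums to zero and is nonzero. Writing $s := \sum_{\theta \in \Theta^+} c_\theta = -\sum_{\theta \in \Theta^-} c_\theta > 0$ and normalizing, we get $\sum_{\theta \in \Theta^+} (c_\theta/s) p_\theta = \sum_{\theta \in \Theta^-} (-c_\theta/s) p_\theta$, a common point of $\co\{p_\theta : \theta \in \Theta^+\}$ and $\co\{p_\theta : \theta \in \Theta^-\}$ expressed with strictly positive weights on \emph{all} indices of each side. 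Pick any $\hat\theta \in \Theta^+$: then $p_{\hat\theta} = \frac{s}{c_{\hat\theta}}\bigl(\sum_{\theta \in \Theta^-}(-c_\theta/s)p_\theta\bigr) - \sum_{\theta \in \Theta^+ \setminus\{\hat\theta\}} (c_\theta/c_{\hat\theta}) p_\theta$; this is an affine combination of $\{p_\theta : \theta \neq \hat\theta\}$ but not yet obviously a convex one, so I need a little more care. The clean route is: any affinely dependent finite set of points in an affine space contains a point lying in the convex hull of the others iff the set is not in "convex position"; more directly, from the equality of the two convex combinations above, $p_{\hat\theta}$ (for $\hat\theta \in \Theta^+$) is a convex combination of $\{p_\theta : \theta \in \Theta^-\}$ \emph{only if} $|\Theta^+| = 1$. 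If $|\Theta^+| \ge 2$ we instead obtain, after moving one term, that \emph{some} $p_{\hat\theta}$ with $\hat\theta \in \Theta^+$ lies in $\co(\{p_\theta : \theta \in \Theta^- \cup (\Theta^+\setminus\{\hat\theta\})\})$ — this is the standard Radon-type conclusion — contradicting convex independence. The upshot is that any nontrivial affine dependence among the $p_\theta$ contradicts convex independence; hence $\{p_\theta\}_{\theta \in \Theta}$ is affinely independent, which (since these are $|\Theta|$ points in $\Delta(X)$, an $(|X|-1)$-dimensional simplex, and $|\Theta| \le |X|$) is consistent, and moreover forces every subfamily to be affinely independent. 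This kills case (a): each $C_{\widehat\Theta}$ is a single point.

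For case (b), I would revisit the construction of the $C_k$ in the proof of Lemma \ref{lem:decomposition}: distinct maximizer sets $C_{\widehat\Theta}$ and $C_{\widehat\Theta'}$ are merged into the same component only when they lie in the same connected piece of $E$ and share the same $V$-value, which forces the induced distributions to coincide, $\sum_\theta q_\theta p_\theta = \sum_\theta q'_\theta p_\theta$ for $q \in C_{\widehat\Theta}$, $q' \in C_{\widehat\Theta'}$; but again this is a nontrivial affine dependence among the $p_\theta$ (the coefficient vector $q - q'$ is nonzero because $\widehat\Theta \neq \widehat\Theta'$ means the supports differ) — contradiction by the same argument. Therefore no merging occurs and each $C_k$ equals a single $C_{\widehat\Theta}$, which by case (a) is a singleton. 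I expect the main obstacle to be the careful bookkeeping in the Radon-type step — getting from "affinely dependent" to "some $p_{\hat\theta}$ is a \emph{convex} combination of the rest" requires handling the sign split cleanly and treating the $|\Theta^+| = 1$ boundary case separately (there $p_{\hat\theta}$ is literally a convex combination of $\{p_\theta : \theta \in \Theta^-\}$, immediately violating convex independence) — and in making sure the exact merging criterion in Lemma \ref{lem:decomposition}'s proof is the one I've described so that case (b) genuinely reduces to the same algebraic contradiction.
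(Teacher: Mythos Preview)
Your ``Radon-type'' step is where the argument breaks. From the sign-split identity
\[
\sum_{\theta\in\Theta^+}\frac{c_\theta}{s}\,p_\theta \;=\; \sum_{\theta\in\Theta^-}\frac{-c_\theta}{s}\,p_\theta
\]
you only obtain that the convex hulls $\co\{p_\theta:\theta\in\Theta^+\}$ and $\co\{p_\theta:\theta\in\Theta^-\}$ intersect; it does \emph{not} follow that some single $p_{\hat\theta}$ lies in $\co\{p_\theta:\theta\neq\hat\theta\}$. Isolating any $\hat\theta\in\Theta^+$ gives $p_{\hat\theta}=\sum_{\theta\in\Theta^-}(-c_\theta/c_{\hat\theta})p_\theta-\sum_{\theta\in\Theta^+\setminus\{\hat\theta\}}(c_\theta/c_{\hat\theta})p_\theta$, an affine combination whose coefficients on $\Theta^+\setminus\{\hat\theta\}$ are strictly negative whenever $|\Theta^+|\geq 2$; no choice of $\hat\theta$ fixes this. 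A concrete counterexample to your central claim (``any nontrivial affine dependence among the $p_\theta$ contradicts convex independence''): take $|X|=|\Theta|=4$ and
\[
p_1=(0.3,0.3,0.2,0.2),\quad p_2=(0.3,0.2,0.3,0.2),\quad p_3=(0.2,0.2,0.3,0.3),\quad p_4=(0.2,0.3,0.2,0.3).
\]
Then $p_1-p_2+p_3-p_4=0$ is a nontrivial affine dependence, yet no vertex of this parallelogram lies in the triangle spanned by the other three, so the family is convex independent. Both your case~(a) and case~(b) rest on the failed implication, so neither goes through.

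For comparison, the paper's proof does not try to derive affine independence from convex independence. It argues that the argmax of $W$ over $\co\{p_\theta:\theta\in\widehat\Theta\}$ is a single point $\bar p$ by strict concavity, and then simply asserts that the mixture map $q\mapsto\sum_\theta q_\theta p_\theta$ is injective on $F_{\widehat\Theta}$ (``the family is free''), whence the preimage of $\bar p$ is a single $\bar q$. So your route and the paper's route both pass through injectivity of the mixture map---equivalently, affine independence of $\{p_\theta\}$---but you attempt to deduce it from the stated hypothesis while the paper takes it as given. The parallelogram example shows that convex independence together with $|\Theta|\le|X|$ is strictly weaker than affine independence, so your deduction cannot be repaired as written; if anything, the example suggests the hypothesis one actually needs is affine independence.
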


Finally, we say that the family $\mathcal{P}$ is \emph{tight} if it is full and \emph{minimal}, in the sense that any proper sub-family is not full.  Note that a family is  tight if and only if $|\Theta| = |X|$ and it satisfies convex independence. We have the following theorem when the family is tight. 

\begin{theorem}
    Assume that the family $\mathcal{P}$ is tight. Then, there exists a unique $\lambda^*$ such that  $(q_n)_n$ converges to $\lambda^*$ almost surely. If $\Lambda \neq \emptyset$, then $\Lambda = \{\lambda^*\}$.
    \end{theorem}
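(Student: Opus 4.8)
The plan is to combine Theorem~\ref{th:limitset} with the structural consequences of tightness, namely Propositions~\ref{prop:full} and~\ref{prop:convex-indep}. First I would invoke the characterization that a tight family satisfies $|\Theta| = |X|$ and convex independence; in particular $|\Theta| \le |X|$, so Proposition~\ref{prop:convex-indep} applies and \emph{every} component $C_k$ of the zero set $E$ is a singleton. By Lemma~\ref{lem:decomposition} the maximizing component $C_{k^*}$ is therefore a single point, say $\lambda^* := C_{k^*}$. Theorem~\ref{th:limitset} then says that, for every interior prior $q_0$, the limit set $\mathcal{L}((q_n)_n)$ is almost surely contained in $C_{k^*} = \{\lambda^*\}$; since a limit set that is a singleton forces convergence, $(q_n)_n \to \lambda^*$ almost surely. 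This already settles the first claim, and $\lambda^*$ is unique because $C_{k^*}$ is the unique component on which $V$ attains its maximum.

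It remains to show that if $\Lambda \ne \emptyset$ then $\Lambda = \{\lambda^*\}$. Two facts are needed. First, $\Lambda$ is a singleton whenever it is non-empty: if $\lambda, \lambda' \in \Lambda$ then $\sum_\theta (\lambda_\theta - \lambda'_\theta) p_\theta = p^* - p^* = 0$, i.e.\ the vector $\lambda - \lambda' \in \mathbb{R}^{|\Theta|}_0$ lies in the kernel of the linear map $\theta$-coordinates $\mapsto \sum_\theta (\cdot)_\theta p_\theta$. Because the family is full, the differences $(p_\theta - p_{\theta'})$ span $\mathbb{R}^{|X|}_0$, which has dimension $|X| - 1 = |\Theta| - 1$; hence this linear map, restricted to $\mathbb{R}^{|\Theta|}_0$ (also of dimension $|\Theta|-1$), is injective, forcing $\lambda = \lambda'$. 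So $\Lambda$, if non-empty, is $\{\lambda\}$ for a single $\lambda$. Second, I must identify that $\lambda$ with $\lambda^*$: if $p^* \in \co\mathcal{P}$, i.e.\ $\Lambda \ne \emptyset$, then the mixture $\sum_\theta \lambda_\theta p_\theta = p^*$ achieves $V(\lambda) = \sum_x p^*(x)\log p^*(x)$, which is the unconstrained maximum of $p \mapsto \sum_x p^*(x)\log p(x)$ over all of $\Delta(X)$ (by Gibbs' inequality / strict concavity of the cross-entropy), hence a fortiori the maximum over $\co\mathcal{P}$. By Lemma~\ref{lem:decomposition} the maximum of $V$ over $\mathbf{S}$ is attained precisely on $C_{k^*} = \{\lambda^*\}$, so $\lambda \in C_{k^*}$, i.e.\ $\lambda = \lambda^*$. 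This gives $\Lambda = \{\lambda^*\}$.

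One subtlety I would double-check is the direction of the argument when $\Lambda$ meets the boundary of $\mathbf{S}$: the unique $\lambda \in \Lambda$ need not be interior (cf.\ Example~\ref{ex:full-family}), but this does not matter here, because the identification $\lambda = \lambda^*$ goes through the \emph{value} $V(\lambda)$ being globally maximal over $\mathbf{S}$ and the uniqueness of the maximizing component, not through any interiority. Another point to verify carefully is that fullness indeed yields the injectivity claim used above: this is exactly the statement that the $|\Theta|-1$ vectors obtained by fixing a reference $\theta_0$ and taking $p_\theta - p_{\theta_0}$ are linearly independent in $\mathbb{R}^{|X|}_0$, which follows from them spanning a space of dimension $|X|-1 = |\Theta|-1$. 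The main (and only genuine) obstacle is really bookkeeping: making sure the three uniqueness statements — uniqueness of $C_{k^*}$, singletonhood of each $C_k$, and singletonhood of $\Lambda$ — are invoked with the right hypotheses, since they come from three different earlier results (Lemma~\ref{lem:decomposition}, Proposition~\ref{prop:convex-indep}, Proposition~\ref{prop:full}) and all rely on the equivalence ``tight $\iff$ $|\Theta| = |X|$ and convex independence.'' Everything else is immediate.
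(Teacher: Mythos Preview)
Your proposal is correct and is exactly the argument the paper has in mind: the theorem is stated without proof precisely because it is an immediate corollary of Theorem~\ref{th:limitset}, Lemma~\ref{lem:decomposition}, and Proposition~\ref{prop:convex-indep}, together with the equivalence ``tight $\Leftrightarrow$ $|\Theta|=|X|$ and convex independence''. Your linear-algebra check that $\Lambda$ is a singleton (injectivity of $q\mapsto \sum_\theta q_\theta p_\theta$ on $\mathbb{R}^{|\Theta|}_0$ via fullness and $|\Theta|=|X|$) and the identification $\lambda=\lambda^*$ through Gibbs' inequality are the natural way to close the second claim, and your remark that boundary $\lambda$'s pose no issue is well placed.
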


\section{Discussion}
\subsection{Updating Weights}\label{sec:weights}

Section \ref{sec:main} assumes that the updating weights $(\gamma_n)_n$ are deterministic and vanishing not too slowly. In line with some empirical evidences \citep{benjamin2019}, these assumptions guarantees that the agent does not stop learning, but does so at a progressively slower rate.  This section discusses alternative assumptions and their consequences.\medskip

We start with an observation. The updating weights can be random: Theorem \ref{th:limitset} remains true if  each random weight $\boldsymbol{\gamma}_n$ is measurable with respect to $\mathcal{F}_{n-1}$, $\mathbb{E}[\sum_n \boldsymbol{\gamma}_n] = +\infty$, and $\mathbb{E}[\sum_n e^{-c/\gamma_n}] < +\infty$, for all $c>0$.\footnote{ We denote $(\mathcal{F}_n)_n$ the filtration adapted to the sequence $(q_n,x_n)_n$.} See Remark 4.3  of \cite{Ben99}.  We now turn our attention to two more substantive alternatives.\medskip

\textbf{\textsc{Constant Weights.}} Assume that the updating weights are constant, i.e.,  $\gamma_n =\gamma \in (0,1)$ for all $n$. With a constant weight, the agent continues to underreact to information, but does so at a constant rate.  The analysis in \cite{epstein2010non} covers this case and shows that the agent eventually learns the data-generating process, that is, if $p_{\theta^*}$ is the  true data-generating process, the agent's beliefs converge almost surely to $\delta_{\{\theta^*\}}$. The argument is  standard. Under the true data-generating process $p_{\theta^*}$, the random process $(\log q_n(\theta^*))_n$ is a sub-martingale, bounded from above, and thus converges almost surely by the martingale convergence theorem. To see that $(\log q_n(\theta^*))_n$ is a sub-martingale, note that 
\begin{align*}
\mathbb{E}_{p_{\theta^*}}[\log q_{n+1}(\theta^*)] - \log q_{n}(\theta^*) & = \mathbb{E}_{p_{\theta^*}}\left[\log\left((1-\gamma)+ \gamma \frac{p_{\theta^*}(x)}{\sum_{\theta'}q_{n}(\theta')p_{\theta'}(x)}\right)\right]\\ 
& \geq \gamma \sum_{x} p_{\theta^*}(x)\log\left(\frac{p_{\theta^*}(x)}{\sum_{\theta'}q_{n}(\theta')p_{\theta'}(x)}\right) \geq 0,
\end{align*}
where the first inequality follows from the concavity of $\log$ and the second from the positivity of the relative entropy. It follows that    $(q_n(\theta^*))_n$ converges almost surely to one.\medskip 

  However, when the model is misspecified (i.e., $p^* \notin \mathcal{P}$), the argument does not apply. Indeed, when the model is misspecified,  the
term
\begin{align*}
 \gamma \sum_{x} p^*(x)\log\left(\frac{p_{\theta^*}(x)}{\sum_{\theta'}q_{n}(\theta')p_{\theta'}(x)}\right), 
\end{align*}
is not always positive. For some $q$, the predictive distribution $\sum_{\theta}q_{\theta}p_{\theta}$ is strictly closer to $p^*$ than $p_{\theta^*}$ is, in which case the term above is strictly negative. \medskip

In fact, the belief process is not even guaranteed to converge. For an example, suppose that $X = \{0,1\}$, $\Theta=\{\theta,\theta'\}$, $p^*(1)=2/3$, $p_{\theta}(1)=3/4 = p_{\theta'}(0)$. We plot the belief processes in Figure \ref{fig:constant-weight}. The blue curve represents the predictive process induced by the updating rule (\ref{C-Bay}) with a constant weight, while the black curve represents the true data-generating process. Clearly, the process does not converge, but oscillates around the true data-generating process. For completeness, we also plot the predictive process induced by the updating rule (\ref{C-Bay}) with decreasing weights (resp., Bayesian updating) as the green (resp., purple) curve. 

\begin{figure}[ht!]
\centering
\includegraphics[scale=0.8]{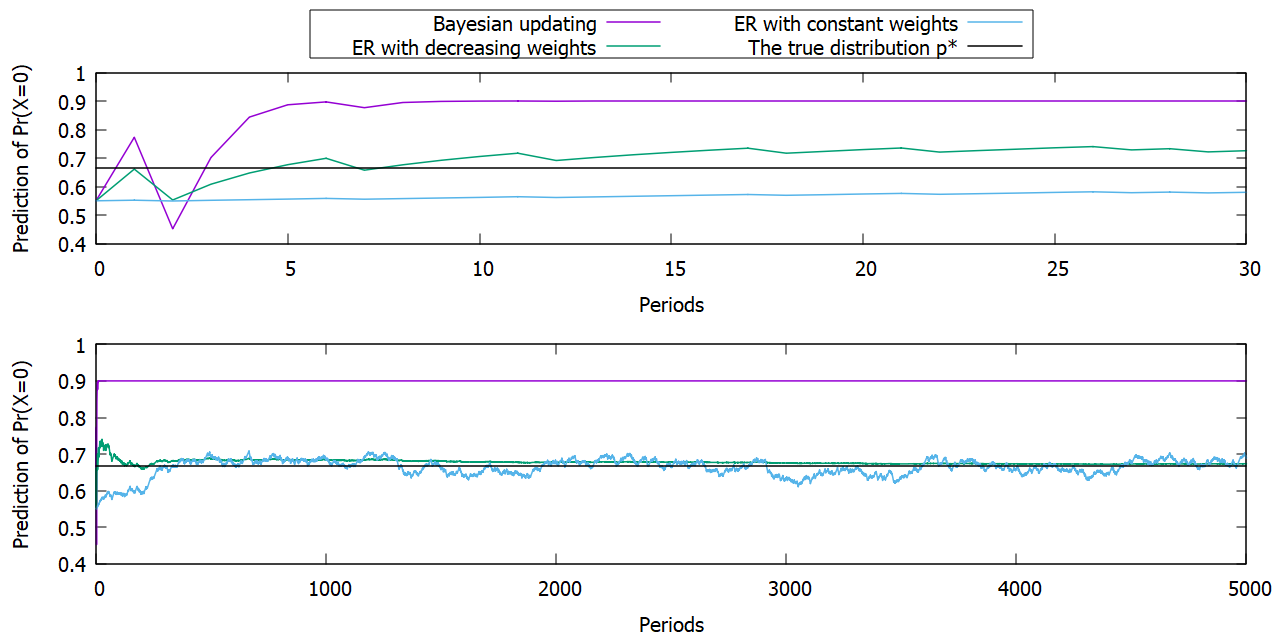}
\caption{Constant weight and non-convergence}\label{fig:constant-weight}
\end{figure}

\medskip

We therefore follow a different approach than in the previous section. We start by noting  that the updating rule (\ref{C-Bay}) induces a Markov chain on $\mathbf{S}$,  where the probability to transition from $q$ to $q'$ is $\sum_{x \in X_{q,q'}} p^*(x)$, $X_{q,q'} := \{x \in X: q' = (1-\gamma) q  + \gamma B(q,x)\}$. From any $q$, the Markov chain transitions to  finitely many $q'$. We then study the long-run properties of the Markov chain, which hold for almost all possible realizations of the observations. More precisely, we study the  convergence properties of the sequences of occupation measures $(\Pi^{\gamma}_n)_n$, where for any Borel set $A \subset \mathbf{S}$,
\begin{align*}
\Pi^{\gamma}_n (A):=\frac{1}{n}\sum_{m=1}^{n} \delta_{q_m}(A).
\end{align*} 
The notation stresses that the chain and, therefore, its occupation measure is parameterized by $\gamma$.  

\medskip

We first prove that the sequences of occupation measures converge to invariant distributions $\pi \in \Delta(\Delta (\Theta))$ of the Markov chains. This result is, however, of limited value as there are multiple invariant measures. For instance, for each $\theta \in \Theta$, $\delta_{\{\delta_{\{\theta\}}\}}$ is an invariant measure. We, however, prove that as  $\gamma$ becomes arbitrarily small,  the occupation measure converges to an invariant measure, which concentrates on $C_{k^*}$. (Recall that $C_{k^*}$ is the component of zeroes, which maximize the cross-entropy.) Thus, when the reaction to new information is small enough, the process is qualitatively similar to the process with decreasing weights studied in Section \ref{sec:main}. We now turn to a formal definition and statement.

\begin{definition}\label{def:limiting measure}
The measure $\pi^* \in \Delta(\mathbf{S})$ is a limiting measure for the updating rule (\ref{C-Bay}) if there exists a sequence $(\gamma_{\ell},\pi_{\ell})_{\ell \in \mathbb{N}}$ such that:
\begin{enumerate}
\item $ \gamma_{\ell} \downarrow 0$,
\item $\pi_{\ell}$ is a weak* limit point of  $(\Pi_n^{\gamma_{\ell}})_n$, for all $\ell$, and
\item $\lim_{\ell} \pi_{\ell} = \pi^*$ for the weak* topology. 
\end{enumerate}
\end{definition}

In words, the measure $\pi^*$ is a limiting measure if we can find a sequence of updating weights $(\gamma_{\ell})_{\ell}$ and a corresponding sequence $(\pi_{\ell})_{\ell}$ of limit points of $(\Pi_n^{\gamma_{\ell}})_n$, which converges to $\pi^*$ for the weak* topology.  We then prove the following:

\begin{theorem}\label{th:constant-step}
Let $\pi^*$ be a limiting measure for the updating rule (\ref{C-Bay}). The support of $\pi^*$ is included in $C_{k^*}$. 
\end{theorem}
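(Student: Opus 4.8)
The plan is to combine the stochastic-approximation toolkit for constant-step algorithms with the Lyapunov structure already established for the ODE $\dot q = H(q)$. The starting point is the observation that the updating rule (\ref{C-Bay}) with constant weight $\gamma$ defines a Markov chain on the compact space $\mathbf{S}$; by Krylov--Bogolyubov, each occupation-measure sequence $(\Pi_n^\gamma)_n$ has weak* limit points, and any such limit point is an invariant measure of the $\gamma$-chain, so in particular limiting measures $\pi^*$ in the sense of Definition \ref{def:limiting measure} exist. I would then invoke the theory of stochastic approximation with constant step (e.g.\ the results of \cite{Ben99} on asymptotic pseudotrajectories and on the limit of invariant measures as the step size vanishes, or equivalently Bena\"im--Hofbauer--Sorin): as $\gamma_\ell \downarrow 0$, any weak* accumulation point of the invariant measures of the $\gamma_\ell$-chains is supported on the Birkhoff center of the flow of $\dot q = H(q)$ — more precisely, its support is a compact invariant set on which no strict Lyapunov function can strictly decrease, i.e.\ it is contained in the set of chain-recurrent points. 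Hence $\mathrm{supp}(\pi^*)$ is contained in the set of chain-recurrent points of the flow.

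Next I would use Lemma \ref{lem:V-Lyapunov} and Lemma \ref{lem:decomposition} to identify that chain-recurrent set. Since $V$ is a Lyapunov function for the flow that is strictly increasing off the zero set $E = \bigcup_k C_k$, the chain-recurrent set is exactly $E$: along any non-constant orbit $V$ strictly increases, so no $\eps$-pseudo-orbit can return near its starting point unless it lies in a level set of $V$, and the connected components of $E$ are precisely the sets on which $V$ is locally constant. Therefore $\mathrm{supp}(\pi^*) \subseteq E = \bigcup_{k=1}^K C_k$.

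The remaining and genuinely delicate step is to rule out every component $C_k$ with $k \neq k^*$. For the decreasing-step case this was done in Theorem \ref{th:limitset} via the uniform repelling direction: for each $C_k \neq C_{k^*}$ there is an open neighborhood $N_k$ and a coordinate $\hat\theta \in \Theta_3$ with $f_{\hat\theta}(q) > 1$ on $N_k$, i.e.\ $\dot q_{\hat\theta} = H_{\hat\theta}(q) > 0$ there. I would transfer this to the invariant-measure setting by showing that an invariant measure of the $\gamma$-chain (for $\gamma$ small) can put only vanishing mass on $N_k$: on $N_k$ the coordinate $q_{\hat\theta}$ has a strictly positive expected drift bounded below by $\gamma c_k$ for some $c_k > 0$, while $q_{\hat\theta}$ is bounded; a standard argument (integrate the drift identity $\mathbb{E}_\pi[q_{\hat\theta}] = \mathbb{E}_\pi[(1-\gamma)q_{\hat\theta} + \gamma\,\mathbb{E}_{p^*}B_{\hat\theta}] $, which forces $\mathbb{E}_\pi[H_{\hat\theta}(q)] = 0$, combined with $H_{\hat\theta} \geq \gamma c_k > 0$ on $N_k$ and $|H_{\hat\theta}|$ bounded elsewhere near $C_k$) shows $\pi(N_k') \to 0$ as $\gamma \to 0$ for a slightly smaller neighborhood $N_k'$, uniformly along the sequence $\gamma_\ell$. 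Passing to the weak* limit gives $\pi^*(C_k) = 0$ for all $k \neq k^*$. Since $\mathrm{supp}(\pi^*) \subseteq \bigcup_k C_k$ and the $C_k$ are disjoint and compact, we conclude $\mathrm{supp}(\pi^*) \subseteq C_{k^*}$.

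I expect the main obstacle to be precisely this last step: making the "$\pi(N_k) \to 0$" argument rigorous requires care, because $C_k$ lies on the boundary $\partial \mathbf{S}$, the drift bound $H_{\hat\theta} \geq \gamma c_k$ degenerates with $\gamma$, and one must control the chain's excursions into and out of $N_k$ — in particular ensure the chain does not accumulate mass on $\partial N_k$ where the drift estimate fails. The convexity of the components (used to pick $q^*$ with maximal support, hence a well-defined $\hat\theta \in \Theta_3$ that works uniformly on a whole neighborhood) is what makes the uniform-drift estimate available, exactly as in the proof of Theorem \ref{th:limitset}.
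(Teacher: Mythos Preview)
Your two-step plan (first localize the support of $\pi^*$ to the chain-recurrent set $E$, then exclude each $C_k \neq C_{k^*}$ via a drift argument) is natural, but the second step as written has a genuine gap, not merely a delicacy. You assert that on a neighborhood $N_k$ of $C_k$ the coordinate $q_{\hat\theta}$ has expected drift bounded below by $\gamma c_k$; this is false. The one-step drift of $q_{\hat\theta}$ under the $\gamma$-chain is $\gamma H_{\hat\theta}(q) = \gamma\, q_{\hat\theta}\bigl(f_{\hat\theta}(q)-1\bigr)$, and since $\hat\theta \in \Theta_3$ means precisely that $q_{\hat\theta}=0$ on $C_k$, this drift \emph{vanishes} on $C_k$ and is arbitrarily small on any neighborhood of it. The invariance identity $\int H_{\hat\theta}\,d\pi = 0$ therefore yields no contradiction. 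The fix is to pass to the logarithmic scale, as in the proof of Theorem \ref{th:limitset}: the increment $\log(q_{n+1,\hat\theta}/q_{n,\hat\theta})$ has conditional expectation $r_{\hat\theta}^\gamma(q) = \sum_x p^*(x)\log\bigl((1-\gamma)+\gamma\, p_{\hat\theta}(x)/\sum_{\theta'}q_{\theta'}p_{\theta'}(x)\bigr) \approx \gamma(f_{\hat\theta}(q)-1)$, which \emph{is} bounded below near $C_k$. But $\log q_{\hat\theta}$ is unbounded below, so the equality $\int (P_\gamma h - h)\,d\pi = 0$ is unavailable; a telescoping argument along the chain gives only the inequality $\int r_{\hat\theta}^\gamma\,d\pi \leq 0$, which is exactly what Lemma \ref{lem:rate_inv} establishes.

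The paper then uses this inequality in a more direct, one-step way that bypasses your Step 1 altogether. It takes the weighted growth rate $R^*(\gamma,q) := \sum_\theta \lambda_\theta^*\, r_\theta^\gamma(q)$ for a fixed $\lambda^* \in C_{k^*}$ and shows (Lemma \ref{lem:pos_rate}) that $\partial_\gamma R^*(0,q) \geq 0$ with equality \emph{only} on $C_{k^*}$; hence, for small $\gamma$, $R^*(\gamma,q) > c\gamma$ uniformly on any compact set disjoint from $C_{k^*}$ --- not just near the other components. Combined with $\int R^*(\gamma_\ell,\cdot)\,d\pi_\ell \leq 0$, a comparison over a small neighborhood $O^*$ of $C_{k^*}$ versus its complement gives $\pi_\ell\bigl(\mathbf{S}\setminus \mathrm{Cl}(O^*)\bigr) < \varepsilon/c$ for all large $\ell$, whence $\pi^*\bigl(\mathbf{S}\setminus \mathrm{Cl}(O^*)\bigr)=0$. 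No appeal to chain recurrence or Birkhoff centers is needed, and no component-by-component elimination: the single functional $R^*$ separates $C_{k^*}$ from the rest of $\mathbf{S}$ in one stroke.
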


Theorem \ref{th:constant-step} states that the limiting measure concentrates on $C_{k^*}$ as the updating weight becomes smaller, the natural analogue of Theorem \ref{th:limitset}.  To get some intuition, recall that a limit point $\pi_{\ell}$ of $(\Pi_n^{\gamma_{\ell}})_n$ is an invariant measure of the Markov chain, parameterized by $\gamma_{\ell}$.  Since invariant measures capture the long-run stationary properties  of the belief process, beliefs should not grow at the limit. More precisely, let
\begin{align*}
r_{\theta}^{\gamma}(q) := \sum_{x} p^*(x) \underbrace{\log\left( (1-\gamma) + \gamma \frac{p_{\theta}(x)}{\sum_{\theta'}q_{\theta'}p_{\theta'}(x)}\right)}_{\approx \frac{q_{n+1}(\theta |x)-q_n(\theta)}{q_{n}(\theta)}}. 
\end{align*}
be the expected growth rate of the belief in $\theta$, when the current belief is $q$. We show that  
\begin{align}\label{eq:growth-rate}
\int_{\mathbf{S}}\sum_{\theta} \lambda_{\theta}r_{\theta}^{\gamma_{\ell}}(q)\pi_{\ell}(dq) \leq 0,
\end{align}
for all positive weight $\lambda \in \mathbb{R}_{+}^{|\Theta|}$, for all limit point $\pi_{\ell}$. Now, choose $\lambda^* \in C_{k^*}$. For any $q \notin C_{k^*}$, we have that 
\begin{align*}
\sum_{x}p^*(x)\left(\frac{\sum_{\theta'}\lambda_{\theta'}^*p_{\theta'}(x)}{\sum_{\theta'}q_{\theta'}p_{\theta'}(x)}-1\right) \geq \sum_{x}p^*(x)\log\left(\frac{\sum_{\theta'}\lambda_{\theta'}^*p_{\theta'}(x)}{\sum_{\theta'}q_{\theta'}p_{\theta'}(x)}\right) >0,
\end{align*}
where the first inequality follows from the concavity of the $\log$ and the second from the definition of $C_{k^*}$. The left-hand side is the derivative of $\gamma \rightarrow \sum_{\theta}\lambda_{\theta}^*r_{\theta}^{\gamma}(q)$ evaluated at $\gamma=0$. Hence, for all $q \notin C_{k^*}$, there exists $\gamma(q) >0$ such that for all $\gamma \leq \gamma(q)$, $\sum_{\theta}\lambda_{\theta}^*r_{\theta}^{\gamma}(q)>0$.  Now, if we had a uniform $\gamma_0>0$ such that  $\sum_{\theta}\lambda_{\theta}^*r_{\theta}^{\gamma}(q)>0$ for all $q \notin C_{k^*}$, for all $\gamma \leq \gamma_0$, and Equation (\ref{eq:growth-rate}) was holding in equality, this would prove that no invariant distribution can put strictly positive probability outside the set $C_{k^*}$ for all $\gamma$ small enough. For instance, when $C_{k^*}$ is the singleton $\{\lambda^*\}$, this would say that $\lambda^*$ is an invariant distribution of the Markov chain for all $\gamma$ small enough. This is not true in general. What we can show, however, is that the same logic applies to appropriately chosen neighborhoods of $C_{k^*}$.   Choosing a sequence of such neighborhoods converging to $C_{k^*}$ then completes the proof. In simple terms, for small enough $\gamma_{\ell}$, the invariant measure $\pi_{\ell}$ puts most of its mass on $C_{k^*}$, but not all of it.  
\medskip

\textbf{\textsc{Observation-dependent Weights.}} Biases such as the self-confirmation bias require the weights of the updating rule (\ref{C-Bay}) to depend on the realized signal $x_{n}$ at period $n$ and, possibly, the belief at period $n-1$ (see, e.g., \cite{rabin1999}). Now, if each $\gamma_{n}(x,q)$ is an arbitrary function of $(x,q)$, there is little hope in characterizing the belief process. As  \cite{epstein2010non} have already shown, there are instances, where the process converges to a Dirac on a wrong state, even when the model is correctly specified. Similarly, as the blue curve in Figure \ref{fig:obs-dep-weight} shows, the process may oscillate perpetually.\footnote{To draw Figure \ref{fig:obs-dep-weight}, we have assumed that $X = \{0,1\}$, $\Theta=\{\theta,\theta'\}$, $p^*(1)=2/3$, $p_{\theta}(1)=9/10 = p_{\theta'}(0)$, and $\gamma_n(x,q)=1$ if  $(B(x,q)-q)(1/2-q)>0$, and $1/n$, otherwise, with $q \in [0,1]$ the belief about $\theta$. In words, the agent updates as a Bayesian when the observation contradicts his belief and is conservative, otherwise.} 

\begin{figure}[h!]
\centering
\includegraphics[scale=0.8]{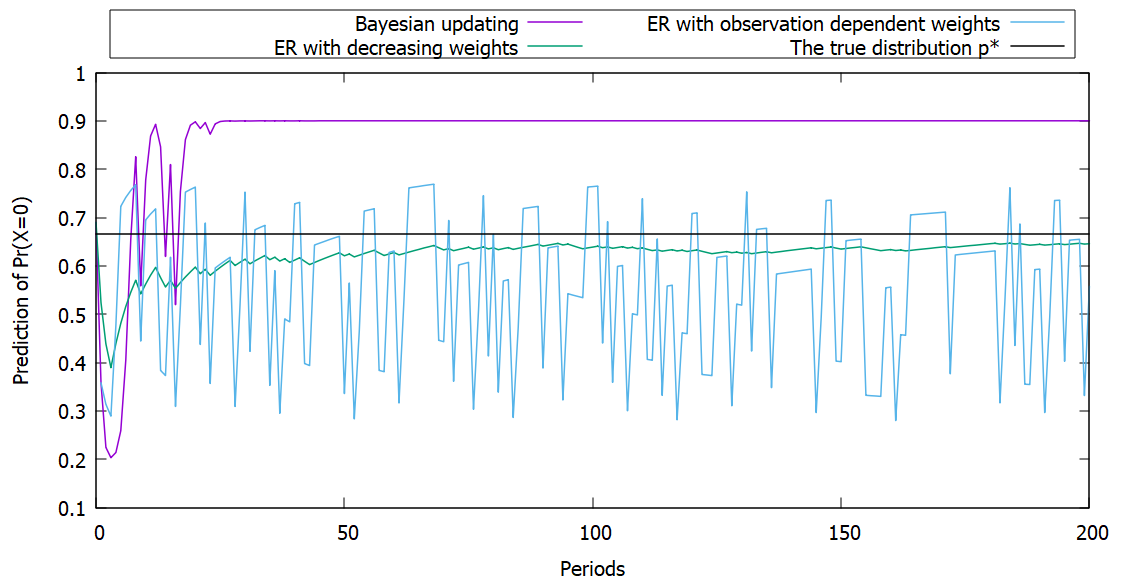}
\caption{Observation-dependent weights and oscillation}
\label{fig:obs-dep-weight}
\end{figure}

There is, however, a special case we can deal with.  Assume that the weights can be written as 
$\gamma_n(x)=\gamma_n \times \alpha(x)$, where $\alpha(x) \in (0,1]$ for all $x \in X$. With this formulation, the weights are independent of the current belief. In addition, the relative weights $\gamma_n(x)/\gamma_n(y)$ are independent of the number $n$ of observations. If the agent reacts more to the observation $x$ than $y$, he continues to do so throughout. With this formulation, we can rewrite the Robbins-Monro algorithm as:
\begin{align*}
q_{n+1}-q_n = \gamma_{n+1}\left(\widehat{H}(q_n) + \widehat{U}_{n+1}\right),
\end{align*}
where $\widehat{H}$ is the function from  $\mathbf{S}$ to $\mathbb{R}^{|\Theta|}$, whose $\theta$-coordinate is 
\begin{align*}
\widehat{H}_{\theta}(q):=q_{\theta}\left(\sum_{x}p^*(x)\alpha(x)\left(\frac{p_{\theta}(x)}{\sum_{\theta'}q_{\theta'}p_{\theta'}(x)}-1\right)\right),
\end{align*}
and $(\widehat{U}_n)_n$ the bounded martingale difference 
\begin{align*}
\widehat{U}_{n+1} :=\alpha(x_{n+1})B(q_n,x_{n+1}) - \mathbb{E}_{p^*}[\alpha(x_{n+1})B(q_n,x_{n+1})]. 
\end{align*}

The analysis is then identical to the one in Section \ref{sec:main}. In particular, if we denote  $p_{\alpha}^*(x): = \frac{p^*(x)\alpha(x)}{\sum_{x'}p^*(x')\alpha(x')}$ the \emph{as-if} data generating process, it is immediate to see that the predictive process converges almost surely to:
\begin{align*}
\arg\max_{p \in \co \mathcal{P}}\sum_{x \in X} p_{\alpha}^*(x) \log p(x).
\end{align*}
In the special case where $p_{\alpha}^* \in \co \mathcal{P}$, the predictive process converges to $p_{\alpha}^*$, which can be closer to $p^*$ than any of the $p_{\theta}$'s are.   

\begin{example}\label{ex:prior-bias} There are two states $A$ and $B$,  two possible observations $a$ and $b$, and $p_A(a) > p_B(a)$, that is, observing $a$ (resp., $b$) is more likely when the state is $A$ (resp., $B$). Assume that the agent initially believes that $A$ is more likely, i.e., $q_0(A) >1/2$ and let $\alpha(a) > \alpha(b)$. This simple setting captures a form of prior dependence, where the agent reacts more to information that confirms his initial belief. It is an extreme form of prior dependence, though, as the agent continues to react more to observation $a$, even  when his current belief $q_n$ indicates that $B$ is the most likely state. The \textit{as-if} data-generating process is:
\begin{align*}
p^*_{\alpha}(a)= \frac{p^*(a)\alpha(a)}{p^*(a)\alpha(a)+p^*(b)\alpha(b)}> p^*(a), 
\end{align*}
i.e., it is as-if the agent learns in an environment where the data-generating process is distorted towards $a$, the observation most indicative of $A$. For a concrete example, let $p^*(a)=2/3$, $p_A(a)=p_B(b)=3/4$, $\alpha(a) =3/4$ and  $\alpha(b)=1/4$, so that $p^*_{\alpha}(a) = 6/7$. The agent's predictive process converges to $p_A$, the same limiting predictive process a Bayesian agent would converge to.  If, however, the prior dependence is weaker, say $\alpha(a)= 2/3$ and $\alpha(b) =1/2$, then the predictive process converges to $(8/11, 3/11)$, which is strictly closer to the true data-generating process than $p_A$. \end{example}

So far, we have assumed that the weights are independent of the current belief $q_n$. Within the context of Example \ref{ex:prior-bias}, suppose instead that we write the updating weights as $\gamma_{n+1} \times\alpha(x_{n+1},q_n)$, with
$\alpha(a,q) = \alpha^+ =\alpha(b,1-q)$ when $q>1/2$, $\alpha(a,1-q) = \alpha^- =\alpha(b, q)$ when $q <1/2$, and $1 \geq \alpha^{+} > \alpha^{-} \geq 0$. When $q =1/2$, let $\alpha(a,1/2)=\alpha(b,1/2) =\alpha$. Thus, if the agent observes $a$ (resp. $b$) when he currently believes $A$ (resp. $B$) to be the most likely state, he revises his belief towards the Bayesian posterior more than when he observes $b$ (resp., $a$). This is a weaker form of ``prior'' dependence: The agent reacts more to information at period $n+1$ that confirms the belief $q_n$ they had at the end of period $n$. This updating process continues to be a Robbins-Monro algorithm with the function $\widehat{H}$:
\begin{align*}
\widehat{H}(q) := 
\begin{cases}
 p^*(a)\left[\frac{p_A(a)q}{p_A(a)q+p_B(a)(1-q)}\right]\alpha^{+} + p^*(b)  \left[\frac{p_A(b)q}{p_A(b)q+p_B(b)(1-q)}\right]\alpha^{-}, & q >1/2, \\
 p^*(a)\left[\frac{p_A(a)}{p_A(a)+p_B(a)}\right]\alpha + p^*(b)  \left[\frac{p_A(b)}{p_A(b)+p_B(b)}\right]\alpha, & q =1/2, \\  
  p^*(a)\left[\frac{p_A(a)q}{p_A(a)q+p_B(a)(1-q)}\right]\alpha^{-} + p^*(b)  \left[\frac{p_A(b)q}{p_A(b)q+p_B(b)(1-q)}\right]\alpha^{+}, & q <1/2. \\
\end{cases}
\end{align*}

For a concrete example, let $p^*(a)=2/3$, $p_A(a)=p_B(b)=3/4$, $\alpha^+ =3/4$, $\alpha^-=1/4$, and $\alpha=1/2$. The function $\widehat{H}$ is then
\begin{align*}
\widehat{H}(q):= 
\begin{cases}
\frac{2}{3}\left[\frac{3q}{1+2q} -q \right]\frac{3}{4} +   \frac{1}{3}\left[\frac{q}{3-2q}-q \right]\frac{1}{4} & q>1/2, \\
\frac{1}{24}  & q=1/2, \\
\frac{2}{3}\left[\frac{3q}{1+2q} -q \right]\frac{1}{4} +   \frac{1}{3}\left[\frac{q}{3-2q}-q \right]\frac{3}{4} & q<1/2.
\end{cases}
\end{align*}

We plot the function $\widehat{H}$ in Figure \ref{fig:self-conf}.  It is immediate to verify that $\widehat{H}$ has three  zeroes: $q=0$, $q=3/10$, and $q=1$, the latter two being locally stable. 

\begin{figure}[h!]
\centering
\includegraphics{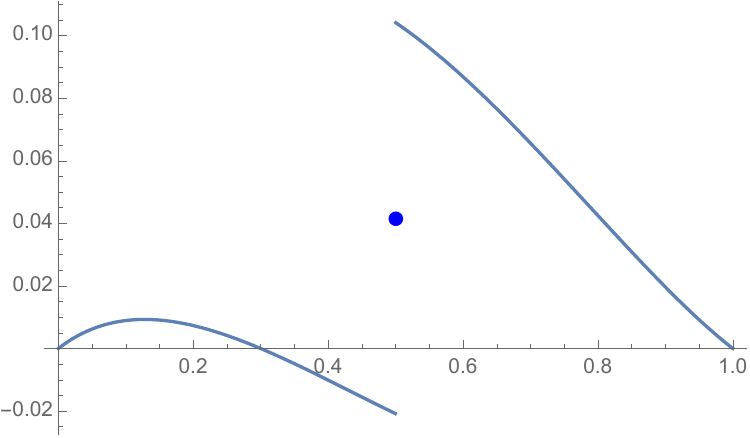}
\caption{The function $\widehat{H}$}\label{fig:self-conf}
\end{figure}

\medskip

The local stability of $q=3/10$ and $q=1$ is not surprising. Since $\alpha(x,\cdot)$ is constant in $q$ for all $q <1/2$, the cross-entropy 
\begin{align*}
\sum_{x \in \{a,b\}}\frac{p^*(x)\alpha(x,0)}{\sum_{x' \{a,b\}}p^*(x')\alpha(x',0)}\log\left(q p_A(x) + (1-q) p_B(x)\right)
\end{align*}
is maximized with respect to $q \in [0,1/2)$ in a stable zero of $\widehat{H}$, restricted to that sub-domain. It is immediate to verify that $q=3/10$ is indeed the maximizer. A symmetric argument applies for all  $q>1/2$.\medskip 

To sum up, our main insight, that  some non-Bayesian updating can outperform Bayesian updating in misspecified environments, is not limited to the setting in Section \ref{sec:main}.

\subsection{Generalized Bayes Rule}
A related updating rule is the generalized Bayes rule, where the revised belief $q_{n+1}(\theta)$ is proportional to 
\begin{align*}
q_n(\theta) p_{\theta}(x_{n+1})^{\gamma_{n+1}},
\end{align*}
when the agent observes $x_{n+1}$, $\gamma_{n+1} \in [0,1]$. The generalized Bayes rule is a special case of the \cite{grether} rule, 
captures some form of underreaction to new information and corresponds to Bayes rule when $\gamma_{n+1}=1$.  Recent contributions in statistics, e.g., \cite{grunwald2017} and \cite{grunwald2020}, advocate the use of the generalized Bayes rule, particularly as a method to correct for inconsistencies in Bayesian inference problems. 
\medskip 

We now argue that its long-run properties are markedly different from the rule (\ref{C-Bay}). To ease the discussion, assume that there exists a unique $p_{\theta^*}$, which maximizes the cross-entropy $V$, that is, $V(p_{\theta^*}) > V(p_{\theta})$ for all $\theta \neq \theta^*$. When the observations up to period $n+1$ are $(x_1,\dots,x_{n+1})$, we have that
\begin{align*}
\log \left(\frac{q_{n+1}(\theta^*)}{q_{n+1}(\theta)}\right) & = \log \left(\frac{q_{0}(\theta^*)}{q_{0}(\theta)}\right)  + \sum_{i=1}^{n+1}  \gamma_i \log \left(\frac{p_{\theta^*}(x_i)}{p_{\theta}(x_i)}\right), \\
& \geq \log \left(\frac{q_{0}(\theta^*)}{q_{0}(\theta)}\right)  + \\
& n \Big[ \Bigl(\max_{i=1,\dots,n+1}\gamma_i\Bigr) \frac{1}{n+1}\sum_{i=1}^{n+1}\log p_{\theta^*}(x_i)-  \Bigl(\min_{i=1,\dots,n+1} \gamma_i\Bigr)\frac{1}{n+1}\sum_{i=1,\dots,n+1}^{n+1}\log p_{\theta}(x_i)\Big].
\end{align*}
The strong law of large numbers implies that $\lim_{n \rightarrow + \infty} \frac{1}{n+1}\sum_{i=1}^{n+1}\log p_{\hat{\theta}}(x_i) = V(p_{\hat{\theta}})$ almost surely for all $\hat{\theta}$, hence the term into bracket converges to a strictly positive number since $V(p_{\theta^*}) >V(p_{\theta})$. Therefore, $\left(\log \left(\frac{q_{n+1}(\theta^*)}{q_{n+1}(\theta)}\right)\right)_n$ converges  to $+\infty$ and, consequently, $(q_{n+1}(\theta^*))_{n}$ converges to one almost surely. The generalized Bayes rule thus converges to the same degenerated belief as Bayes rule, in sharp contrast with the updating rule (\ref{C-Bay}).\medskip 

Thus, despite the similarities of the two rules, their asymptotic properties are fundamentally different. Intuitively, this is because the two rules underreact to new information on two different scales (linear vs. logarithmic), with the updating rule (\ref{C-Bay}) underreacting more and, consequently, protecting more against model's misspecification.

\subsection{Overreaction}

We have assumed that the agent underreacts to news, i.e., $\gamma_n < 1$. If the agent overreacts to news, the convergence to a limit point or cycle is not guaranteed in general. Intuitively, overreaction may lead to big swings in beliefs, which prevents them from stabilizing. For an illustration, see Figure \ref{fig:overreaction}. Clearly, the belief process (the green curve) is quite chaotic, with the beliefs varying wildly.\footnote{To draw Figure \ref{fig:overreaction}, we have assumed that $X = \{0,1\}$, $\Theta=\{\theta,\theta'\}$, $p^*(1)=2/3$, $p_{\theta}(1)=9/10 = p_{\theta'}(0)$, and $\gamma_n = 2 +1/n$. In instances where $(1-\gamma_n)q_n + \gamma_n B(x_{n+1},q_n) > 1$ (resp., $<0$), we let $q_{n+1} = 0.99$ (resp., $0.01$).} There is therefore little hope for general statements. We conjecture, however, that if $(\gamma_n)_n$ converges to 1, then we recover the convergence to the closest distribution in $\mathcal{P}$ to $p^*$.

\begin{figure}[h!]
\centering
\includegraphics[scale=0.7]{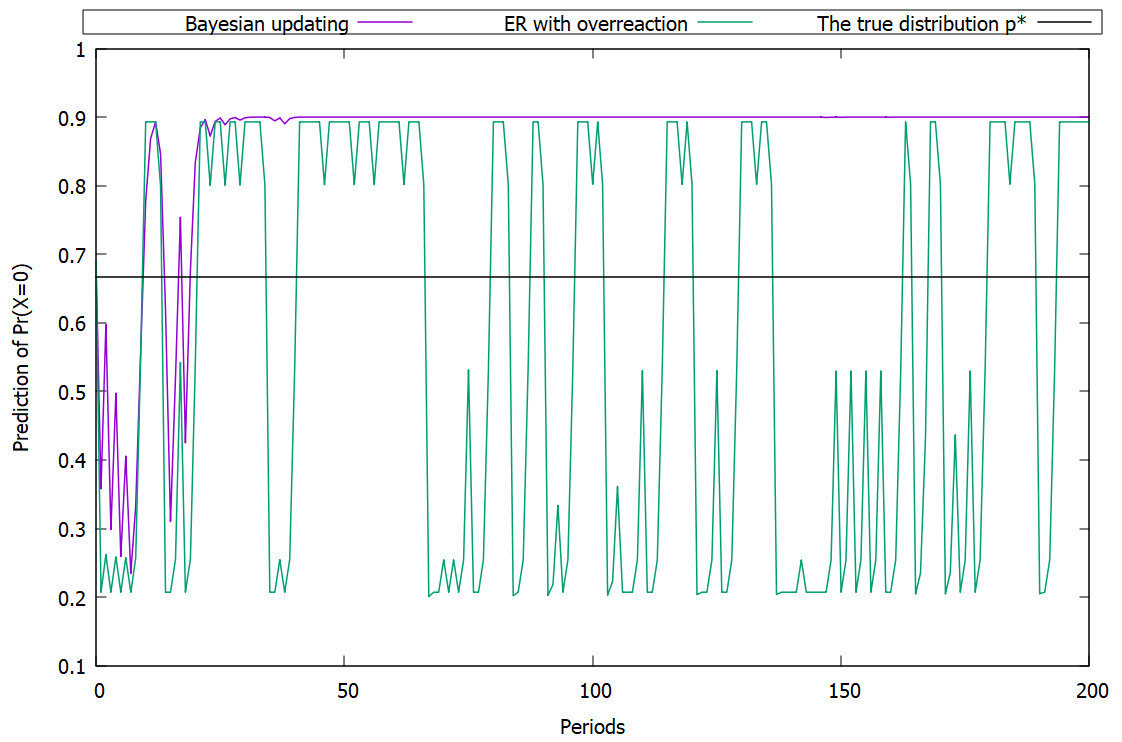}
\caption{Overreaction and Oscillation}\label{fig:overreaction}
\end{figure}

\section{Literature}\label{sec:lit}

Two closely related papers are \cite{heidhues2021} and \cite{frick2023belief}.  \cite{heidhues2021} consider the following model. In each period $n$, the agent makes a decision $a_n$ and observes a signal $q_n$, which depends on the decision $a_n$ and an (unobservable) external factor $x_n$. The authors assume that $x_n =\theta +\epsilon_n$, where $\theta$ is a fixed parameter and $\epsilon_n$ a realization of a  normal distribution with mean zero and precision $\rho_{\epsilon}$. (The realizations are independent.)  When the agent makes the decision $a_n$ and the external factor is $x_n$, the signal is $q_n=Q(a_n,x_n)$, with $Q$ strictly monotonic in $x$ for all $a$. The agent, however, perceives the relationship between decisions, external factors and signals to be $\widehat{Q}$, also strictly monotonic in $x$ for all $a$. The agent is Bayesian and initially believes that the unknown mean $\theta$ is normally distributed with  mean zero and precision $\rho_0$. Thus, upon observing $q_n$ after having chosen $a_n$, the agent thinks that the realized external factor is $\widehat{x}_n$, the unique solution to $\widehat{Q}(a_n, x)=q_n$, which may differ from the truly realized $x_n$. At period $n$, the agent's expectation $\widehat{\theta}_n$ of the mean therefore satisfies: 
\begin{align*}
\widehat{\theta}_n = \widehat{\theta}_{n-1} + \gamma_n [\widehat{x}_n-\widehat{\theta}_{n-1}], \text{where\;} \gamma_n=\frac{
\rho_{\epsilon}}{n \rho_{\epsilon} + \rho_{0}}.
\end{align*}
\cite{heidhues2021} rewrite the evolution of $\widehat{\theta}_n$ as a Robbins-Monro algorithm and prove almost sure convergence. (This discussion sidesteps an important issue, namely that, $\widehat{x}_n$ depends on the decision made $a_n$.) While \cite{heidhues2021} frame their problem as an active learning problem, they also explain how we can reframe it as a learning problem with non-Bayesian updating. For instance, the model accommodates underreaction for specifications of $Q$ and $\widehat{Q}$ such that $x_n > \widehat{x}_n >0$ and $x_n < \widehat{x}_n<0$. The model can also accommodate base-rate neglect and the confirmatory bias. If we interpret any deviation from $x_n$ as a departure from Bayesian updating, as the authors do, then Bayesian updating corresponds to $x_n=\widehat{x}_n$,  that is, for all decisions made, the agent's model is correct  (and, thus, learns $\theta$).  Unlike our framework, their framework is not flexible enough to cleanly decouple the updating rule from the agent's misspecification.  We stress that non-Bayesian updating in misspecified models was not their focus.  We differ in two other important ways. First, we do not restrict attention to  a particular family of distributions, such as the normal distributions. Second, we not only prove that the belief process converges almost surely, but also fully characterize the limit set.  \cite{heidhues2021} only prove convergence. In common, we use the ODE method to characterize the limit set of our stochastic approximation algorithm. See \cite{BerBraFau20}, \cite{esponda2021asymptotic}, \cite{gagnon2022learning} and \cite{danenberg2024} for some other recent applications of stochastic approximation in Economics and \cite{marcet1989} for an earlier application.  

\medskip

\cite{frick2023belief} propose a general model of Bayesian learning, where the true and perceived data generating processes $p_{q}^{*} \in \Delta(X)$ and $p_{\theta,\, q} \in \Delta(X)$, $\theta \in \Theta$, may depend on the current belief $q \in \mathbf{S}$.  The model is flexible enough to accommodate some departures from Bayesian updating as well as several models of active and social learning (in the latter case the dependence on the current beliefs is through the actions chosen). In particular,  as we have already argued,  the updating rule (\ref{C-Bay}) induces the same posteriors than the Bayesian updating of the ``as-if'' data generating processes  $p^{\gamma}_{\theta,q}\in \Delta(X)$, where   $p^{\gamma}_{\theta,q}(x):= (1-\gamma)(\sum_{\theta'}q_{\theta'}p_{\theta'}(x)) + \gamma p_{\theta}(x)$ for all $(x,\theta)$, when the current belief is $q$. The results of \cite{frick2023belief} are not applicable to our problem, however. First, their model cannot capture weights that depend on the number $n$ of observations, as we do. Second, and most importantly, their results provide conditions for the (in)stability of degenerated beliefs $\delta_{\theta}$, $\theta \in \Theta$.  None of their results addresses the stability properties of interior beliefs.  For other related papers on learning in misspecified models, see \cite{esponda2016berk}, \cite{fudenberg2017active}, \cite{heidhues2018unrealistic}, \cite{frick2020misinterpreting}, \cite{fudenberg2021limit}, \cite{bohren2021learning}, and \cite{lanzani2022}. 

 \medskip 
  
Relatedly, \cite{frick2024welfare} compare misspecified models in the spirit of \cite{blackwell51}. More precisely, consider two statistical models 
$\mathcal{P}^1:=(p_{\theta}^1)_{\theta}$ and $\mathcal{P}^2:=(p_{\theta}^2)_{\theta}$ and assume that the true model is $\mathcal{P}^*:=(p^*_{\theta})_{\theta}$.  The idea is to compare the expected payoff the agent would obtain if his decisions are based on either model. 
Formally, let $\mathcal{F}$ be a non-empty set of payoff acts $f: \Theta \rightarrow \mathbb{R}$ and $f_{\mathcal{F}}^i(x_1,\dots,x_n)$ a solution to $\max_{f \in \mathcal{F}}\mathbb{E}[f | (x_1,\dots,x_n), \mathcal{P}^i]$, that is, $f_{\mathcal{F}}^i(x_1,\dots,x_n)$ maximizes the agent's expected payoff given the observations $(x_1,\dots,x_n)$ and the model $\mathcal{P}^i$. (The authors assume Bayesian updating, but as already explained, some choices of $\mathcal{P}^i$ are equivalent to non-Bayesian updating.)  The model $\mathcal{P}^1$ outperforms the model $\mathcal{P}^2$ if there exists $n^*$ such that for all $n \geq n^*$, $\mathbb{E}_{\mathcal{P}^*}[f_{\mathcal{F}}^1(x_1,\dots,x_n)] \geq \mathbb{E}_{\mathcal{P}^*}[f_{\mathcal{F}}^2(x_1,\dots,x_n)]$ for all decision problems $\mathcal{F}$. \cite{frick2024welfare} provide tight conditions for one model to outperform the other.  There are no difficulties to adapt their definitions to our framework. It is, however, unclear how to adapt their results, since the updating rule (\ref{C-Bay}) necessitates a model, which changes with the current belief and the number of observations. Intuitively, though,  when $p^* \in \co \mathcal{P}$, the updating rule (\ref{C-Bay}) converges  to the true data-generating process, while Bayesian updating does not. For large enough $n$, the updating rule (\ref{C-Bay}) therefore outperforms Bayesian updating, in the sense of \cite{frick2024welfare}. Even when $p^* \notin \co \mathcal{P}$, the updating rule (\ref{C-Bay}) comes closer to the true data-generating process, hence there are decision problems (e.g., scoring problems), where it again outperforms Bayesian updating. \medskip

Finally, we have established that the updating rule (\ref{C-Bay})  induces (random) sequences of beliefs converging to solutions of 
\begin{align*}
\max_{q \in \Delta(\Theta)}\sum_{x \in X}p^*(x) \ln\left(\sum_{\theta}q_{\theta}p_{\theta}(x)\right). \tag{$\mathcal{M}$}
\end{align*}
This maximization problem ($\mathcal{M}$) has a well-known antecedent in Information Theory and Finance, where it is known as the \cite{kelly1956} problem.\footnote{We thank Jakub Steiner and Larry Samuelson for drawing our attention to this literature.} In Kelly model, there is an investor with initial wealth $W_0$, $|\Theta|$ assets, and $|X|$ states, with $p^*$ the distribution over states. States are i.i.d drawn. Asset $\theta$ returns $p_{\theta}(x)$, when the state is $x$. At each period, the investor chooses a portfolio $q$, with $q_{\theta}$ the fraction of the current wealth allocated to asset $\theta$. Kelly assumes that the investor's objective is to maximize the long-term growth rate of wealth and shows that there exists an optimal stationary strategy that consists in choosing the portfolio $q^*$ at each period, where $q^*$ is a solution to ($\mathcal{M}$). For a detailed exposition, see \cite{cover-thomas}. 

More closely related to our work, \cite{cover1984} and \cite{cover-gluss1986} develop algorithms for solving ($\mathcal{M}$). Neither, however, coincides with the updating rule (\ref{C-Bay}). \cite{cover1984} assumes that the distribution $p^*$ is known and proposes a gradient-ascent procedure. The procedure depends on $p^*$ and, thus, differs from the procedure (\ref{C-Bay}). Like us,  \cite{cover-gluss1986} assume that $p^*$ is unknown. Their approach, however, differs substantially from ours. Roughly, their algorithm computes sequences of solutions to ($\mathcal{M}$), substituting $p^*$ for the corresponding empirical distributions.

\section{Proofs}
\subsection{Proof of Theorem \ref{th:limitset}}
We equip the  space $\mathbf{S}$ with the metric $d(q,q') := \max_{\theta \in \Theta} |q_{\theta} - q'_{\theta}|$.\medskip

The proof uses methods and results from stochastic approximation theory. We refer the reader to \cite{Ben99} for a short exposition and to \cite{borkar2008} and \cite{kushner-yin-2003} for book-length treatments. \medskip 

The roadmap is as follows. We first rewrite the updating rule (\ref{C-Bay}) as a Robbins-Monro algorithm, with associated ODE $\dot{q}(t) = H(q(t))$. We then analyze the associated ODE. In particular, we show that the cross-entropy map $V$ is a Lyapunov function for the associated flow and that the set $E$ can be decomposed into finitely many closed convex (hence, connected) components (the decomposition is unique). By Proposition 6.4 in \cite{Ben99}, the (random) limit set of $(q_n)_n$ is included in a closed connected subset of $E$. Finally, we show that the probability to converge to the component that maximizes the cross-entropy is one. 

\medskip 

We rewrite the updating rule (\ref{C-Bay}) as: 
\begin{align*}
q_{n+1} & = q_n + \gamma_{n+1}\left[ B (q_n,x_{n+1}) -q_n\right] \\
& = q_n + \gamma_{n+1}\left[ H(q_n) + U_{n+1}\right],
\end{align*}
where $(U_n)_n$ is the bounded martingale difference with $U_{n+1} :=B(q_n,x_{n+1}) - \mathbb{E}_{p^*}[B(q_n,\mathbf{x}_{n+1}]$ when the realized signal is $x_{n+1}$, and $H(q_n):= \mathbb{E}_{p^*}[B(q_n,\mathbf{x}_{n+1})] -q_n$.

\noindent The ODE associated with the random process $(q_n)_n$  is
\begin{equation} \label{eq:dynamics}
\dot{q}(t) = H(q(t)).
\end{equation}
For further reference, note that the component-wise system is given by
\begin{equation}
\dot{q}_{\theta}(t) =   q_{\theta}(t) (-1 + f_{\theta}(q(t))), \; \, \text{with} \; f_{\theta}(q):=\sum_{x}p^*(x)\frac{p_{\theta}(x)}{\sum_{\theta'}q_{\theta'}p_{\theta'}(x)},  \forall \theta \in \Theta.
\end{equation}

We first show that the cross-entropy map $V$ grows along the trajectory of the dynamics (\ref{eq:dynamics}). 

\begin{lemma}\label{lem:V-Lyapunov}
    $V$ is a strict Lyapunov function for the flow $\phi_t$ associated to (\ref{eq:dynamics}),
    meaning that, along any non-stationary solution curve $(q(t))_{t \geq 0}$ we have $\frac{d}{dt} V(q(t)) > 0$. 
\end{lemma}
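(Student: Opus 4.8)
The plan is to differentiate $V$ along a solution curve and show the derivative is a weighted variance, hence strictly positive unless we are at a rest point. Writing $m_\theta(x) := \frac{p_\theta(x)}{\sum_{\theta'} q_{\theta'} p_{\theta'}(x)}$ and $p_q(x) := \sum_{\theta'} q_{\theta'} p_{\theta'}(x)$, the chain rule gives
\begin{align*}
\frac{d}{dt} V(q(t)) = \sum_x p^*(x) \frac{\sum_\theta \dot q_\theta p_\theta(x)}{p_q(x)} = \sum_x p^*(x) \sum_\theta q_\theta (f_\theta(q) - 1) m_\theta(x),
\end{align*}
using the component-wise form $\dot q_\theta = q_\theta(f_\theta(q)-1)$. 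First I would interchange the two sums and recognise that $\sum_x p^*(x) m_\theta(x) = f_\theta(q)$, so the expression becomes $\sum_\theta q_\theta (f_\theta(q)-1) f_\theta(q)$. Then I would observe that $\sum_\theta q_\theta(f_\theta(q)-1) = \sum_\theta q_\theta f_\theta(q) - 1$, and that $\sum_\theta q_\theta f_\theta(q) = \sum_x p^*(x) \sum_\theta q_\theta m_\theta(x) = \sum_x p^*(x) \cdot 1 = 1$; hence $\sum_\theta q_\theta(f_\theta(q)-1) = 0$. Therefore
\begin{align*}
\frac{d}{dt} V(q(t)) = \sum_\theta q_\theta f_\theta(q)^2 - \sum_\theta q_\theta f_\theta(q) = \sum_\theta q_\theta f_\theta(q)^2 - \left(\sum_\theta q_\theta f_\theta(q)\right)^2,
\end{align*}
which is exactly the variance of the random variable $\theta \mapsto f_\theta(q)$ under the probability weights $q$. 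By Jensen (or Cauchy–Schwarz) this is $\geq 0$, with equality if and only if $f_\theta(q)$ is constant (equal to $1$) across all $\theta$ in the support of $q$.

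The remaining step is to argue that equality in the variance bound forces $q$ to be a rest point of the ODE, i.e. $H(q) = 0$. Indeed, $\frac{d}{dt}V(q(t)) = 0$ iff $f_\theta(q) = 1$ for every $\theta$ with $q_\theta > 0$, which makes $H_\theta(q) = q_\theta(f_\theta(q)-1) = 0$ for those $\theta$; and for $\theta$ with $q_\theta = 0$ we trivially have $H_\theta(q) = 0$. Hence the derivative vanishes precisely on $E$, so along any non-stationary solution curve $\frac{d}{dt}V(q(t)) > 0$, which is the claim. A small technical point worth noting is that all denominators $p_q(x)$ are bounded away from zero on $\mathbf{S}$ since every $p_\theta$ has full support $X$, so all the manipulations above are legitimate and $V$ is smooth on a neighbourhood of $\mathbf{S}$.

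The only mild obstacle is bookkeeping: making sure the interchange of summations and the identity $\sum_\theta q_\theta f_\theta(q) = 1$ are stated cleanly, and being careful about the support of $q$ when reading off the equality case (a component $q_\theta = 0$ contributes nothing to the variance regardless of $f_\theta(q)$, and also nothing to $H_\theta(q)$, so there is no inconsistency). I would also remark that this computation simultaneously re-proves the well-known fact that the rest points of the replicator-type dynamics $\dot q_\theta = q_\theta(f_\theta(q)-1)$ on the interior coincide with the critical points of $V$, which is used in the statement of Lemma~\ref{lem:decomposition}.
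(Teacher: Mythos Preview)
Your proof is correct and essentially identical to the paper's: both compute $\frac{d}{dt}V(q(t)) = \sum_\theta q_\theta f_\theta(q)(f_\theta(q)-1)$, use the identity $\sum_\theta q_\theta f_\theta(q)=1$, and recognise the result as the variance of $\theta \mapsto f_\theta(q)$ under $q$ (the paper writes it as $\sum_\theta q_\theta(f_\theta(q)-1)^2$, you as $\sum_\theta q_\theta f_\theta(q)^2 - (\sum_\theta q_\theta f_\theta(q))^2$, which are the same quantity). The equality case and the identification with $E$ are handled the same way.
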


\begin{proof}[ of Lemma \ref{lem:V-Lyapunov}] Note that 
\[ \frac{d}{dt} V(q(t)) = \left<\nabla V (q(t)) , \dot{q}(t)  \right> =  G(q(t))\] for all $t \geq 0$, where $G(q) := \sum_{\theta} \frac{\partial V}{\partial q_{\theta}}(q) q_{\theta} (-1 + f_{\theta}(q))$. 

Since \[\frac{\partial V}{\partial q_{\theta}}(q) = \sum_{x \in X} p^*(x) \frac{p_{\theta}(x)}{\sum_{\theta' \in \Theta} p_{\theta'}(x) q_{\theta'}} = f_{\theta}(q),\] we have \[G(q) = \sum_{\theta \in \Theta} q_{\theta} f_{\theta}(q) (-1 + f_{\theta}(q)) = \sum_{\theta \in \Theta} q_{\theta} \left(-1 + f_{\theta}(q) \right)^2 + \sum_{\theta \in \Theta} q_{\theta} f_{\theta}(q) -\sum_{\theta \in \Theta} q_{\theta}.\]

Notice that $\sum_{\theta \in \Theta} q_{\theta} f_{\theta}(q) = 1$, and since $\sum_{\theta \in \Theta} q_{\theta} = 1$, we have $G(q) \geq 0$, with equality if and only if $f_{\theta}(q) = 1$ for all $\theta$ such that $q_{\theta}>0$. Hence, $G(q) \geq 0$ with equality if and only if $q \in E$, which concludes the proof.
 \end{proof}

Lemma \ref{lem:V-Lyapunov} implies that the only invariant sets for the flow are connected components of $E$, the set of zeroes of $H$. By Proposition 6.4 in \cite{Ben99}, it also implies that $\mathcal{L}(q_n)$, the (random) limit set of $(q_n)_n$, is included in a closed connected subset of $E$.\footnote{The decomposition of a set into its connected components is always unique.} We now characterize $E$.

\begin{lemma} \label{lem:decomposition}

\begin{itemize}
 \item[(i)] The set $E$
      is a finite union of disjoint non-empty compact and convex components:  there exists $K < +\infty$ such that $E = \bigcup_{k=1}^{K}C_k$, where each $C_k$ is a non-empty compact convex subset of $\mathbf{S}$ and $C_k \cap C_{k'} = \emptyset$ for all $(k,k')$ with $k \neq k'$.    
\item[(ii)] The cross-entropy $V$ takes the constant value $V(C_k)$ on each component $C_k$, and there exists a unique $k^*$ such that $V(C_{k^*}) > V(q)$ for all $q \notin C_{k^*}$.
\end{itemize}
   \end{lemma}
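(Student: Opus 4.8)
# Proof Proposal for Lemma~\ref{lem:decomposition}

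The plan is to exploit the fact that $H(q)=0$ encodes the Kuhn-Tucker conditions for maximizing the strictly concave (in the induced distribution) cross-entropy $V$ over faces of the simplex. First I would recall that $H_\theta(q) = q_\theta(f_\theta(q)-1)$, so $q\in E$ iff for every $\theta$ either $q_\theta = 0$ or $f_\theta(q)=1$. Fix a face $F_{\widehat\Theta} = \{q\in\mathbf{S}: q_\theta>0 \text{ iff } \theta\in\widehat\Theta\}$ and consider $\max_{q\in\overline{F_{\widehat\Theta}}} V(q)$. Since $V(q)$ depends on $q$ only through the mixture $\sum_\theta q_\theta p_\theta$ and $x\mapsto \log(\cdot)$ is strictly concave while $q\mapsto \sum_\theta q_\theta p_\theta$ is affine, the set of maximizers $C_{\widehat\Theta}$ (when the maximum is attained in the relative interior, i.e.\ with full support $\widehat\Theta$) induces a single distribution $p\in\Delta(X)$; hence $V$ is constant on it, and $C_{\widehat\Theta}$ is the preimage of that distribution under an affine map intersected with a face, so it is convex and compact. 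I would then check that $q\in C_{\widehat\Theta}$ iff $q\in E$ and $\mathrm{supp}(q)=\widehat\Theta$: the first-order conditions for interior-of-face optimality of the concave function $V$ are exactly $\partial V/\partial q_\theta = f_\theta(q)$ constant on $\widehat\Theta$, and combined with $\sum_\theta q_\theta f_\theta(q)=1$ this forces $f_\theta(q)=1$ on $\widehat\Theta$. Since there are finitely many faces, $E$ is a finite union of such sets $C_{\widehat\Theta}$, one per face $\widehat\Theta$ for which the maximizer has full support $\widehat\Theta$; at least one is nonempty by continuity and compactness, and each vertex $\delta_{\{\theta\}}$ trivially lies in $E$.

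The subtlety — and the main obstacle — is that these per-face sets $C_{\widehat\Theta}$ are not automatically disjoint, nor do distinct ones always induce distinct distributions, so I cannot simply declare them the components $C_k$. Two faces can yield maximizers inducing the same $p\in\Delta(X)$ (e.g.\ when $p^*$ has several convex decompositions supported on different parameter subsets, as in Examples~\ref{ex:continuum-attractors} and \ref{ex:full-family}). The fix: define an equivalence relation on the nonempty $C_{\widehat\Theta}$'s by declaring two of them related if they induce the same distribution $\sum_\theta q_\theta p_\theta$, and let $C_k$ be the convex hull (equivalently, the union) of the $C_{\widehat\Theta}$'s in one class. I would then argue that (a) the union of sets $C_{\widehat\Theta}$ all mapping to a common $p$ is itself convex — this is where convexity of each $C_{\widehat\Theta}$, not just connectedness, is essential: the union of two convex subsets of the affine slice $\{q:\sum_\theta q_\theta p_\theta = p\}\cap\mathbf{S}$ need not be convex in general, but here each is a face-intersection and their union over all faces contained in that slice is exactly the whole slice intersected with $\mathbf{S}$, which is convex; (b) distinct classes induce distinct distributions, hence are disjoint (a point of $E$ induces exactly one distribution); (c) each $C_k$ is compact as a closed subset of the compact $\mathbf{S}$. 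Uniqueness of the decomposition follows because the $C_k$ are precisely the fibers of the map $E\to\Delta(X)$, $q\mapsto\sum_\theta q_\theta p_\theta$, which restricted to $E$ has finitely many values and convex fibers — any decomposition into disjoint convex pieces on which $V$ (equivalently the induced distribution) is constant must refine and be refined by this one.

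For part (ii): $V$ is constant on each $C_k$ by construction (constant induced distribution). For the unique maximizing component, let $p^{**}$ maximize $q\mapsto V(q)$ over all of $\mathbf{S}$ — attained by continuity and compactness — and note the maximizer set is $\{q\in\mathbf{S}:\sum_\theta q_\theta p_\theta = p^{**}\}$ since $V$ is strictly concave as a function of the induced mixture and the mixture map is affine onto $\co\mathcal{P}$; this set is one of the $C_k$, say $C_{k^*}$, because global maximizers are in particular relative-interior-of-their-face maximizers, hence lie in $E$. By strict concavity in the mixture, any $q\notin C_{k^*}$ has $\sum_\theta q_\theta p_\theta \neq p^{**}$ and therefore $V(q) < V(p^{**}) = V(C_{k^*})$, giving the strict inequality $V(C_{k^*}) > V(q)$ for all $q\in\mathbf{S}\setminus C_{k^*}$, and in particular uniqueness of $k^*$. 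I expect the disjointness/convexity bookkeeping in part (i) to be the only place requiring genuine care; everything else is a routine consequence of strict concavity of $V$ composed with the affine mixture map.
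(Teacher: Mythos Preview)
Your overall strategy—identifying the components $C_k$ as the fibers of the mixture map $L:q\mapsto\sum_\theta q_\theta p_\theta$ restricted to $E$—is correct and in fact slightly cleaner than the paper's route. The paper instead defines an equivalence relation on the per-face maximizer sets $S_{\widehat\Theta}$ by chaining together those at Hausdorff distance zero, and only afterward proves that $L$ is constant on each resulting class; the two decompositions coincide, but your framing makes disjointness and compactness immediate (each $C_k=E\cap L^{-1}(p_k)$ is a closed subset of a compact set), whereas the paper has to argue closedness by a separate limit argument.

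There is, however, a genuine gap in your convexity argument. You assert that the union of the $C_{\widehat\Theta}$'s inducing a common $p$ ``is exactly the whole slice $\{q:\sum_\theta q_\theta p_\theta = p\}\cap\mathbf{S}$.'' This is false. A point $q'$ in the slice lies in $E$ only if $f_\theta(q')=1$ for every $\theta\in\mathrm{supp}(q')$, and since $f_\theta$ depends only on $L(q')=p$, this forces $\mathrm{supp}(q')\subseteq\{\theta:\sum_x p^*(x)p_\theta(x)/p(x)=1\}$. Points in the slice with support outside this set are \emph{not} in $E$. Concretely: take $|X|=2$, $p^*=(2/3,1/3)$, $p_1=(1/2,1/2)$, $p_2=(1/4,3/4)$, $p_3=(3/4,1/4)$. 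Then $\delta_{\{\theta_1\}}\in E$ induces $p_1$, and $q'=(0,1/2,1/2)$ also induces $p_1$, but $f_2(q')=5/6\neq 1$, so $q'\notin E$; the fiber of $E$ over $p_1$ is $\{\delta_{\{\theta_1\}}\}$, strictly smaller than the slice.

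The fix is exactly the argument the paper uses: if $q,q'\in E$ both induce $p$, then any $q''=\lambda q+(1-\lambda)q'$ also induces $p$, has $\mathrm{supp}(q'')=\mathrm{supp}(q)\cup\mathrm{supp}(q')$, and $f_\theta(q'')=1$ for every $\theta$ in that union (because $f_\theta$ depends only on $p$ and equals $1$ on both supports). Hence $q''\in E$, and the fiber $E\cap L^{-1}(p)$ is convex without being the full slice. Your argument for part (ii) is fine.
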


\begin{proof}[ of Lemma \ref{lem:decomposition}] For any non-empty subset $\widehat{\Theta}$ of $\Theta$, we write $F_{\widehat{\Theta}}$ for the subset of $\mathbf{S}$:
\[F_{\widehat{\Theta}} := \left\{q \in \mathbf{S}: \; \, q_{\theta}= 0 \text{ iff } \; \theta \notin \widehat{\Theta} \right\}.\]
In particular,  $F_{\Theta} = \mathbf{S}^*$ and $F_{\{\theta\}} = \{\delta_{\theta}\}$. \medskip 

Recall that $q \in E$ if  and only if 
\[q_{\theta}(f_{\theta}(q)-1)=0, \]
for all $\theta$. Consider the constrained maximization problem $\max_{q \in F_{\widehat{\Theta}}}V(q)$, that is, 
\[\max_{q \in \mathbb{R}^{|\Theta|}}V(q),\] 
subject to $\sum_{\theta \in \widehat{\Theta}}q_{\theta}=1$, $q_{\theta} > 0$ for all $\theta \in \widehat{\Theta}$ and $q_{\theta}=0$ for all $\theta \notin \widehat{\Theta}$. If a solution exists, for all $\theta \in \widehat{\Theta}$, the first-order condition with respect $q_{\theta}$ satisfies
\begin{align*}
f_{\theta}(q) := \sum_x p^*(x) \frac{p_{\theta}(x)}{\sum_{\theta'}q_{\theta'}p_{\theta'}(x)} = \lambda,
\end{align*}
where $\lambda$ is the Kuhn-Tucker multiplier associated with $\sum_{\theta \in \widehat{\Theta}}q_{\theta}=1$. (The complementary slackness condition implies that the multipliers associated with $q_{\theta} > 0$ for all $\theta \in \widehat{\Theta}$ are all zero.)  Multiplying each equation by $q_{\theta}$ and summing over $\theta \in \widehat{\Theta}$, we get 
\[\sum_{\theta \in \Theta} q_{\theta} f_{\theta}(q) = \lambda \sum_{\theta \in \Theta} q_{\theta} = \lambda. \] Since the left hand-side is equal to $1$, it follows that \[f_{\theta}(q)=1 \text{\;\; for all } \theta \in \widehat{\Theta}.\]

Therefore, at a maximum, we have 
\[q_{\theta}(f_{\theta}(q)-1) = 0, \]
for all $\theta \in \Theta$, i.e., it is a zero of $H$. Since $V$ is concave, the converse is also true by the Kuhn-Tucker-Karush theorem. It follows that $E$ is the union of  the sets  $S_{\widehat{\Theta}}:=\arg\max_{q \in F_{\widehat{\Theta}}}V(q)$ over all possible non-empty subsets $\widehat{\Theta}$ of $\Theta$.  \medskip

Since $V$ is concave, $S_{\widehat{\Theta}}$ is convex for all $\widehat{\Theta}$. We write $V(S_{\widehat{\Theta}})$ for the value taken by $V$ on $S_{\widehat{\Theta}}$. We now construct the decomposition from the sets $S_{\widehat{\Theta}}$. 
\medskip

Let $T := \left\{\widehat{\Theta} \subseteq \Theta: \; \, S_{\widehat{\Theta}} \neq \emptyset  \right\}$. Given $(\widehat{\Theta}, \widehat{\Theta}') \in T \times T$ say that $\widehat{\Theta} \sim \widehat{\Theta}'$ if there exists a family $\{\widehat{\Theta}_k\}_{k=0,\dots,K}$ such that $\widehat{\Theta}=\widehat{\Theta}_0$, $\widehat{\Theta}'=\widehat{\Theta}_K$, and $\mathrm{dist} \left(S_{\widehat{\Theta}_k}, S_{\widehat{\Theta}_{k+1}}\right) = 0$ for $k=0,...,K-1$.\footnote{The distance is the Hausdorff distance between subsets of $\mathbf{S}$, with $\mathbf{S}$ equipped with the total variation metric.} The binary relation $\sim$ is an equivalence relation. Let $(\mathcal{C}_k)_{k=1,...,K}$ be its equivalence classes  (note that $K \leq 2^{|\Theta|} -1$) and  $C_k := \bigcup_{\widehat{\Theta} \in \mathcal{C}_k} S_{\widehat{\Theta}}$. To conclude the proof, we need to show that $C_k$ is closed and convex, and that the global maximum of $V$ is attained on only one of these sets.\medskip

We first show that $C_k$ is closed: let $(q^i)_i$ be a sequence in $C_k$ and suppose that $\lim_{i \rightarrow + \infty} q^i = q$, with $q \in F_{\widehat{\Theta}}$. By continuity, $V(q) = V(C_k)$. Moreover, for $i$ large enough, $q^i_{\theta} >0$, for all $\theta \in \widehat{\Theta}$. Assume by contradiction that $q \notin S_{\widehat{\Theta}}$, then there exists $\tilde{q} \in F_{\widehat{\Theta}}$ such that $V(\tilde{q}) > V(q)$. There exists a sequence $\tilde{q}^i$ such that $\tilde{q}^i$ is in the same face as $q^i$, and $\lim_{i \rightarrow + \infty} \tilde{q}^i = \tilde{q}$. Therefore
\[\lim_{i \rightarrow + \infty} V(q^i) =  V(q) < V(\tilde{q}) = \lim_{i \rightarrow + \infty} V(\tilde{q}^i), \]
implying that, for large enough $i$, $V(\tilde{q}^i)> V(q^i)$. This contradicts the fact that $q^i$ maximizes $V$ on the face it belongs to. Therefore, $q \in S_{\widehat{\Theta}}$ for some $\widehat{\Theta} \in \mathcal{C}_k$.
Consequently, the collection $\{C_k\}$ is composed of  closed, connected,  pairwise disjoint sets, and 
\begin{equation} \label{eq:decomp}
E = \bigcup_{k=1}^K C_k.
 \end{equation}

We now argue that $C_k$ is convex, for $k=1,...,K$. 
\noindent Define $W: \Delta(X) \rightarrow \mathbb{R}$ as follows:
\[p \in \Delta(X) \mapsto  W(p) := \sum_x p^*(x) \log p(x).\] 
Since $p^*$ has full support on $X$, $W$ is strictly concave on $\Delta(X)$.\footnote{More generally, if the support of $p^*$ is not $X$, the argument remains valid if we restrict  $W$ to the support of $p^*$.} Let $L: \mathbf{S} \rightarrow \Delta(X)$ be the linear map $q \mapsto   \left(\sum_{\theta} q_{\theta} p_{\theta}(x)\right)_{x \in X}$. By linearity, $L(S_{\widehat{\Theta}})$ is a convex subset of $\Delta(X)$. By strict concavity of $W$, $W$ is constant on $L(S_{\widehat{\Theta}})$ if, and only if, $L(S_{\widehat{\Theta}})$ is a singleton. Since $V$ is constant on $S_{\widehat{\Theta}}$, and $V = W \circ L$, this implies that $L(S_{\widehat{\Theta}})$ is a singleton.  
Call $p_{\widehat{\Theta}} := \{L(S_{\widehat{\Theta}}) \} \in \Delta(X)$. By continuity of $L$, $p_{\widehat{\Theta}} = p_{\widehat{\Theta}'}$ for $(\widehat{\Theta}, \widehat{\Theta}') \in \mathcal{C}_k \times \mathcal{C}_k$, so that $L$ is constant on $C_k$ (notice that this is stronger than stating that $V$ is constant along $C_k$). Let $(q,q')\in C_k \times C_k$, with $q \in S_{\widehat{\Theta}}, q' \in S_{\widehat{\Theta}'}$, and $\lambda \in (0,1)$. Then, if $q'' := \lambda q + (1-\lambda) q'$, we have 
$L(q'') = L\left(\lambda q + (1-\lambda)q'\right) = \lambda L(q) + (1-\lambda) L(q') = L(C_k)$.
Since  $q$ and $q'$ are zeroes of  $H$, we have
\[1 = f_{\theta}(q) = \sum_x p^*(x) \frac{p_{\theta}(x)}{L(C_k)(x)}, \; \forall \theta \in \widehat{\Theta}, \; \text{ and } \; \, 1 = f_{\theta}(q) = \sum_x p^*(x) \frac{p_{\theta}(x)}{L(C_k)(x)}, \; \forall \theta \in \widehat{\Theta}'.\]
Now, let $\theta$ be such that $q''_{\theta} \neq 0$. then either  $q_{\theta}  \neq  0$ or $q'_{\theta}\neq 0$. In other terms, $\theta \in \widehat{\Theta} \cup \widehat{\Theta}'$.  
Since $\lambda \in (0,1)$, $q''_{\theta}>0$ if and only if $\theta \in \widehat{\Theta} \cup \widehat{\Theta'}$. Moreover, 
\[f_{\theta}(q'') = \sum_x p^*(x) \frac{p_{\theta}(x)}{L(q'')(x)} = \sum_x p^*(x) \frac{p_{\theta}(x)}{L(C_k)(x)} = 1,\]
for all $\theta \in  \widehat{\Theta} \cup \widehat{\Theta'}$. Hence, $q''$ is a zero of $H$ in $S_{\widehat{\Theta} \cup \widehat{\Theta}'}$ and, consequently, $\co(C_k) \subseteq E$. It follows that $\cup_{k'} C_{k'} =E = \cup_{k'} \co(C_{k'})$.  Now, if $C_k \neq \co(C_k)$, then $\co(C_k) \subseteq \cup_{k' \neq k} C_{k'}$, a contradiction with the fact that the components are pairwise disjoint. Therefore, $C_k = \co(C_k)$, that is, $C_k$ is convex.   \medskip

Finally, since $\arg\max_{q \in \mathbf{S}} V(q)$ is a convex set, there exists a unique index $k^* \in \{1,...,K\}$ such that $C_{k^*} = \arg\max_{q \in \mathbf{S}} V(q)$. This concludes the proof. 
 \end{proof}

 The final step consists in proving that the convergence to any $C_k$ with $k \neq k^*$ has probability zero. The following observations play a crucial role. 
Since $C_k$ is convex (and not merely connected), 
the set $\mathcal{C}_k$ admits a maximal element, that is, there exists  $\widehat{\Theta}(k) \in \mathcal{C}_k$ such that $\widehat{\Theta} \subset  \widehat{\Theta}(k)$, for all $\widehat{\Theta} \in  \mathcal{C}_k \setminus \{\widehat{\Theta}(k)\}$. We call $\widehat{\Theta}(k)$ the \emph{support} of $C_k$. By construction, for $k \neq k^*$,  $\widehat{\Theta}(k) \neq \Theta$. We also have: 

\begin{lemma} \label{lem:supports}
For all $k \neq k^*$, the set $\widehat{\Theta}(k^*) \setminus \widehat{\Theta}(k)$ is non-empty.
 \end{lemma}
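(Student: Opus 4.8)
The plan is to argue by contradiction: suppose that for some $k \neq k^*$ we have $\widehat{\Theta}(k^*) \subseteq \widehat{\Theta}(k)$, and derive a violation of the maximality of $V$ on $C_{k^*}$. The key objects are the two distributions $p_{k^*} := L(C_{k^*})$ and $p_k := L(C_k)$ in $\Delta(X)$, where $L(q) = (\sum_\theta q_\theta p_\theta(x))_x$ is the linear map introduced in the proof of Lemma \ref{lem:decomposition}; recall that $L$ is constant on each component and that $V = W \circ L$ with $W$ strictly concave, so $V(C_{k^*}) = W(p_{k^*}) > W(p_k) = V(C_k)$.

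First I would record the first-order (Kuhn--Tucker) characterisation already extracted in Lemma \ref{lem:decomposition}: picking $q^* \in C_{k^*}$ with $\mathrm{supp}(q^*) = \widehat{\Theta}(k^*)$ and $\hat q \in C_k$ with $\mathrm{supp}(\hat q) = \widehat{\Theta}(k)$, we have $f_\theta(q^*) = 1$ for all $\theta \in \widehat{\Theta}(k^*)$ and $f_\theta(\hat q) = 1$ for all $\theta \in \widehat{\Theta}(k)$, where $f_\theta(q) = \sum_x p^*(x) p_\theta(x)/L(q)(x)$. The crucial inequality is the one coming from strict concavity of $W$ along the segment from $\hat q$ to $q^*$: since $p_{k^*} \neq p_k$,
\begin{align*}
0 < \frac{d}{d\tau}\Big|_{\tau = 0^+} V\big((1-\tau)\hat q + \tau q^*\big) = \sum_{\theta \in \Theta} (q^*_\theta - \hat q_\theta) f_\theta(\hat q) = \sum_{\theta \in \Theta} q^*_\theta f_\theta(\hat q) - 1,
\end{align*}
where the last equality uses $\sum_\theta \hat q_\theta f_\theta(\hat q) = \sum_\theta \hat q_\theta = 1$ (as $\hat q$ is a zero of $H$). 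Now comes the contradiction: if $\widehat{\Theta}(k^*) \subseteq \widehat{\Theta}(k)$, then every $\theta$ in the support of $q^*$ lies in $\widehat{\Theta}(k)$, so $f_\theta(\hat q) = 1$ for all such $\theta$, whence $\sum_\theta q^*_\theta f_\theta(\hat q) = \sum_\theta q^*_\theta = 1$, contradicting the strict inequality above. Therefore $\widehat{\Theta}(k^*) \not\subseteq \widehat{\Theta}(k)$, i.e. $\widehat{\Theta}(k^*) \setminus \widehat{\Theta}(k) \neq \emptyset$.

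I expect the only genuinely delicate point to be justifying that the directional derivative of $V$ at $\hat q$ in the direction $q^* - \hat q$ is strictly positive, which requires $p_{k^*} \neq p_k$ (so that the strictly concave $W$ is not already at its max value along that segment) together with the fact that $V$ is differentiable at $\hat q$ even though $\hat q$ may lie on the boundary of $\mathbf{S}$ — this is fine because $p^*$ and every $p_\theta$ have full support on $X$, so $L(\hat q)$ has full support and $\nabla V(\hat q)$ is well defined with $\partial V/\partial q_\theta(\hat q) = f_\theta(\hat q)$. One also needs $q^*$ to exist with full support $\widehat{\Theta}(k^*)$, which is exactly the maximal element of $\mathcal{C}_{k^*}$ guaranteed by convexity of $C_{k^*}$ (the "support of $C_k$" defined just before the lemma). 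Everything else is bookkeeping with the Kuhn--Tucker conditions already in hand.
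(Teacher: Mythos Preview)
Your proof is correct. The paper, however, takes a shorter route: assuming $\widehat{\Theta}(k^*) \subseteq \widehat{\Theta}(k)$, it simply observes that $\Delta(\widehat{\Theta}(k^*)) \subseteq \Delta(\widehat{\Theta}(k))$, so $\max_{\Delta(\widehat{\Theta}(k^*))} V \leq \max_{\Delta(\widehat{\Theta}(k))} V$, and then identifies these two maxima with $V(C_{k^*})$ and $V(C_k)$ to obtain $V(C_{k^*}) \leq V(C_k)$, a contradiction. The identification $\max_{\Delta(\widehat{\Theta}(k))} V = V(C_k)$ implicitly uses the concavity of $V$ together with $S_{\widehat{\Theta}(k)} \neq \emptyset$: a maximizer over the relatively open face $F_{\widehat{\Theta}(k)}$ is automatically a maximizer over its closure. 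Your approach instead computes the directional derivative of $V$ at $\hat q$ towards $q^*$ and uses the Kuhn--Tucker identities $f_\theta(\hat q)=1$ on $\widehat{\Theta}(k)$ to force it to vanish, which is exactly the machinery the paper deploys in the very next lemma (Lemma~\ref{lem:C_k}); so your argument is a bit more work here but dovetails nicely with what follows. Either way, the only substantive ingredient is concavity of $V$ plus the strict inequality $V(C_{k^*}) > V(C_k)$.
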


\begin{proof}[ of Lemma \ref{lem:supports}] If it were not the case, we would have $\widehat{\Theta}(k^*) \subseteq \widehat{\Theta}(k)$. Therefore, $\max_{q \in \Delta(\widehat{\Theta}(k^*))} V \leq \max_{q \in \Delta(\widehat{\Theta}(k))
} V$ and, thus, $V(C_{k^*}) \leq V(C_k)$, a contradiction. 
\end{proof}

We are now ready to state and prove the final step in the proof. 

\begin{lemma}\label{lem:C_k}
For all $k \neq k^*$, $\mathbb{P} \left(\mathcal{L}(q_n) \subseteq C_k \right) = 0.$ 
\end{lemma}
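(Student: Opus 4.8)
The plan is to show that each component $C_k$ with $k \neq k^*$ possesses a uniform repelling direction, and then invoke a suitable non-convergence result from stochastic approximation theory. The argument generalizes the heuristic given in the introduction (for the singleton case) to arbitrary convex components, and it rests crucially on Lemma~\ref{lem:supports}: there exists $\hat\theta \in \widehat{\Theta}(k^*) \setminus \widehat{\Theta}(k)$. First I would fix $q^* \in C_{k^*}$ with maximal support $\widehat{\Theta}(k^*)$, and for each $\hat q \in C_k$ consider the directional derivative of the strictly concave map $\tau \mapsto V(\tau q^* + (1-\tau)\hat q)$ at $\tau = 0$. Since $L$ is constant on $C_{k^*}$ and on $C_k$ and $V(C_{k^*}) > V(C_k)$, this derivative is strictly positive and, by compactness of $C_k$ and continuity, bounded below by some $\delta > 0$ uniformly over $\hat q \in C_k$. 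Expanding the derivative exactly as in the introduction, $\sum_\theta (q^*_\theta - \hat q_\theta) f_\theta(\hat q) = \sum_{\theta} q^*_\theta f_\theta(\hat q) - 1 > \delta$ using $\hat q_\theta (f_\theta(\hat q) - 1) = 0$; splitting the index set according to whether $\theta \in \widehat{\Theta}(k)$ or not, and noting $\sum_{\theta \in \widehat{\Theta}(k)} q^*_\theta f_\theta(\hat q) = \sum_{\theta \in \widehat{\Theta}(k)} q^*_\theta \le 1$, one concludes $\sum_{\theta \notin \widehat{\Theta}(k)} q^*_\theta f_\theta(\hat q) > \delta$, hence there is $\hat\theta = \hat\theta(\hat q) \notin \widehat{\Theta}(k)$ with $f_{\hat\theta}(\hat q) > 1$. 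Since there are finitely many coordinates, by a further pigeonhole/compactness argument one can extract a single coordinate $\hat\theta^\star \notin \widehat{\Theta}(k)$ and a uniform constant $\eta > 0$ and neighborhood $N$ of $C_k$ on which $f_{\hat\theta^\star}(q) - 1 > \eta$, i.e.\ $\dot q_{\hat\theta^\star} = q_{\hat\theta^\star}(f_{\hat\theta^\star}(q) - 1) \ge 0$ with strict positivity whenever $q_{\hat\theta^\star} > 0$.

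Next I would translate this repelling property into a non-convergence statement for the discrete process $(q_n)_n$. The idea is to track the coordinate $y_n := q_n(\hat\theta^\star)$. On the event that $(q_n)$ eventually stays in $N$, the Robbins--Monro decomposition gives $y_{n+1} - y_n = \gamma_{n+1}(H_{\hat\theta^\star}(q_n) + U_{n+1}(\hat\theta^\star))$ with $H_{\hat\theta^\star}(q_n) = y_n(f_{\hat\theta^\star}(q_n) - 1) \ge \eta\, y_n \ge 0$ and $(U_n)$ a bounded martingale difference. Since convergence to $C_k$ forces $y_n \to 0$, but the drift pushes $y_n$ upward whenever $y_n$ is bounded away from $0$, and the martingale noise is controlled because $\sum_n e^{-c/\gamma_n} < \infty$ (condition~(ii)) ensures the noise term is asymptotically negligible along trajectories confined to $N$, this is impossible on a positive-probability event. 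Concretely, I would argue that $y_n$ cannot converge to $0$: either it stays bounded below infinitely often and then the positive drift prevents it from ever coming back near $0$, or one uses a comparison with the super-martingale-type estimate $\log y_n$ whose increments have strictly positive conditional mean plus a summable noise contribution, forcing $\log y_n \to +\infty$ — a contradiction with $y_n \le 1$. This step is most cleanly done by adapting the standard ``instability of repellers'' argument (in the spirit of \cite{brandiere-duflo-1996} or \cite{pemantle1990}), but using the \emph{uniform} repelling direction on the \emph{whole neighborhood} $N$ of $C_k$ rather than a Jacobian condition at an isolated point — this is exactly the novelty forced on us because $C_k$ lies on the boundary and may be a continuum.

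**Main obstacle.** The delicate point is precisely the passage from the uniform positive drift on $N$ to the conclusion $\mathbb{P}(\mathcal{L}(q_n) \subseteq C_k) = 0$. Because $C_k$ is not an isolated point and not interior, one cannot cite \cite{pemantle1990} or \cite{brandiere1998} off the shelf; one must show directly that the process cannot ``hop indefinitely often'' between points of $N$ while having its $\hat\theta^\star$-coordinate tend to zero. I expect the technical heart of the argument to be a careful estimate showing that, conditional on entering a small sub-neighborhood of $C_k$ with $y_n$ small, the process escapes (has $y$ grow past a fixed threshold) with probability bounded away from zero, uniformly — using that the drift in $\log y$ is bounded below by $\eta - o(1)$ and the accumulated noise $\sum \gamma_{k+1} U_{k+1}(\hat\theta^\star)$ over any window is small with high probability by the exponential-tail control from condition~(ii). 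Combining this with a Borel--Cantelli argument over successive excursions yields that $y_n \not\to 0$ almost surely on the event of confinement to $N$, hence $\mathbb{P}(\mathcal{L}(q_n) \subseteq C_k) = 0$.
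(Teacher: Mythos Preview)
Your overall architecture matches the paper's: locate a repelling direction for $C_k$ via the directional derivative of $V$ toward $q^*\in C_{k^*}$, pass to a uniform lower bound on a neighborhood, then track a log-coordinate process and derive a contradiction from its drift. There is, however, a genuine gap in the extraction step, and you are making the endgame harder than it needs to be.

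\textbf{The pigeonhole step does not work as stated.} From $\sum_{\theta\notin\widehat\Theta(k)} q^*_\theta f_\theta(\hat q)>\delta$ you correctly get, for each $\hat q\in C_k$, some $\hat\theta(\hat q)$ with $f_{\hat\theta(\hat q)}(\hat q)>1$. But ``finitely many coordinates plus compactness'' does \emph{not} yield a single $\hat\theta^\star$ with $f_{\hat\theta^\star}>1$ uniformly on $C_k$: an open cover of a compact set by the sets $\{q:f_\theta(q)>1\}$ need not be a cover by a single set, and pigeonhole only gives you one coordinate working on a subset. (Also, your equality $\sum_{\theta\in\widehat\Theta(k)} q^*_\theta f_\theta(\hat q)=\sum_{\theta\in\widehat\Theta(k)} q^*_\theta$ already presupposes $f_\theta(\hat q)=1$ for all $\theta\in\widehat\Theta(k)$, including those with $\hat q_\theta=0$; this is not immediate from $H(\hat q)=0$.) Both issues are, in fact, rescued by a fact proved inside Lemma~\ref{lem:decomposition}: the linear map $L(q)=\sum_\theta q_\theta p_\theta$ is \emph{constant} on each $C_k$, hence each $f_\theta(q)=\sum_x p^*(x)\,p_\theta(x)/L(q)(x)$ is constant on $C_k$. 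Once you invoke this, a single coordinate does work. The paper avoids the whole issue by never extracting a single coordinate: it keeps the \emph{weighted} combination $\sum_{\theta\in\Theta_3}\mu_\theta f_\theta(\hat q)>1$ (with $\mu_\theta=q^*_\theta/\sum_{\Theta_3}q^*_\theta$), which is uniform on $C_k$ for free and extends to a neighborhood $U$ by continuity.

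\textbf{The endgame is simpler than your ``main obstacle'' suggests.} The paper does not need excursions, Borel--Cantelli, or exponential tails from condition~(ii). It sets $z_n:=\sum_{\theta\in\Theta_3}\mu_\theta\log q_{n,\theta}$ (your $\log y_n$ is the single-coordinate version), uses a second-order Taylor expansion of $\log$ to get, on $\{q_n\in U\}$,
\[
\mathbb{E}\bigl[z_{n+1}\mid\mathcal F_n\bigr]-z_n\ \ge\ \varepsilon\,\gamma_{n+1}-K\,\gamma_{n+1}^2,
\]
and then observes that on the event $\{q_n\in U\ \forall n\ge\bar N\}$ (which has positive probability if $\mathbb{P}(\mathcal L(q_n)\subseteq C_k)>0$) the right-hand side sums to $+\infty$ because $\sum\gamma_n=+\infty$ and $\gamma_n\to 0$. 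This forces $\mathbb{E}[z_n]\to+\infty$, contradicting $z_n\le 0$. That is the entire non-convergence argument; the novelty is not a new escape-probability estimate but precisely the construction of a \emph{uniform} Lyapunov-type coordinate $z_n$ on a neighborhood of the (possibly non-isolated, boundary) set $C_k$.
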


\begin{proof}[ of Lemma \ref{lem:C_k}]
Recall that $C_k = \bigcup_{\widehat{\Theta} \in \mathcal{C}_k} S_{\widehat{\Theta}}$. Pick  $q^* \in F_{\widehat{\Theta}(k^*)}$, and let
\[\Theta_3 := \bigcup_{q \in C_k}\left\{\theta \in \Theta: \; q^*_{\theta}>0, \; q_{\theta} = 0\right\}.\]
For $\hat{q} \in C_k$, define $\Theta_1(\hat{q}):= \left\{\theta \in \Theta: \; q^*_{\theta} = \hat{q}_{\theta} = 0 \right\}$, and $\Theta_2(\hat{q}) :=  \{\theta \notin \Theta_3: \; \hat{q}_{\theta} >0 \}$.\medskip 

By Lemma \ref{lem:supports}, there exists $\theta^* \in \Theta_3$ such that $\hat{q}_{\theta^*}=0$ for all $\hat{q} \in C_k$.   We stress that the sets $\Theta_1(\hat{q})$ and $\Theta_2(\hat{q})$ depend on the face $\hat{q}$ belongs to.
\medskip 

By strict concavity of $V$ on $\co \{q^*,\hat{q}\}$,  we have $\frac{\partial}{\partial t}V\left(t q^* + (1-t)\hat{q} \right) >0$. Hence,
\[0< \sum_{\theta \in \Theta} (q^*_{\theta} - \hat{q}_{\theta}) f_{\theta}(\hat{q}) =  \sum_{\theta \in \Theta} q^*_{\theta} f_{\theta}(\hat{q}) - 1 =  \sum_{\theta \in \Theta_2(\hat{q}) \cup \Theta_3} q^*_{\theta} f_{\theta}(\hat{q}) - 1  = \sum_{\theta \in \Theta_2(\hat{q})} q^*_{\theta} +  \sum_{\theta \in \Theta_3} q^*_{\theta}  f_{\theta}(\hat{q}) - 1.\]
Thus, $\sum_{\theta \in \Theta_3} q^*_{\theta}  f_{\theta}(\hat{q}) > 1 - \sum_{\theta \in \Theta_2(\hat{q})} q^*_{\theta}$. Since $\sum_{\theta \in \Theta_2(\hat{q})\cup \Theta_3} q^*_{\theta} = 1$, 
\[\sum_{\theta \in \Theta_3} \mu_{\theta}  f_{\theta}(\hat{q}) >1, \; \, \forall \hat{q} \in C_k, \]
where $\mu_{\theta} := \frac{q^*_{\theta}}{ \sum_{\theta \in \Theta_3} q_{\theta}^*}$, for all $\theta \in \Theta_3$.\medskip 

 By compactness of $C_k$, there exists $\varepsilon >0$ such that  $\sum_{\theta \in \Theta_3} \mu_{\theta} f_{\theta}(\hat{q}) >1 + 3 \varepsilon$, for all $\hat{q} \in C_k$. By continuity of $f_{ \theta}$, there exists an open neighborhood $U$ of $C_k$ (in $\mathbf{S}$) such that
\[\inf_{\hat{q} \in U} \sum_{\theta \in \Theta_3} \mu_{\theta} f_\theta(\hat{q}) > 1 + 2\varepsilon.\]
\medskip

Note that, if $q_{\theta} >0$, then  $0 \leq \frac{B_{\theta}(q,x)}{q_{\theta}} = \frac{p_{\theta}(x)}{\sum_{\theta'}q_{\theta'}p_{\theta'}(x)} \leq \frac{p_{\theta}(x)}{\min_{\theta' \in \Theta} p_{\theta'}(x)} < + \infty$ since $p_{\theta'}(x) >0$ for all $(\theta',x)$.\footnote{Again, if the support of $p^*$ was a strict subset the support of each $p_{\theta'}$, the analysis would hold for all $x$ in the support of $p^*$.} Let $C(\theta) := \max_{x \in X}  \frac{p_{\theta}(x)}{\min_{\theta' \in \Theta} p_{\theta'}(x)}$. If $q_{n,\theta} >0$, it follows that 
\begin{equation} \label{eq:bound1} 1 - \gamma_{n+1} + \gamma_{n+1} C(\theta) \geq   \frac{q_{n+1,\theta}}{q_{n,\theta}} =  1 - \gamma_{n+1} + \frac{\gamma_{n+1}}{q_{n,\theta}}B_{\theta}(q_n,x_{n+1}) \geq 1 - \gamma_{n+1}.\end{equation}
Since
\begin{equation} \label{eq:bound2} \frac{q_{n+1,\theta}}{q_{n,\theta}} =  1 + \gamma_{n+1} \left(-1 + f_{\theta}(q_n) + \frac{U_{n+1,\theta}}{q_{n,\theta}}\right). \end{equation}
we can combining (\ref{eq:bound1}) and (\ref{eq:bound2}) to obtain  $C(\theta) - 1 \geq -1 +   f_{\theta}(q_n) + \frac{U_{n+1,\theta}}{q_{n,\theta}} \geq -1$, for all $n$. \medskip  

From the Stochastic Taylor Theorem (see e.g. \cite{aliprantis-border}, Theorem 17.17 p. 569), for any $n \in \mathbb{N}^*$, there exists a random variable $\zeta_n$, such that $|\zeta_n| \leq  \gamma_{n+1} \left(-1 + f_{\theta}(q_n) + \frac{U_{n+1,\theta}}{q_{n,\theta}}\right)$, and
\[\log \left(\frac{q_{n+1,\theta}}{q_{n,\theta}} \right) = \gamma_{n+1} \left(-1 + f_{\theta}(q_n) + \frac{U_{n+1,\theta}}{q_{n,\theta}}\right) - \frac{1}{(1+ \zeta_n)^2} \gamma_{n+1}^2\left(-1 + f_{\theta}(q_n) + \frac{U_{n+1,\theta}}{q_{n,\theta}} \right)^2.\]
Defining $z_n :=  \sum_{\theta \in \Theta_3} \mu_{\theta} \log q_{n,\theta}$, it follows that 
\begin{eqnarray*} 
z_{n+1} - z_n &=& \sum_{\theta \in \Theta_3} \mu_{\theta} \log \left(\frac{q_{n+1,\theta}}{q_{n,\theta}} \right) \\
 &=&  \sum_{\theta \in \Theta_3} \mu_{\theta} \log \left(1 + \gamma_{n+1} \left( -1 +   f_{\theta}(q_n) + \frac{U_{n+1,\theta}}{q_{n,\theta}} \right) \right) \\
& =&  \gamma_{n+1} \sum_{\theta \in \Theta_3} \mu_{\theta}  \left(-1 +f_{\theta}(q_n) + \frac{U_{n+1,\theta}}{q_{n,\theta}} \right) - K_{n+1} \gamma_{n+1^2}, 
\end{eqnarray*}
where $K_{n+1} := \frac{1}{(1+ \zeta_n)^2} \left(-1 + f_{\theta}(q_n) + \frac{U_{n+1,\theta}}{q_{n,\theta}} \right)^2$. Note that for $n$ large enough $|K_{n+1} | \leq K := 2 \max \{1,(C(\theta)-1)^2\}$.  $|\zeta_n| \leq 1/2$.
Taking the conditional expectation, we get
\[\mathbb{E} \left(z_{n+1} \mid \mathcal{F}_n \right) - z_n = \gamma_{n+1 } \sum_{\theta \in \Theta_3} \mu_{\theta}    \left(-1 +f_{\theta}(q_n) \right) - \gamma_{n+1}^2 \mathbb{E} \left(K_{n+1} \mid \mathcal{F}_n\right)  \geq  \gamma_{n+1} \varepsilon - K \gamma^2_{n+1}, \] 
on the event $\{q_n \in U\}$, for large enough $n$. Assume by contradiction that $\mathbb{P}\left(\mathcal{L}(q_n) \subseteq U \right)>0$. Since 
\[\left\{\mathcal{L}(q_n) \subseteq U \right\} \subseteq \bigcup_{N \in \mathbb{N}} \left\{ q_n \in U, \;  \forall n \geq N \right\},\]
 there exists $\bar{N} \in \mathbb{N}$ such that 
\[ \mathbb{P} \left( q_n \in U, \;  \forall n \geq \bar{N} \right) >0.\]
However, on the event  $\left\{ q_n \in U, \;  \forall n \geq \bar{N} \right\}$,  we have $ \sum_{\theta \in \Theta_3} \mu_{\theta}    \left(-1 +f_{\theta}(q_n) \right) > \varepsilon$ for all $n \geq \bar{N}$, implying that, for any $n > \bar{N}$,
\[\mathbb{E}(z_{n+1}) \geq \mathbb{E}(z_{\bar{N}}) + \varepsilon \sum_{m=\bar{N}}^n \gamma_{n+1} - K \sum_{m=\bar{N}}^n \gamma_{n+1}^2. \]
Thus, on the event  $\left\{ q_n \in U, \;  \forall n \geq \bar{N} \right\}$, we have $\lim_{n \rightarrow + \infty} \mathbb{E}(z_n) = + \infty$, which cannot happen since $z_n \leq 0$. Therefore  $\mathbb{P} \left( q_n \in U, \;  \forall n \geq \bar{N} \right) = 0$, the desired contradiction.
\end{proof}

\subsection{Proof of Proposition \ref{prop:full}}
Let $\tilde{q} \in \mathbf{S}^*$ be a zero of $H$, that is, for all $\theta \in \Theta$, 
\[1 = f_{\theta}(\tilde{q}) = \sum_{x \in X} \frac{p^*(x)}{\sum_{\theta'} p_{\theta'}(x) \tilde{q}_{\theta'}} p_{\theta}(x).\]
Define $u \in \mathbb{R}^{|X|}$ as $u(x) :=  \frac{p^*(x)}{\sum_{\theta'} p_{\theta'}(x) \tilde{q}_{\theta'}}$. We have
\[\left< u , p_{\theta} - p_{\theta'}\right> = 0, \; \forall \theta, \theta'.\]
Since the family $\{p_{\theta}\}_{\theta \in \Theta}$ is full, this means that $u$ is orthogonal to all vectors in the tangent space $\{v \in \mathbb{R}^{|X|}: \sum_{x \in X}v_x=0\}$, that is, $u$ is proportional to the unit vector $(1, \dots,1)$.  Therefore, there exists a real number $\mu$ such that $p^*(x) = \mu 
\sum_{\theta'} p_{\theta'}(x) \tilde{q}_{\theta'}$, for all $x \in X$. Summing over $X$, we obtain that $\mu = 1$. Hence $p^* = \sum_{\theta \in \Theta} p_{\theta} \tilde{q}_{\theta}$, and $\tilde{q} \in \Lambda$, which completes the proof.

\subsection{Proof of Proposition \ref{prop:convex-indep}}

Let $\widehat{\Theta}$ be a non-empty subset of $\Theta$. The argmax of the map $p \in \co\{p_{\theta}\}_{\theta \in \widehat{\Theta}} \mapsto W(p)$ is a singleton, $\{\bar{p}\}$, by convexity of  $\co\{p_{\theta}\}_{\theta \in \widehat{\Theta}}$ and strict concavity of $W$. Since the family is free and $|\widehat{\Theta}| \leq |\Theta| \leq |X|$, the map $\mu: q  \mapsto \sum_{\theta \in \Theta}q_{\theta}p_{\theta} $ is injective from $F_{\widehat{\Theta}}$ to $\Delta(X)$. Therefore,   there exists a unique $\bar{q} \in F_{\widehat{\Theta}}$ such that $\mu(\bar{q}) = \bar{p}$.

\subsection{Proof of Theorem \ref{th:constant-step}} 
Recall that Theorem \ref{th:constant-step} characterizes the long-run properties of the updating process when the updating weights are constant, i.e., $\gamma_n = \gamma >0$ for all $n$. The proof borrows and adapts ideas from the recent work of \cite{BenSch19}.\medskip

We consider the family of random processes $\{(q_n^{\gamma})_n\}$, where $q_0^{\gamma}=q_0
\in \mathbf{S}^*$, and $(q_n^{\gamma})_n$ is recursively given by
\begin{equation} \label{eq:constant_step}
q^{\gamma}_{n+1} = (1-\gamma) q^{\gamma}_{n}  + \gamma B\left(q^{\gamma}_{n},x_{n+1}\right).
\end{equation}

 It will be convenient to rewrite the recursive formula (\ref{eq:constant_step}) component-wise as follows: for $\theta \in \Theta$, 
\begin{equation} \label{eq:ratio}
\frac{q^{\gamma}_{n+1, \theta}}{ q^{\gamma}_{n,\theta}} =  F_{\theta}^{\gamma} \left(q^{\gamma}_{n},x_{n+1}\right), \text{ where} \; F_{\theta}^{\gamma}(q,x) := (1-\gamma)  + \gamma \frac{p_{\theta}(x) }{\sum_{\theta'} p_{\theta'}(x) q_{\theta'}}.
\end{equation}
\vspace{.2cm}

\noindent 
Let $P_{\gamma}$ be the transition kernel associated to the Markov chain $(q^{\gamma}(n))_n$.
Given $h \in  \mathcal{C}\left(\mathbf{S}\right)$, the set of continuous maps from $\mathbf{S}$ to $\mathbb{R}$, define $P_{\gamma} h$ as
\[q \in \mathbf{S} \mapsto  P_{\gamma}h(q) := \mathbb{E}\left[h(q^{\gamma}_{1})  \mid q^{\gamma}_{0} = q \right] =  \int_{\mathbf{S}} h(\lambda) P_{\gamma}(q,d\lambda). \]

\vspace{.2cm}

\noindent Given $\gamma >0$, define the \emph{occupation measure up to time $n$} as 
\begin{equation}
\Pi^{\gamma}_n := \frac{1}{n} \sum_{m=0}^{n-1} \delta_{q^{\gamma}_{m}}, \, \text{ where } \; \delta_q \, \text{ denotes the Dirac measure at } \, q. 
\end{equation}
We are interested in  the asymptotic behavior of $(\Pi^{\gamma}_n)_n$, when $\gamma$ gets small. 

\noindent Following \cite{BenSch19}, define
\[r^{\gamma}_{\theta}(q) :=  \sum_{x \in X} p^*(x) \log F^{\gamma}_{\theta}(q,x) .\]
 Intuitively, this quantity represents the expected growth of the belief in $\theta$, when current belief is $q$ and observations are distributed according to $p^*$.

\begin{lemma} \label{lem:pos_rate} For all $\lambda^* \in C_{k^*}$ and $q \in \mathbf{S}$, the derivative of the mapping $\gamma \mapsto \sum_{\theta \in \Theta} \lambda^*_{\theta} r^{\gamma}_{\theta}(q)$ evaluated in zero is non-negative, and equal to zero if, and only if, $q \in C_{k^*}$. 

\noindent Consequently, for any $q \in \mathbf{S} \setminus C_{k^*}$, there exists $\gamma_0(q)  >0$ such that, for all $0< \gamma < \gamma_0(q)$ and  all $\lambda^* \in C_{k^*}$, we have  $\sum_{\theta \in \Theta} \lambda^*_{\theta} r^{\gamma}_{\theta}(q)>0$.
\end{lemma}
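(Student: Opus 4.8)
The plan is to reduce the whole statement to a first-order Taylor expansion of $\gamma \mapsto \sum_{\theta}\lambda^*_{\theta} r^{\gamma}_{\theta}(q)$ at $\gamma = 0$, identifying the leading coefficient with the quantity $\sum_{x}p^*(x)\bigl(\tfrac{\sum_{\theta}\lambda^*_{\theta}p_{\theta}(x)}{\sum_{\theta'}q_{\theta'}p_{\theta'}(x)}-1\bigr)$ already featured in the heuristics for Theorem~\ref{th:constant-step}. Write $L(q)(x):=\sum_{\theta'}q_{\theta'}p_{\theta'}(x)$. First I would compute the derivative: since $F^{\gamma}_{\theta}(q,x)=1+\gamma\bigl(p_{\theta}(x)/L(q)(x)-1\bigr)$ and $F^{0}_{\theta}(q,x)=1$, one gets $\partial_{\gamma}\log F^{\gamma}_{\theta}(q,x)\big|_{\gamma=0}=p_{\theta}(x)/L(q)(x)-1$, hence $\partial_{\gamma}r^{\gamma}_{\theta}(q)\big|_{\gamma=0}=f_{\theta}(q)-1$, and therefore
\[
\frac{d}{d\gamma}\Big|_{\gamma=0}\sum_{\theta}\lambda^*_{\theta}r^{\gamma}_{\theta}(q)=\sum_{\theta}\lambda^*_{\theta}\bigl(f_{\theta}(q)-1\bigr)=\sum_{x}p^*(x)\left(\frac{\sum_{\theta}\lambda^*_{\theta}p_{\theta}(x)}{L(q)(x)}-1\right),
\]
using $\sum_{\theta}\lambda^*_{\theta}=1=\sum_{x}p^*(x)$.

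Next I would establish the sign and the equality case. Applying the elementary bound $\log t\le t-1$ pointwise to $t=L(\lambda^*)(x)/L(q)(x)>0$ and summing against $p^*$ shows that the derivative is $\ge \sum_{x}p^*(x)\log\bigl(L(\lambda^*)(x)/L(q)(x)\bigr)=V(\lambda^*)-V(q)$, which is $\ge 0$ because $\lambda^*\in C_{k^*}$ maximizes $V$ over $\mathbf{S}$ by Lemma~\ref{lem:decomposition}(ii); this gives non-negativity. For equality: if $q\in C_{k^*}$, then, since the linear map $L$ (not merely $V$) is constant on each component $C_k$ — as established in the proof of Lemma~\ref{lem:decomposition} — we have $L(q)=L(\lambda^*)$, so every ratio above equals $1$ and the derivative is exactly $0$; conversely, if the derivative is $0$, the displayed chain of inequalities forces $V(q)=V(\lambda^*)=V(C_{k^*})$, and the strict maximality in Lemma~\ref{lem:decomposition}(ii) then yields $q\in C_{k^*}$.

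For the "consequently'' part, fix $q\in\mathbf{S}\setminus C_{k^*}$ and set $g(\gamma,\lambda^*):=\sum_{\theta}\lambda^*_{\theta}r^{\gamma}_{\theta}(q)$, which is $C^2$ in $\gamma$ on $[0,1/2]$ because $p_{\theta}(x)/L(q)(x)-1\ge -1$ forces $F^{\gamma}_{\theta}(q,x)\ge 1/2$. Here $g(0,\lambda^*)=0$; the first two steps give the \emph{$\lambda^*$-uniform} lower bound $\partial_{\gamma}g(0,\lambda^*)\ge V(C_{k^*})-V(q)=:c(q)>0$ (using that $V$ is constant on $C_{k^*}$); and writing $a_{\theta}(x):=p_{\theta}(x)/L(q)(x)-1$, which satisfies $|a_{\theta}(x)|\le \bar C:=\max_{\theta,x}p_{\theta}(x)/\min_{\theta'}p_{\theta'}(x)<+\infty$, one has on $[0,1/2]$ the uniform bound $|\partial^2_{\gamma}g(\gamma,\lambda^*)|=\bigl|\sum_{x}p^*(x)\sum_{\theta}\lambda^*_{\theta}\,a_{\theta}(x)^2/(1+\gamma a_{\theta}(x))^2\bigr|\le 4\bar C^{2}=:M$. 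Taylor's theorem with Lagrange remainder then gives, for $\gamma\in(0,1/2]$ and some $\xi\in(0,\gamma)$, $g(\gamma,\lambda^*)=\gamma\,\partial_{\gamma}g(0,\lambda^*)+\tfrac12\gamma^{2}\partial^2_{\gamma}g(\xi,\lambda^*)\ge \gamma c(q)-\tfrac12\gamma^{2}M>0$ whenever $0<\gamma<\gamma_0(q):=\min\{1/2,\,2c(q)/M\}$, a threshold that does not depend on $\lambda^*$.

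I do not expect a deep obstacle here: the computations are routine. The one point genuinely requiring care is the uniformity over $\lambda^*\in C_{k^*}$ that the statement demands of $\gamma_0(q)$ — handled by recording a $\lambda^*$-uniform lower bound on $\partial_{\gamma}g(0,\cdot)$ (which is why invoking "$V$ is constant on $C_{k^*}$'' rather than just "$\lambda^*$ maximizes $V$'' matters) together with a $\lambda^*$-uniform second-derivative bound; and the mild subtlety in the equality case, which needs the refinement from the proof of Lemma~\ref{lem:decomposition} that $L$, not merely $V$, is constant along $C_{k^*}$.
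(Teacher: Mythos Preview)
Your proposal is correct and follows essentially the same route as the paper: compute the derivative at $\gamma=0$, bound it below via $t-1\ge\log t$ by $V(\lambda^*)-V(q)\ge 0$, and infer positivity of the sum for small $\gamma$. Your treatment is in fact more careful than the paper's in two places --- you explicitly use that $L$ (not just $V$) is constant on $C_{k^*}$ to close the ``if $q\in C_{k^*}$ then derivative $=0$'' direction, and you establish the $\lambda^*$-uniform threshold $\gamma_0(q)$ via an explicit second-derivative bound, whereas the paper's argument leaves that uniformity implicit.
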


\begin{proof}[ of Lemma \ref{lem:pos_rate}]  Pick $q \in \mathbf{S}$ and $\lambda^* \in C_{k^*}$. The derivative of $r_{\theta}^{\gamma}(q)$ with respect to $\gamma$, evaluated in $\gamma = 0$ is equal to
\[\sum_{x \in X} p^*(x) \left( \frac{p_{\theta}(x)}{\sum_{\theta' \in \Theta} q_{\theta'} p_{\theta'}(x)}-1\right)\]
Hence, the derivative of $\gamma \mapsto \sum_{\theta} \lambda^*_{\theta} r_{\theta}^{\gamma}(q)$ evaluated in $\gamma = 0$ is equal to
\begin{equation*} 
\sum_{x \in X} p^*(x)  \left(\frac{\sum_{\theta \in \Theta} \lambda^*_{\theta} p_{\theta}(x)}{\sum_{\theta \in \Theta} q_{\theta} p_{\theta}(x)}-1 \right)  \geq  \sum_{x \in X} p^*(x) \log  \left( \frac{\sum_{\theta \in \Theta} \lambda^*_{\theta} p_{\theta}(x)}{ \sum_{\theta \in \Theta} q_{\theta} p_{\theta}(x)}\right) \geq  0,
\end{equation*}
where the inequality follows from the concavity of the $\log$. Moreover, by definition of $C_{k^*}$,  
 the right-hand side holds in equality if, and only if, $q \in C_{k^*}$. In addition, the quantity $\sum_{\theta \in \Theta} \lambda^*_{\theta} r^{\gamma}_{\theta}(q)$ is equal to zero if $\gamma = 0$. Therefore, if $q \notin C_{k^*}$, there exists $\gamma(q) >0$ such that, for any $0<\gamma < \gamma(q)$,
\[\sum_{\theta \in \Theta} \lambda^*_{\theta} r_{\theta}^{\gamma}(q) >0.\]
This concludes the proof. 
\end{proof}

\begin{remark} \label{rk1} Note that there does not exist a uniform threshold $\gamma_0>0$ on $\mathbf{S} \setminus C_{k^*}$: in words, for any $\gamma >0$, there might exist $q \in \mathbf{S} \setminus C_{k^*}$ such that $\sum_{\theta \in \Theta} \lambda^*_{\theta} r_{\theta}^{\gamma}(q) < 0$. However, a direct consequence of this lemma is that, for any open set $V$ such that $C_{k^*} \subset V$, there exists $\gamma_0 >0$ such that, for any $\gamma < \gamma_0$ 
\[\inf_{q \in \mathbf{S} \setminus V} \sum_{\theta \in \Theta} \lambda^*_{\theta} r_{\theta}^{\gamma}(q) > 0.\]
\end{remark}

An invariant probability measure for  $(q^{\gamma}_{n})_n$ is a probability measure $\pi$ on $\mathbf{S}$ such that 
\begin{equation} \label{eq:invariant} \int_{\mathbf{S}} h(q) \pi (dq) = \int_{\mathbf{S}} P_{\gamma}h(q) \pi(dq), \, \text{for all } \; h \in  \mathcal{C} \left( \mathbf{S} \right). \end{equation}
The set of such invariant probability distributions is denoted $\mathrm{Inv}\left(P_{\gamma} \right)$.
It is a compact and convex subset of $\mathcal{P}(\mathbf{S})$. Its extreme points are the so-called  \emph{ergodic measures} of $P_{\gamma}$.

\noindent  Given $\pi \in \mathcal{P}(\mathbf{S})$ and $\theta \in \Theta$, define the quantity $r^{\gamma}_{\theta}(\pi) := \int_{\mathbf{S}} r^{\gamma}_{\theta}(q) \pi(dq) $.
It is the expected growth of the belief in $\theta$ if the current belief is distributed according to $\pi$. Proposition 1 in \cite{BenSch19} states the following: if $\pi$ is an ergodic measure (for $P_{\gamma}$) then
\[\lim_{n \rightarrow + \infty} \frac{1}{n} \sum_{m=0}^{n-1} \log F_{\theta}^{\gamma}(q^{\gamma}_{m},x_{m+1}) = r^{\gamma}_{\theta}(\pi) \; \text{ for } \, \pi-\text{almost every initial condition } \, q_0.\]
Moreover $r^{\gamma}_{\theta}(\pi) = 0$ for any $\theta \in Supp(\pi)$, where 
\[Supp(\pi) := \left\{\theta \in \Theta: \; \pi (\{q \in \mathbf{S}: q_{\theta}>0 \})=1 \right\}.\]
  The intuition behind this is that, under stationary regime, the  supported states have a null expected growth.


\begin{lemma} \label{lem:rate_inv} Let $\pi^{\gamma}$ be a  weak* limit point of the sequence $(\Pi^{\gamma}_n)_n$. Then $\pi^{\gamma} \in \mathrm{Inv}(P_{\gamma})$, and 
\[\sum_{\theta}\lambda^*_{\theta} r^{\gamma}_{\theta}(\pi^{\gamma}) \leq 0, \; \text{  for all } \;  \lambda^* \in C_{k^*}.\]
\end{lemma}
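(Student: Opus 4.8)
The plan is to establish the two assertions of Lemma~\ref{lem:rate_inv} separately: first that any weak* limit point $\pi^{\gamma}$ of the occupation measures $(\Pi_n^{\gamma})_n$ is invariant for $P_{\gamma}$, and second that it satisfies the growth inequality $\sum_{\theta}\lambda^*_{\theta}r^{\gamma}_{\theta}(\pi^{\gamma})\leq 0$ for every $\lambda^*\in C_{k^*}$.

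For invariance, I would use the standard Krylov--Bogolyubov argument adapted to the Markov chain $(q_n^{\gamma})_n$. Fix $h\in\mathcal{C}(\mathbf{S})$. Since $\mathbf{S}$ is compact and $P_{\gamma}$ has the Feller property (the transition kernel is a finite mixture of continuous maps $q\mapsto(1-\gamma)q+\gamma B(q,x)$, weighted by $p^*(x)$, so $P_{\gamma}h$ is continuous whenever $h$ is), one computes $\int_{\mathbf{S}}(P_{\gamma}h - h)\,d\Pi_n^{\gamma} = \tfrac1n\bigl(\mathbb{E}[h(q_n^{\gamma})\mid q_0^\gamma=q] - h(q)\bigr)$ up to a martingale term, which tends to $0$ as $n\to\infty$ because $h$ is bounded. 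Passing to the subsequence along which $\Pi_n^{\gamma}\to\pi^{\gamma}$ weak* and using continuity of both $h$ and $P_{\gamma}h$, we get $\int h\,d\pi^{\gamma}=\int P_{\gamma}h\,d\pi^{\gamma}$, i.e.\ $\pi^{\gamma}\in\mathrm{Inv}(P_{\gamma})$. (One should be slightly careful that the argument is pathwise, using a.s.\ convergence of Cesàro averages of the martingale increments, but this is routine.)

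For the growth inequality, the key identity is that $\log q^{\gamma}_{n+1,\theta}-\log q^{\gamma}_{n,\theta}=\log F^{\gamma}_{\theta}(q^{\gamma}_n,x_{n+1})$, so telescoping gives $\tfrac1n\log q^{\gamma}_{n,\theta}-\tfrac1n\log q^{\gamma}_{0,\theta}=\tfrac1n\sum_{m=0}^{n-1}\log F^{\gamma}_{\theta}(q^{\gamma}_m,x_{m+1})$. Taking a convex combination with weights $\lambda^*_{\theta}$ and noting that each $\log q^{\gamma}_{n,\theta}\leq 0$ (probabilities are at most one), the left-hand side is bounded above by $-\tfrac1n\sum_{\theta}\lambda^*_{\theta}\log q^{\gamma}_{0,\theta}$, which tends to $0$. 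Hence $\limsup_n \sum_{\theta}\lambda^*_{\theta}\cdot\tfrac1n\sum_{m=0}^{n-1}\log F^{\gamma}_{\theta}(q^{\gamma}_m,x_{m+1})\leq 0$ almost surely. To convert this into a statement about $\pi^{\gamma}$, I would split $\log F^{\gamma}_{\theta}(q^{\gamma}_m,x_{m+1})$ into its conditional mean $r^{\gamma}_{\theta}(q^{\gamma}_m)$ and a bounded martingale difference; the Cesàro average of the martingale differences vanishes a.s.\ by the law of large numbers for martingales (the increments are uniformly bounded since $F^{\gamma}_{\theta}$ is bounded away from $0$ and $\infty$ on $\mathbf{S}$ by the full-support assumption on the $p_{\theta}$'s), leaving $\limsup_n \int_{\mathbf{S}}\sum_{\theta}\lambda^*_{\theta}r^{\gamma}_{\theta}(q)\,\Pi_n^{\gamma}(dq)\leq 0$. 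Since $q\mapsto\sum_{\theta}\lambda^*_{\theta}r^{\gamma}_{\theta}(q)$ is continuous and bounded on the compact set $\mathbf{S}$, passing to the weak* limit along the relevant subsequence yields $\int_{\mathbf{S}}\sum_{\theta}\lambda^*_{\theta}r^{\gamma}_{\theta}(q)\,\pi^{\gamma}(dq)=\sum_{\theta}\lambda^*_{\theta}r^{\gamma}_{\theta}(\pi^{\gamma})\leq 0$.

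The main obstacle I anticipate is the interchange of limits in the second part: the telescoping bound only controls the $\limsup$ of the Cesàro averages of the \emph{random} terms $\log F^{\gamma}_{\theta}(q^{\gamma}_m,x_{m+1})$, not directly the integral against $\pi^{\gamma}$, so the argument must carefully route through (i) the martingale SLLN to replace the random terms by their conditional means, and (ii) weak* convergence to replace the empirical average of $r^{\gamma}_{\theta}$ by its integral against $\pi^{\gamma}$. Both steps rely on the uniform boundedness of $\log F^{\gamma}_{\theta}$, which in turn rests on the standing assumption that every $p_{\theta}$ and $p^*$ have full support on the finite set $X$; this is where that hypothesis earns its keep. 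A secondary subtlety is that one is working pathwise along a (random) subsequence realizing the weak* limit point, so one must fix a single full-measure event on which all the almost-sure convergences above hold simultaneously.
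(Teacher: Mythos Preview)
Your proposal is correct and follows essentially the same route as the paper's proof. The only cosmetic difference is that the paper packages both the invariance step and the replacement of $\log F^{\gamma}_{\theta}(q^{\gamma}_m,x_{m+1})$ by its conditional mean $r^{\gamma}_{\theta}(q^{\gamma}_m)$ as two applications of a single auxiliary result (Lemma~3(i) of \cite{BenSch19}, the a.s.\ vanishing of Ces\`aro averages of bounded martingale differences), whereas you argue each piece directly via the Feller/Krylov--Bogolyubov scheme and the martingale SLLN; the underlying content---telescoping $\log q^{\gamma}_{n,\theta}$, using $\log q^{\gamma}_{n,\theta}\le 0$, bounding $\log F^{\gamma}_{\theta}$ uniformly via the full-support hypothesis, and passing to the weak* limit along the relevant subsequence---is identical.
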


\begin{proof}[ of Lemma \ref{lem:rate_inv}]   Let $\pi^{\gamma}$ be a weak* limit point of the sequence $(\Pi_n^{\gamma})_n$.  By definition, there exists an increasing sequence $n_k$ such that,  for any $h \in \mathcal{C}\left(\mathbf{S} \right)$,
\[ \int_{\mathbf{S}} h(q) \pi^{\gamma}(dq)  = \lim_k \int_{\mathbf{S}} h(q) \Pi^{\gamma}_{n_k}(dq) = \lim_k \frac{1}{n_k} \sum_{m=0}^{n_k-1} h(q^{\gamma}_{m})\]
and
\[ \int_{\mathbf{S}} P_{\gamma} h(q) \pi^{\gamma}(dq)  =  \lim_k \frac{1}{n_k} \sum_{m=0}^{n_k-1} P_{\gamma} h(q^{\gamma}_{m})\]

\noindent By Lemma 3(i) in \cite{BenSch19}, for any $g \in \mathcal{C}(\mathbf{S} \times X)$, denoting $\overline{g}(q) := \sum_{x} p^*(x) g(q,x)$,    we have
\begin{equation} \label{eq:lemma3}
\lim_{n \rightarrow + \infty} \left| \frac{ \sum_{m=0}^{n-1} g(q^{\gamma}_{m},x_{m+1})}{n} -  \frac{\sum_{m=0}^{n-1} \overline{g}(q^{\gamma}_{m})}{n} \right| = 0, \; \text{ almost surely}. 
\end{equation}

\noindent Let $g(q,x) := h \left( q \circ F_{\gamma}(q,x)\right)$.\footnote{We let $q \circ q'$ denote the Hadamard product:$(q \circ q')_{\theta} := q_{\theta} q'_{\theta}$} Note that  $g(q^{\gamma}_{m}, x_{m+1}) = h\left(q^{\gamma}_{m} \circ F_{\gamma}\left(q^{\gamma}_{m},x_{m+1} \right)\right) = h\left(q^{\gamma}_{m+1}\right)$.
 Since $P_{\gamma} h(q) = \sum_{x \in X} p^*(x) h(q \circ F_{\gamma}(q,x)) = \bar{g}(q)$, and using (\ref{eq:lemma3}), we get 
\[\lim_{n \rightarrow + \infty} \left| \frac{ \sum_{m=0}^{n-1} h(q^{\gamma}_{m+1})}{n} -  \frac{\sum_{m=0}^{n-1} P_{\gamma} h(q^{\gamma}_{m})}{n} \right| = 0. \]
Hence
\[ \int_{\mathbf{S}} P_{\gamma} h(q) \pi^{\gamma}(dq)  =  \lim_k \frac{1}{n_k} \sum_{m=0}^{n_k-1} P_{\gamma} h(q^{\gamma}_{m}) =  \lim_k \frac{1}{n_k} \sum_{m=0}^{n_k-1} h(q^{\gamma}_{m}) = \int_{\mathbf{S}} h(q) \pi^{\gamma}(dq), \]
and (\ref{eq:invariant}) follows.

We now prove the second claim. Using identity (\ref{eq:lemma3}) with $g(q,x) = \sum_{\theta \in \Theta}\lambda^*_{\theta} \log F_{\theta}^{\gamma}(q,x)$ and using the fact that 
\[\sum_{m=0}^{n-1} \log F_{\theta}^{\gamma}\left(q^{\gamma}_{m},x_{m+1}\right) =
\sum_{m=0}^{n-1}  \log \frac{q_{m+1,\theta}^{\gamma}}{q_{m, \theta}^{\gamma}} = \log \frac{q_{n, \theta}^{\gamma}}{q_{0,\theta}^{\gamma}},\]
 we obtain
{\small
\[\lim_{n \rightarrow + \infty} \frac{\sum_{\theta \in \Theta} \lambda^*_{\theta} \left(\log q_{n,\theta}^{\gamma} - \log q_{0,\theta}^{\gamma} \right) - \sum_{m=0}^{n-1} \sum_{\theta \in \Theta} \lambda^*_{\theta} \sum_{x \in X} p^*(x) \log \left(F_{\theta}^{\gamma}\left(q^{\gamma}_{m},x \right)\right)}{n} = 0.\]
}
Note that $\limsup_n  \frac{\sum_{\theta \in \Theta} \lambda^*_{\theta} \log q_{n,\theta}^{\gamma}}{n}  \leq 0$ and $\lim_{n \rightarrow + \infty} \frac{\sum_{\theta \in \Theta} \lambda^*_{\theta} \log q_{0, \theta}^{\gamma}}{n}=0$. Moreover
\[\lim_{k \rightarrow + \infty} \frac{1}{n_k}  \sum_{m=0}^{n_k -1}  \sum_{x \in X} p^*(x) \log F_{\theta}^{\gamma} \left(q^{\gamma}_{m},x \right) = \int_{\mathbf{S}} \sum_{x \in X} p^*(x) \log F_{\theta}^{\gamma}(q,x)\pi(dq) = r^{\gamma}_{\theta}(\pi^{\gamma}).\] 
As a result,  we obtain that $\sum_{\theta \in \Theta} \lambda^*_{\theta}  r^{\gamma}_{\theta}(\pi^{\gamma}) \leq 0$.
\end{proof}

\begin{remark}
It is worth noting that, even though $\pi \in \mathcal{P}(\mathbf{S})$ is such that $\pi(\mathbf{S} \setminus C_{k^*})>0$, it does not imply that there exists $\gamma_0 >0$ such that, for all $\gamma < \gamma_0$, 
\[\sum_{\theta \in \Theta} \lambda^*_{\theta} r_{\theta}^{\gamma}(\pi) > 0.\] 
Indeed, regardless how small $\gamma$ is, $Supp (\pi)$ might contain points $q$ such that $\sum_{\theta \in \Theta} \lambda^*_{\theta} r_{\theta}^{\gamma}(q) <0$.

\noindent Importantly,  Lemmas \ref{lem:pos_rate} and  \ref{lem:rate_inv} do not allow us to  conclude  that there exists $\gamma_0 >0$ such that, for any $0< \gamma < \gamma_0$, we have $\pi^{\gamma}(C_{k^*}) = 1$.  As a matter of fact, this statement cannot hold, because an element of $\mathrm{Inv}(P_{\gamma})$ cannot be supported by $C_{k^*}$. To see this, suppose for the sake of simplicity that  $C_{k^*} = \{\lambda^*\}$. Then the distribution $\delta_{\{\lambda^*\}}$ is not invariant for $P_{\gamma}$.
\end{remark}

\begin{definition}\label{def:limiting measure}
The measure $\pi^* \in \Delta(\mathbf{S})$ is a limiting measure for the updating rule (\ref{C-Bay}) if there exists a sequence $(\gamma_{\ell},\pi_{\ell})_{\ell \in \mathbb{N}}$ such that:
\begin{enumerate}
\item $ \gamma_{\ell} \downarrow 0$,
\item $\pi_{\ell}$ is a weak* limit point of  $(\Pi_n^{\gamma_{\ell}})_n$, for all $\ell$,
\item $\lim_{\ell} \pi_{\ell} = \pi^*$ for the weak* topology. 
\end{enumerate}
\end{definition}

In words, the measure $\pi^*$ is a limiting measure if we can find a sequence of updating weights $(\gamma_{\ell})_{\ell}$ and a corresponding sequence $(\pi_{\ell})_{\ell}$ of limit points of $(\Pi_n^{\gamma_{\ell}})_n$, which converges to $\pi^*$ for the weak* topology.  We then prove the following:

\textbf{Theorem \ref{th:constant-step}.}
Let $\pi^*$ be a limiting measure for the updating rule (\ref{C-Bay}). The support of $\pi^*$ is included in $C_{k^*}$.

\begin{proof} [ of Theorem \ref{th:constant-step}]  Let $\pi^*$ be  a limiting measure and $(\gamma_{\ell},\pi_{\ell})_{\ell \in \mathbb{N}}$ the corresponding sequence of weights and invariant measures mentionned in Definition \ref{def:limiting measure}. 

Pick any $\lambda^* \in C_{k^*}$.  Given $\gamma >0$, $q \in \mathbf{S}$, define
\[R^*(\gamma,q) := \sum_{\theta \in \Theta} \lambda^*_{\theta} r_{\theta}^{\gamma}(q).\]
Note that $R^*(0,q) = 0$ and, by first point of Lemma \ref{lem:pos_rate}, 
\[R^*(\gamma,q) = R^*(0,q) + \gamma \frac{\partial R^*}{\partial \gamma}(0,q) + o(\gamma),  \text{ for all } \;  q \in \mathbf{S}.\]

\noindent Let $O^*$ be an open subset of $\mathbf{S}$ such that 
\[\bigcup_{k \neq k^*} C_k \subseteq O, \;  C_{k^*} \subseteq O^*, \]
where $O:= \mathbf{S} \setminus \mathrm{Cl}(O^*)$. (Recall that the components $(C_k)_{k=1,\dots,k^*}$ are compact, connected and disjoint.)

Fix $\varepsilon >0$. Choose $O^*$ small enough so that $\left|\sup_{q \in \mathrm{Cl(O^*)}}  \frac{\partial R^*}{\partial \gamma}(0,q) \right| \leq \frac{\varepsilon}{2}$. (Recall that the derivative of $R^*$ is zero for all $q \in C_{k^*}$ and the derivative is continuous. Hence, such a choice is always possible.) This implies that $|R^*(\gamma,q)| \leq \epsilon \gamma$ for all $q \in \mathrm{Cl}(O^*)$ and  small enough $\gamma$. 
\vspace{.1cm}

\noindent Now define $c:= \frac{1}{2} \inf_{q \in O} \frac{\partial R^*}{\partial \gamma}(0,q) >0$.  We then have $R^*(\gamma,q) > c\gamma $ for all $q \in O$ and  small enough $\gamma$.

\noindent 
Finally, for large enough $\ell$,
\begin{eqnarray*}
R^*(\gamma_{\ell},\pi_{\ell}) &=& \int_{q \in \mathrm{Cl}(O^*)} R^*(\gamma_{\ell},q)\pi_{\ell}(dq) +  \int_{q \in O} R^*(\gamma_{\ell},q)\pi_{\ell}(dq)  \\
& > & - \varepsilon \gamma_{\ell} \pi_{\ell}(\mathrm{Cl}(O^*))  + c \gamma_{\ell}  \pi_{\ell}\left(O\right) 
\end{eqnarray*}
By Lemma \ref{lem:rate_inv}, $R^*(\gamma_{\ell},\pi_{\ell}) \leq 0$, meaning that $\pi_{\ell}(O) < \frac{\varepsilon}{c} \pi_{\ell}(\mathrm{Cl}(O^*)) \leq \frac{\varepsilon}{c}$ for large enough $\ell$. Since $O$ is open,
\[\pi^*(O) \leq \liminf_{\ell \rightarrow + \infty} \pi_{\ell}(O) < \frac{\varepsilon}{c}.\]
Since this inequality holds for any $\varepsilon >0$, we have $\pi^*(O) = 0$. In particular, this proves that
$\pi^*\left(N^{\delta}(C_{k^*}) \right) = 1$, where $N^{\delta}(C_{k^*})$ is the $\delta$-neighborhood of the set $C_{k^*}$.
This concludes the proof.
\end{proof}

\begin{remark}  For any $\delta >0$, we have $\liminf_{\ell \rightarrow + \infty}\pi_{\ell}(N^{\delta} \left(C_{k^*}) \right) = 1$.
\end{remark}

\bibliographystyle{ecta}
\bibliography{biblio-non-bayesian-learning}

\end{document}